\DeclarePairedDelimiter\floor{\lfloor}{\rfloor}
\newcommand{\indep}{\rotatebox[origin=c]{90}{$\models$}}
\theoremstyle{definition}
\newtheorem{theorem}{Theorem}[section]
\newtheorem{lemma}{Lemma}[section]
\newtheorem{proposition}{Proposition}[section]
\newtheorem{assumption}{Assumption}
\newtheorem{remark}{Remark}
\title{Sequential change-point detection for compositional time series with exogenous variables}
\author{%
  Yajun Liu \footnote{Amazon.com, Seattle, WA, USA. Email: yajunliustacy@gmail.com}%
  \and Beth Andrews \footnote{Department of Statistics and Data Science, Northwestern University, Evanston, IL, USA. Email: bandrews@northwestern.edu}%
  }
\date{}
\begin{document}

\maketitle

\begin{abstract}
    Sequential change-point detection for time series enables us to sequentially check the hypothesis that the model still holds as more and more data are observed. It is widely used in data monitoring in practice. In this work, we consider sequential change-point detection for compositional time series, time series in which the observations are proportions. For fitting compositional time series, we propose a generalized Beta AR(1) model, which can incorporate exogenous variables upon which the time series observations are dependent.  We show the compositional time series are strictly stationary and geometrically ergodic and consider maximum likelihood estimation for model parameters. We show the partial MLEs are consistent and asymptotically normal and propose a parametric sequential change-point detection method for the compositional time series model. The change-point detection method is illustrated using a time series of Covid-19 positivity rates.
\end{abstract}

\section{Introduction}
\label{sec: introduction}
Change-point detection methods have been well-researched and have drawn researchers' attentions.
In this work, we consider sequential change-point detection for a generalized Beta AR model for describing univariate time series of proportions. In the literature, time series recording proportions of a whole are commonly referred to as compositional time series. In practice, time series recording fluctuating proportions of a whole are commonly seen and attract interest in a lot of fields, like economics, sociology, etc. For example, variations in the proportion of male/female babies born, a univariate compositional time series, during wartime v.s. peacetime, have drawn researchers attention and have been widely investigated, e.g. \cite{chao2019systematic}. In economics, the proportion of stocks with increasing prices, a compositional time series, can reflect the overall economic situation. In this paper, we model univariate compositional time series from the idea that for each time point, the output follows a Beta distribution of which the mean is determined by past information of the series through a generalized linear model. It is named as generalized Beta AR($p$) model. Given the prevalence of compositional time series, it is necessary to develop associated sequential change-point detection methods to detect possible parameter changes in time. 

In Section \ref{sec: Literature Review}, we go through some models for compositional data, especially compositional time series in the literature. Existing change-point detection methods for compositional time series will be discussed in this section. The constraints of existing models are discussed in Section \ref{sec: Literature Review}, which are the motivations for developing a new model. In Section \ref{sec: Model definition}, we first introduce the generalized Beta AR($p$) model, and we prove the strict stationarity and geometric ergodicity of the process when $p=1$. The corresponding parameter estimation will be discussed in Section \ref{sec: Model definition}. In Section \ref{sec: monitoring scheme and null}, we first propose the sequential change-point detection method and the test statistic. Later, asymptotic distribution of the test statistic under the no change point null hypothesis is given and proved in Section \ref{sec: monitoring scheme and null} as well. Power analysis of the test statistic is discussed in Section \ref{sec: power analysis}. In Section \ref{sec: simulation}, we conduct several simulation studies to support the theoretical results derived in the previous sections. We first verify the consistency and asymptotic normality of the estimated parameters. Also, we design two sets of experiment to investigate the behaviors of the test statistic under the null hypothesis and the alternative in Section \ref{sec: simulation}. In Section \ref{sec: real life data}, we discuss two applications of the model and the methodology, respectively. First, we fit generalized Beta AR(1) models to a percentage change of the consumer price index for alcoholic beverages time series and discuss the forecasting ability of the models. Then, we fit a generalized Beta AR(1) model to Covid-19 positivity rates in Arizona and conduct the change-point detection method. Eventually, we successfully detect the change indicating the outbreak of Omicron variant. 

\section{Literature Review}
\label{sec: Literature Review}
Considering the observations in compositional time series are bounded in the unit interval $[0,1]$ (or $(0,1)$), and for multivariate series, the sum-to-one constraint should be satisfied, there are two main directions for modeling compositional time series. 

By applying appropriate transformation schemes, the outputs between 0 and 1 can be mapped to other spaces. Many studies for compositional data are done following this idea. In \cite{atchison1980logistic}, the authors investigate the properties of logistic-normal distributions and use them to analyze compositional data. \cite{aitchison1982statistical} is considered as the pioneering research for compositional data as it summarizes several common transformations for modeling compositional data. In \cite{brunsdon1998time}, the authors apply additive logratio (alr) transformation to build a vector ARMA compositional model; forecasting time series following the model is also developed. Then, based on alr, a state space model for compositional time series is developed in \cite{silva2001modelling}. For multivariate data, relying on alr, a regression model incorporating covariates and VARMA errors is developed to fit the transformed data. Other transformations like Box-Cox transformation, centered logratio (clr) transformation, are also developed. Literature on other transformation schemes are thoroughly introduced in the review paper \cite{larrosa2017compositional}. By mapping the outputs in the unit interval to other spaces, like the real line, we can use existing methods for analyzing time series defined on those spaces to analyze compositional time series. However, without relying on distributions specifically for compositional data, we need to reverse the procedure back to the unit interval for further analysis like forecasting. Meanwhile, correlation structure of compositional data is hard to reveal, as the correlation structure is usually measured based on the transformed data, e.g. \cite{silva2001modelling}.

The other direction is based on distributions naturally defined on $[0,1]$. For univariate series, autoregressive models for which the marginal distributions of the random variables are Beta distributions are built in \cite{mckenzie1985autoregressive}. The marginal distribution of each output follows the same Beta distribution, and the conditional distribution of the current output is a Beta random variable with parameters related to the previous output and an independent Beta innovation directly. The generalized ARMA (GARMA) model is an option to model Beta random variable with time-varying conditional distribution. That is, the time series follows a specific distribution with parameters determined by past information and exogenous variables through a link function. GARMA models of which the conditional distribution of each output based on past information belongs to the same exponential family are proposed in \cite{benjamin2003generalized} (see page 215 (1) and (2)). Following the same idea, in \cite{ferrari2004beta}, the authors build a Beta regression model for compositional time series (see page 801 (4) and (5)). That is, the compositional time series $\{X_t\}$ follows
\begin{equation*}
    f(X_t;\mu_t, \tau) = \frac{1}{B(\tau \mu_t, \tau (1-\mu_t))}X_t^{\tau \mu_t-1}(1-X_t)^{\tau(1-\mu_t)-1}, 0<X_t<1,
\end{equation*}
and the conditional mean $\mu_t$ is determined by the known covariates $\bm{W}_t = (W_{t1}, W_{t2}, ..., W_{tk})$ through a strictly monotonic and twice differentiable link function $g(.)$,
\begin{equation*}
    g(\mu_t) = \sum_{i=1}^k W_{ti}\phi_i := \bm{W}_t^T \bm{\phi}. 
\end{equation*}
The properties of maximum likelihood estimates of $\phi_i$ and $\tau$ and model diagnostics are investigated in \cite{ferrari2004beta}. Based on this work, for compositional time series for which the mean depends on both past observations and exogenous variables, $\beta$ARMA is developed in \cite{rocha2009beta} (page 531). The conditional mean is determined through
\begin{equation*}
    g(\mu_t) = \alpha+\bm{W}_t^T \bm{\phi}+\sum_{i=1}^p \varphi_i\{g(X_{t-i})-\bm{W}_{t-i}^T \bm{\phi}\} + \sum_{j=1}^q \theta_j r_{t-j},
\end{equation*}
where $r_t = g(X_t)-g(\mu_t)$ is the error term defined based on the link function.
Parameter estimation and hypothesis testing for the parameters are also investigated in this work. Goodness-of-fit tests for $\beta$ARMA models are proposed in \cite{scher2020goodness}. However, the stochastic properties, e.g. stationarity and ergodicity, of $\beta$ARMA models are not covered in \cite{scher2020goodness}.

Another compositional time series model is proposed in \cite{zheng2015generalized}. In $\beta$ARMA models, as the link function has the format of an ARMA model, the error term $r_t = g(X_t)-g(\mu_t)$ is expected to be an innovation with expectation 0, which can be used in proving properties of stochastic processes. However, the authors of \cite{zheng2015generalized} point out that in $\beta$ARMA models, the conditional expectation of the error term $r_t$ given the previous information set is not zero according to \cite{zheng2015generalized}. That is, the error sequence is not a martingale difference sequence (MDS). When the error sequence is a MDS, verifiable stationarity and ergodicity conditions can be obtained (\cite{zheng2015generalized}). Therefore, in order to ensure the error sequence is a MDS, in \cite{zheng2015generalized}, the authors extend the original GARMA to martingalized GARMA (M-GARMA) model by using nonidentical transformation function, i.e. $h(.)$, and link function, i.e. $g(.)$, for $X_t$ so that the conditional expectation of the error term, i.e. $h(X_t)-g(\mu_t)$, given the past is zero. They apply it to compositional time series (details will be introduced in Section \ref{sec: Model definition}). The model is referred as Beta-M-GARMA. In \cite{zheng2015generalized}, when the transformation function $h(.)$ is chosen as the logit function, they find the link function $g_{\tau}$ so that the error term, $\text{logit}(X_t)-g_{\tau}(\mu_t)$, is MDS. The stationarity and geometric ergodicity of a time series following this model are proved (Corollary 2) in \cite{zheng2015generalized}. Also, in \cite{zheng2015generalized}, parameter estimation and model selection are discussed as well (Theorem 4). However, the complexity of $g_{\tau}$ causes the infeasibility of getting the closed-form likelihood function, which can be utilized to conduct sequential change-point detection. Following the same idea as \cite{zheng2015generalized}, for multivariate compositional time series, Dirichlet ARMA models are developed in \cite{zheng2017dirichlet}. For compositional time series with long-range dependence, the authors extend the $\beta$ARMA and propose $\beta$ARFIMA (autoregressive fractionally
integrated moving average) model in \cite{pumi2019beta}, equations (4) and (6). For parameter estimation, in \cite{pumi2019beta}, the consistency and asymptotic normality of partial MLE for the parameters are proved (Section 4). Moreover, in \cite{gorgi2021beta}, the authors show the stationarity and ergodicity of a general class of Beta time series subject to a contraction condition. However, the contraction condition is not necessarily guaranteed by the Beta models based on logit link (like \cite{rocha2009beta} and \cite{zheng2015generalized}). The authors discuss Thereshold and STAR Beta AR models, which are the examples of the Beta time series under the contraction condition, in the paper. In the two models, the conditional mean is determined by the previous observation and exogenous variables through a linear function. Thus, in order to guarantee the conditional mean is bounded by 0 and 1, various conservative conditions will be applied to the model depending on how the exogenous variables affect the mean. Considering that, we will mainly focus on Beta time series models based on generalized linear model (GLM). There are many areas with promising applications of Beta regression models, e.g. forecasting mortality rates due to fatal work-related accidents, see \cite{melchior2021forecasting}, nowcasting COVID-19 trends, see \cite{chakraborty2020nowcasting}, and so on.

To the best of our knowledge, the only research that has been done so far on change point detection for compositional time series is \cite{prabuchandran2019change}. In this work, the authors propose a segmentation method locating the optimal point of a series so that the sum of the log-likelihood functions before and after the location reaches the largest. The location is considered as the "change point". Then the change point hypothesis is tested by using likelihood ratio test (LRT). Considering the distinction between retrospective change point detection and sequential change-point detection, we categorize the method as a retrospective method, since the test is based on the whole data set. However, as mentioned in Section \ref{sec: introduction}, sequential change-point detection for compositional time series can be crucial in practice. 

From the review above, we can see that so far there is no model directly built on [0,1] for compositional time series that is proven to be stationary and ergodic, and, in the meantime, has closed-form likelihood function, which is required to develop parametric change-point detection. The related change-point detection research is rare. Therefore, in this paper, we first propose the generalized Beta AR model, that will be shown to have closed-form likelihood function and to enjoy the aforementioned stochastic properties.  Then, we develop a sequential change-point detection method for compositional time series following the model. 

\section{Generalized Beta AR Model based on GLM}
\label{sec: Model definition}
In Section \ref{sec: Literature Review}, we briefly introduce the two existing models based on GLM for compositional time series (\cite{rocha2009beta} and \cite{zheng2015generalized}) and point out the constraints of the two models. Before introducing the generalized Beta AR model, we first compare the two Beta ARMA models developed in \cite{rocha2009beta} and \cite{zheng2015generalized}. We will introduce the two models and analyze the pros and cons of them. Based on the pros and cons, we develop the generalized Beta AR model. \\
\textbf{$\beta$ARMA from \cite{rocha2009beta}} \\
Let $\{X_t\}$ be compositional random variables and denote $\{\bm{W}_t\}$ as a covariate sequence. Given the previous information set $\mathcal{C}_{t} = \sigma(\bm{W}_t,X_{t-1},\bm{W}_{t-1},...,\bm{W}_1,X_0)$, $X_t|\mathcal{C}_{t} \sim \text{Beta}(\tau \mu_t, \tau (1-\mu_t))$. That is,
\begin{equation}
\label{eq: conditional density function of X_t given C_t}
    f(X_t|\mathcal{C}_{t}) = \frac{1}{B(\tau \mu_t, \tau (1-\mu_t))}X_t^{\tau \mu_t-1}(1-X_t)^{\tau(1-\mu_t)-1}, 0<X_t<1,
\end{equation}
where $\mathbbm{E}(X_t|\mathcal{C}_t) = \mu_t$ and $Var(X_t|\mathcal{C}_t) = \mu_t(1-\mu_t)/(\tau+1).$
The conditional expectation $\mu_t$ is determined through
\begin{equation}
\label{eq: link function rocha 2009}
    g(\mu_t) = \varphi_0+ \bm{W}_t^T \bm{\phi}+ \sum_{i=1}^p \varphi_i \{g(X_{t-i})-\bm{W}_{t-i}^T \bm{\phi}\}+\sum_{j=1}^q \theta_j r_{t-j},
\end{equation}
where $r_{t} = g(X_t)-g(\mu_t).$
This model incorporates past observations of the series $\{X_t\}$ and the corresponding covariates $\{\bm{W}_t\}$. The conditional expectation of current time, $\mu_t$, is determined by past information through a twice differentiable strictly monotonic link function $g: (0,1) \rightarrow \mathbbm{R}$. Some common choices of $g$ are logit, probit, and complementary log-log link functions. In \cite{rocha2009beta}, the authors choose the logit link and fit $\beta$ARMA models to the rate of hidden unemployment in Brazil. From (\ref{eq: link function rocha 2009}) we can see that it uses a link function to be the same as the transformation function for $X_t$ (referred as $x$-link function). However, the conditional expectation of the error term, $\mathbbm{E}(g(X_t)-g(\mu_t)|\mathcal{C}_t)$, is not zero, which makes stationarity and geometric ergodicity of the process difficult to establish, according to \cite{zheng2015generalized}. Considering that, the following model is built. \\

\textbf{Logit-Beta-M-ARMA from \cite{zheng2015generalized}} \\
\begin{equation*}
X_t|\mathcal{C}_{t} \sim \text{Beta}(\tau \mu_t, \tau (1-\mu_t)),  0<X_t<1,
\end{equation*}
and 
\begin{equation*}
    g_{\tau}(\mu_t) = \varphi_0+\sum_{i=1}^p \varphi_i  \text{logit}(X_{t-i})+ \sum_{j=1}^q \theta_j r_{t-j},
\end{equation*}
where $g_{\tau}(\mu_t) = \psi(\tau \mu_t)-\psi(\tau (1-\mu_t))$, of which $\psi$ is the digamma function, and $r_t = \text{logit}(X_t) - g_{\tau}(\mu_t)$. A detailed introduction to the digamma function $\psi(\cdot)$ and polygamma functions, the sequence of derivatives of $\psi$, can be found in \cite{batir2007some} (the series representations are shown in (1.1) and (1.2)). By using the link function $g_{\tau}$, the error term $\text{logit}(X_t)-g_{\tau}(\mu_t)$ is proven as a MDS in \cite{zheng2015generalized}. Relying on the fact that the error sequence is MDS, the stationarity and the geometric ergodicity are proven in \cite{zheng2015generalized}. However, because of the complexity of $g_{\tau}$, it is hard to get the closed-form $g^{-1}_{\tau}$, which will be used in the future likelihood estimation. Also, the model doesn't include exogenous variables. In order to integrate the advantages from both models, we construct the following model. 

\textbf{Generalized Beta AR($p$)}\\
From the two Beta ARMA models introduced above, we can see that the Beta distribution can be used to model compositional data and GLM is an adoptable approach to model time-varying expectation of the Beta random variables. However, as discussed before, the two models have their drawbacks. In order to develop sequential change-point detection methods, we need to derive asymptotic behavior of test statistics of the process. It is helpful if the process is strictly stationary and geometrically ergodic. But the properties are not shown for $\beta$ARMA. Also, as $X_{t-i}$ approaches to 0 or 1, $g(X_{t-i})$ approaches $-\infty$ or $\infty$. With nonzero $\varphi_i$, according to (\ref{eq: link function rocha 2009}), $g(\mu_t)$ will also approach $-\infty$ or $\infty$. It follows that $\mu_t$ will be trapped at 0 or 1, which makes $X_t$ generated from the model a degenerate random variable at 0 or 1. In practice, the unbounded $g$ would cause overflow errors when $X_t$ is close to 0 or 1.

Logit-Beta-M-ARMA has been shown strictly stationary and geometrically ergodic in \cite{zheng2015generalized}. However, the complexity of the inverse of the link function $g_{\tau}$ makes parameter estimation and challenging. In Remark 2.6 of \cite{zheng2015generalized}, the authors propose to use second-order Taylor expansion to approximate $g_{\tau}$, and then estimate $g_{\tau}^{-1}$ numerically. It would bring variance to the likelihood function, which will affect the parameter estimation accuracy. Also, it would make the following change point detection method hard to implement, since it relies on $g_{\tau}^{-1}.$ 

Motivated by the drawbacks of the previous models, we build the following generalized Beta AR($p$) model. 
\begin{equation}
\label{equation: model definition}
    X_t| \mathcal{C}_t = X_t|\mathbbm{Z}_{t-1}\sim \text{Beta}(\tau \mu_t, \tau(1-\mu_t)),0
    \leq X_t \leq 1,
\end{equation}
where
\begin{equation}
\label{eq: link function of GBETAAR}
g(\mu_t) =  \bm{\beta}^T \mathbbm{Z}_{t-1} = \varphi_0+ \sum_{i=1}^p \varphi_i \mathcal{A}(X_{t-i})+\bm{W}_{t}^T \bm{\phi},
\end{equation}
in which 
\begin{equation}
\label{eq: notations in the definition of generalized beta}
    \begin{aligned}
    & \mathcal{C}_{t} = \sigma(\bm{W}_t,X_{t-1},\bm{W}_{t-1},...,\bm{W}_1,X_0),\\
    & g(x) =\log(x/(1-x)),\\
    & \mathbbm{Z}_{t-1} = (1, \mathcal{A}(X_{t-1}),...,\mathcal{A}(X_{t-p}),\bm{W}_{t}).
    \end{aligned}
\end{equation}
The parameter vector $\bm{\beta} = (\varphi_0,\varphi_1,...,\varphi_p,\bm{\phi})^T \in \mathbbm{R}^{1+l+p}$ and the dispersion parameter $\tau>0$. In general, the link function $g$ can be any twice differentiable strictly monotonic function. Considering the complexity of getting the close-formed $g^{-1}$, which will be used in parameter estimation, we choose the logit link. Instead of mapping the unit interval to $\mathbbm{R}$, the $x$-link function $\mathcal{A}$ maps the unit interval $[0,1]$ to a bounded interval $[L,U]$. The choices of $\mathcal{A}$ rely on how the past observations affect the conditional expectation. For example, when $\mathcal{A}$ is an identity function, the regressor $\mathbbm{Z}_{t-1}$ is similar with the regressor in the Binomial AR($p$) model in \cite{Liu_2025}. If $\mathcal{A}(x) = \text{logit}(x^*)$ where $x^* = \min\{\max(c,x),1-c\}, 0<c<\frac{1}{2}$, the transformation coincides with Case 3 proposed in \cite{woodard2010stationarity}. When the threshold $c$ is chosen to be close to 0, $\mathcal{A}(x)$ is equivalent to $\text{logit}(x)$ except when $x$ approaches to 0 or 1. Therefore, it can inherit most structures and features of the $\beta$ARMA model. In the meantime, thanks to the boundness of the $x$-link function $\mathcal{A}(x)$, generalized Beta AR($p$) allows outputs to be exact 0 or 1 without causing $\mathcal{A}(x)$ to go to $-\infty$ or $\infty$. That is, it conquers the aforementioned constraint of both $\beta$ARMA model and Logit-M-Beta-ARMA model. The boundness of $\mathcal{A}(x)$ would be beneficial for the proofs of strict stationarity and geometric ergodicity of the process. The details will be shown in Sec \ref{sec: Properties of Generalized Beta AR($1$)}. 

Compared with models with a fixed $\mathcal{A}(x)$, e.g. \cite{rocha2009beta}, generalized Beta AR model allows flexible choices of $\mathcal{A}(x)$. Based on a real-life application, in Section \ref{sec: real life data}, the forecasting abilities of models with different $\mathcal{A}(x)$ will be discussed.
\subsection{Properties of Beta AR($1$)}
\label{sec: Properties of Generalized Beta AR($1$)}

For the sake of simplicity, we refer to the generalized Beta AR($p$) model as Beta AR($p$) model onward. From (\ref{eq: link function of GBETAAR}, we can see that so far $\{X_t\}$ is not a Markov chain, as it depends on not only the external variable $\bm{W}_t$ but also the $p$ previous observations $X_{t-1}, X_{t-2},...,X_{t-p}$. In order to take advantage of the Markov chain-based theorems to prove the stochastic properties (e.g. strict stationarity and geometric ergodicity) of $\{X_t\}$, we focus on the case when $p=1$ in the rest of the chapter. As mentioned before, the distribution of $\{X_t\}$ replies on both $\bm{W}_t$ and $X_{t-1}$. So we first propose some regularity conditions of $\bm{W}_t$ for the further Markovian of $\{X_t\}$.

\begin{assumption}
\label{assumption: properties of W_t}
The $l$-dimensional covariate vector process $\{\bm{W}_t\}$ is a strict stationary and geometrically ergodic Markov chain defined on a bounded state space $[a,b]^l$, where $a$ and $b$ are known. Given $\bm{W}_{t-1}$, we denote the transition density function as $h_{w_{t-1}}(\bm{W}_{t}):=h(\bm{W}_{t}|\bm{W}_{t-1} = w_{t-1}).$ $h$ is assumed to be known and continuous with respect to $\bm{W}_{t-1}.$ 
\end{assumption}
Define $\bm{Y}_{t} = (X_t,\bm{W}_t)^T \in [0,1] \times [a,b]^l:= \Omega_X \times \Omega_W:= \Omega$. Under Assumption \ref{assumption: properties of W_t}, we can see that the distribution of $\bm{Y}_t$ can be determined once $\bm{Y}_{t-1}$ is given. That is, $\bm{Y}_t$ becomes the Markovian realization of $\{X_t\}$. 

\begin{equation*}
    \bm{Y}_t | \bm{Y}_{t-1} = (X_t|\bm{W}_t, X_{t-1}) \times (\bm{W}_t|\bm{W}_{t-1}).
\end{equation*}

We will investigate the Markov chain properties of $\{\bm{Y}_{t}\}$ in this subsection. 

We first establish the $\psi$-irreducibility and the aperiodicity of $\{\bm{Y}_{t}\}$. $\psi$-irreducibility and aperiodicity of a process are fundamental for proving the stationarity and geometric ergodicity of the process. $\psi$-irreducibility is used to describe that any subset of a chain is reachable eventually regardless of the starting point, if the maximal irreducibility measure $\psi$ on the subset is not zero. Aperiodicity depicts a cyclical behavior of the process. In a nutshell, the process can reach any state of the state space at $t+1$ regardless of the value of the state at $t$. Then thanks to the continuities of the Beta distribution and the conditional density function of the covariate vector $h_w(\cdot)$, we can first prove $\{\bm{Y}_t\}$ is a weak Feller chain. As a weak Feller chain, the state space of $\{\bm{Y}_t\}$, $\Omega$, is a small set. By taking advantage of these properties, the strict stationarity and geometric ergodicity can be proven. 
\begin{lemma}
\label{lemma:irreducibility&aperiodicity}
Under Assumption \ref{assumption: properties of W_t}, the Markovian realization process $\{\bm{Y}_t\}$ is a $\psi$-irreducible and aperiodic Markov chain. 
\end{lemma}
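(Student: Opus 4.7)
The plan is to exploit the product structure of the one-step transition kernel of $\{\bm{Y}_t\}$. Given $\bm{Y}_{t-1}=(x,w)$, the joint conditional density of $\bm{Y}_t$ factors as the covariate transition density $h(w'\mid w)$ times the conditional Beta density of $X_t$ with mean $\mu_t=g^{-1}(\varphi_0+\varphi_1\mathcal{A}(x)+{w'}^T\bm{\phi})$. Because $\mathcal{A}$ takes values in the bounded interval $[L,U]$ and $w'\in[a,b]^l$, the linear predictor is uniformly bounded, so $\mu_t$ lies in a fixed compact subinterval of $(0,1)$ and the Beta density $f_{\text{Beta}}(\cdot;\tau\mu_t,\tau(1-\mu_t))$ is strictly positive on $(0,1)$---and bounded below by some $\delta>0$ on each subinterval $[c,1-c]$---uniformly over admissible $(x,w')$. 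This \emph{full-support} property of the $X$-coordinate is the key lever throughout.

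For $\psi$-irreducibility, take as candidate irreducibility measure $\phi:=\lambda_{[0,1]}\otimes\psi_W$, where $\lambda_{[0,1]}$ is Lebesgue measure on $[0,1]$ and $\psi_W$ is the maximal irreducibility measure of $\{\bm{W}_t\}$ furnished by Assumption \ref{assumption: properties of W_t}. Fix $\bm{y}_0=(x_0,w_0)$ and a Borel set $A\subset\Omega$ with $\phi(A)>0$. By Fubini, the section set $B:=\{w:\lambda_{[0,1]}(A_w)>0\}$ has $\psi_W(B)>0$, and $\psi_W$-irreducibility of $\{\bm{W}_t\}$ delivers some $n\geq 1$ with $P_W^n(w_0,B)>0$. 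Conditionally on the covariate path $(W_1,\ldots,W_n)$ and $X_0=x_0$, the joint density of $(X_1,\ldots,X_n)$ is $\prod_{i=1}^n f_{\text{Beta}}(x_i;\tau\mu_i,\tau(1-\mu_i))$ with each factor positive on $(0,1)$, so marginalising over $x_1,\ldots,x_{n-1}$ yields a conditional density for $X_n$ strictly positive on $(0,1)$. Hence $P(X_n\in A_{W_n}\mid X_0,W_0,\ldots,W_n)>0$ on the event $\{W_n\in B\}$, and integrating against the law of the $\bm{W}$-path establishes $P^n(\bm{y}_0,A)>0$.

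Aperiodicity is obtained by promoting the same factorisation to a small-set minorization, made clean by the crucial fact that $\{\bm{W}_t\}$ is exogenous (its transitions do not depend on $\{X_t\}$). Since $\{\bm{W}_t\}$ is aperiodic and $\psi_W$-irreducible, standard small-set theory produces $C_W\subset[a,b]^l$ with minorizations $P_W^m(w,\cdot)\geq\delta_W\nu_W(\cdot)$ for $w\in C_W$ holding at orders $m$ whose gcd equals $1$. For each such $m$, using the uniform Beta lower bound $\prod_{i=1}^m f_{\text{Beta}}(x_i;\cdot)\geq\delta^m\mathbbm{1}_{[c,1-c]^m}(x_1,\ldots,x_m)$ and integrating out the intermediate $X$'s and $W$'s---separating the exogenous $\bm{W}$-marginal from the Beta factors---gives $P^m(\bm{y},\cdot)\geq\delta^m\delta_W(1-2c)^{m-1}\bigl(\lambda_{[c,1-c]}\otimes\nu_W\bigr)(\cdot)$ for $\bm{y}\in[0,1]\times C_W$, i.e., a small-set minorization of $\{\bm{Y}_t\}$ at the same orders $m$. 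Their gcd is still $1$, so the period of $\{\bm{Y}_t\}$ equals $1$. The main technical hurdle is the bookkeeping in this transfer-of-minorization step; it is made possible precisely by the boundedness of the $x$-link $\mathcal{A}$, which keeps the Beta shape parameters in a compact subset of $(0,\infty)$ and supplies the uniform density lower bounds absent from $\beta$ARMA-type recursions.
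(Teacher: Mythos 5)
Your proposal is correct, and it takes a genuinely different and in fact more careful route than the paper. The paper's proof is a two-line appeal to Woodard et al.: it asserts that the one-step transition kernel of $\{\bm{Y}_t\}$ is positive everywhere on $\Omega$, from which $\varphi$-irreducibility and aperiodicity are ``immediate,'' and then invokes Meyn--Tweedie (Theorem 4.0.1, Proposition 4.2.2) to upgrade to a maximal irreducibility measure $\psi$. That argument implicitly requires the covariate transition density $h(\cdot\mid w)$ to be strictly positive on all of $[a,b]^l$ for every $w$ --- a property that Assumption 1 does not actually state (it only guarantees stationarity, geometric ergodicity, and continuity of $h$). Your argument avoids this gap: for irreducibility you only use the $\psi_W$-irreducibility of the exogenous chain together with the full support of the conditional Beta density (which, as you note, is uniform thanks to the boundedness of $\mathcal{A}$ and of $[a,b]^l$), building the irreducibility measure $\lambda_{[0,1]}\otimes\psi_W$ via a Fubini section argument; for aperiodicity you transfer a small-set minorization from the $\bm{W}$-chain to the joint chain at a set of orders with gcd $1$. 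What the paper's approach buys is brevity under a stronger (unstated) positivity hypothesis on $h$; what yours buys is that the conclusion holds under Assumption 1 as literally written, at the cost of the minorization bookkeeping. Two cosmetic points: your candidate measure is named $\phi$, which collides with the paper's regression parameter $\bm{\phi}$ and should be renamed; and for the aperiodicity step you should note explicitly that the common minorizing measure $\lambda_{[c,1-c]}\otimes\nu_W$ charges the small set $[0,1]\times C_W$ (i.e.\ $\nu_W(C_W)>0$), which is what lets the gcd-$1$ family of minorization orders force the period to be $1$.
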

\begin{proof}
As shown in Proof of Theorem 3 in \cite{woodard2010stationarity}, aperiodicity and $\varphi$-irreducibility are immediate since the transition kernel is positive on $\Omega$. Then, by Theorem 4.0.1 and Prop 4.2.2 in \cite{meyn1993markov}, there exists an essentially unique maximal irreducibility measure $\psi$ on $\mathcal{B}(\Omega)$ that the process is $\psi$-irreducible.
\end{proof}
\begin{lemma}
\label{lemma:weak Feller}
$\{\bm{Y}_t\}$ is a weak Feller chain and the state space $\Omega_X \times \Omega_W$ is a small set. 
\end{lemma}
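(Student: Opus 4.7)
The plan is to handle both claims by writing down the transition density of $\{\bm{Y}_t\}$ explicitly and then exploiting the compactness of $\Omega$. Under Assumption \ref{assumption: properties of W_t} and equations (\ref{equation: model definition})--(\ref{eq: link function of GBETAAR}), $\{\bm{Y}_t\}$ admits a transition density with respect to Lebesgue measure on $\Omega$,
\begin{equation*}
    p\bigl((x,w),(x',w')\bigr) \;=\; f_{\mathrm{Beta}}\!\bigl(x';\, \tau\mu',\,\tau(1-\mu')\bigr)\, h_{w}(w'),
\end{equation*}
where $\mu' = g^{-1}\bigl(\varphi_0 + \varphi_1 \mathcal{A}(x) + w'^{T}\bm{\phi}\bigr)$. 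The first observation I would record is that, because $\mathcal{A}$ takes values in $[L,U]$ and $w' \in [a,b]^{l}$, the argument of $g^{-1}$ lies in a fixed compact interval, and with $g$ the logit link, $\mu'$ is confined to a compact subinterval $[\mu_{\min},\mu_{\max}] \subset (0,1)$. This uniform confinement of the Beta mean is what makes everything that follows go through.

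For the weak Feller property I would invoke Scheff\'e's theorem. The map $y \mapsto \mu'$ is continuous on $\Omega$ (through $\mathcal{A}$ and $g^{-1}$), the Beta density is continuous in its parameters at each fixed $x' \in (0,1)$, and $h_w(w')$ is continuous in $w$ by Assumption \ref{assumption: properties of W_t}. Consequently, for any sequence $y_n \to y$ in $\Omega$, $p(y_n,y') \to p(y,y')$ for Lebesgue-almost every $y' \in \Omega$. Since each $p(y_n,\cdot)$ and $p(y,\cdot)$ is a probability density, Scheff\'e's theorem upgrades this pointwise convergence to total-variation convergence, and hence to weak convergence of $P(y_n,\cdot)$ to $P(y,\cdot)$. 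This is exactly the statement that $Pf \in C_b(\Omega)$ for every $f \in C_b(\Omega)$, i.e.\ that $\{\bm{Y}_t\}$ is weak Feller.

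For the small-set claim I would show the stronger statement that $\Omega$ itself is $1$-small. Fix $\epsilon \in (0,1/2)$ and set $K := [\epsilon,1-\epsilon] \times [a,b]^{l} \subset \Omega$. On $\Omega \times K$ the transition density $p(y,y')$ is jointly continuous (using again the continuity of $\mathcal{A}$, $g^{-1}$, the Beta density away from $\{0,1\}$, and $h$) and, by the positivity of the transition kernel already invoked in the proof of Lemma \ref{lemma:irreducibility&aperiodicity}, strictly positive; compactness of $\Omega \times K$ then produces a uniform lower bound $\delta = \delta(\epsilon) > 0$. Setting $\nu(A) := \delta\,\lambda(A \cap K)$, where $\lambda$ is Lebesgue measure on $\Omega$, yields
\begin{equation*}
    P(y,A) \;\geq\; \int_{A \cap K} p(y,y')\, dy' \;\geq\; \delta\,\lambda(A \cap K) \;=\; \nu(A)
\end{equation*}
for every $y \in \Omega$ and $A \in \mathcal{B}(\Omega)$, so $\Omega$ is small with minorization measure $\nu$.

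The main obstacle is more a matter of bookkeeping than depth: the Beta density can blow up at the endpoints of $[0,1]$ when $\tau\mu_{\min} < 1$ or $\tau(1-\mu_{\max}) < 1$, so $p$ is generally not bounded on $\Omega \times \Omega$. The plan sidesteps this by using only a.e.\ convergence in the Scheff\'e step (Scheff\'e requires no domination) and by restricting $x'$ to $[\epsilon,1-\epsilon]$ in the small-set step, so that only a uniform lower bound on $p$ over the compact set $\Omega \times K$ is ever needed.
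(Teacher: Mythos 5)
Your proposal is correct in substance but takes a genuinely different route from the paper's. The paper proves the weak Feller property by observing that $P(y,O)$ factors into terms continuous in $x$ and in $w$ separately (hence $P(\cdot,O)$ is continuous, giving lower semicontinuity), and then gets the small-set claim abstractly: by Theorem 6.2.8 of Meyn and Tweedie, compact sets of a $\psi$-irreducible weak Feller chain are petite, and by Theorem 5.5.7 petite sets are small for aperiodic $\psi$-irreducible chains, invoking Lemma \ref{lemma:irreducibility&aperiodicity}. You instead work directly with the explicit transition density: the Scheff\'e argument upgrades a.e.\ pointwise convergence of densities to total-variation convergence, which in fact yields the \emph{strong} Feller property (more than needed), and you establish that $\Omega$ is $1$-small by an explicit minorization over $K=[\epsilon,1-\epsilon]\times[a,b]^l$, exploiting the key observation that boundedness of $\mathcal{A}$ and of $\bm{W}_t$ confines $\mu'$ to a compact subinterval of $(0,1)$. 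Your route buys an explicit minorization measure and a quantitative one-step small-set bound, at the cost of more bookkeeping; the paper's route is shorter but leans entirely on the cited Markov-chain machinery. One caveat in your small-set step: you invoke joint continuity and strict positivity of $h_w(w')$ on $\Omega\times K$, whereas Assumption \ref{assumption: properties of W_t} only asserts continuity of $h$ in the conditioning variable $w$ (positivity is itself only implicit, exactly as in the paper's proof of Lemma \ref{lemma:irreducibility&aperiodicity}). This is repairable without joint continuity: for each fixed $w'$, $\underline{h}(w'):=\inf_{w\in[a,b]^l}h_w(w')>0$ by continuity in $w$, positivity, and compactness, and $\underline{h}$ is measurable since the infimum may be taken over a countable dense set of $w$'s; then $\nu(A):=c_\epsilon\int_{A\cap K}\underline{h}(w')\,\mathrm{d}x'\,\mathrm{d}w'$, with $c_\epsilon>0$ the uniform lower bound of the Beta density on $[\epsilon,1-\epsilon]\times[\mu_{\min},\mu_{\max}]$, gives the required minorization $P(y,A)\geq\nu(A)$ for all $y\in\Omega$. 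With that adjustment your argument goes through under exactly the paper's (stated and implicit) assumptions.
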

The proof can be found in Section \ref{app: weak feller chain}. 

Finding a small set and then taking advantage of the drift condition (details can be found in the proof of Theorem \ref{theorem:stationary&ergodic}) to prove the geometric ergodicity of a process is a typical procedure. Unlike some finite space Markov chains, the existence of a small set is not very obvious for general state space chains. Therefore, we first prove $\{\bm{Y}_t\}$ is a Feller chain. It can further guarantee that the state space $\Omega$ is a small set. 

Given that the state space is a small set, we can first take advantage of the drift condition to prove the geometric ergodicity of the process. The existence of a unconditional probability follows. Then, by using the nature of the model, we can show that the process is strictly stationary. 

\begin{theorem}
\label{theorem:stationary&ergodic}
The time series $\{\bm{Y}_t\}$ is a strictly stationary and geometrically ergodic process.
\end{theorem}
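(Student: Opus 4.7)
The plan is to combine the two preceding lemmas with the standard Foster–Lyapunov machinery of Meyn and Tweedie. By Lemma~\ref{lemma:irreducibility&aperiodicity}, $\{\bm{Y}_t\}$ is $\psi$-irreducible and aperiodic, and by Lemma~\ref{lemma:weak Feller} the entire state space $\Omega = [0,1]\times[a,b]^l$ is a small set. These two inputs are precisely the hypotheses under which a drift condition upgrades to geometric ergodicity via Theorem~15.0.1 (equivalently Theorem~16.0.2) in \cite{meyn1993markov}.

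The second step is to verify a Foster–Lyapunov drift (V4) condition. Because $\Omega$ itself is small, the drift condition is actually trivial: with test function $V \equiv 1$ and small set $C = \Omega$, one has $\mathbb{E}[V(\bm{Y}_t)\mid \bm{Y}_{t-1}] = 1 \le \tfrac{1}{2}V(\bm{Y}_{t-1}) + \tfrac{1}{2}\mathbbm{1}_{\Omega}(\bm{Y}_{t-1})$, so (V4) holds with $\lambda = b = \tfrac{1}{2}$. Equivalently, since $\Omega$ is small and the whole space is petite, Doeblin's condition is satisfied and the chain is in fact uniformly ergodic (Theorem~16.0.2 of \cite{meyn1993markov}), which is strictly stronger than geometric ergodicity. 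From this we obtain a unique invariant probability measure $\pi$ on $(\Omega, \mathcal{B}(\Omega))$ together with a geometric rate of convergence $\|P^n(\bm{y},\cdot) - \pi(\cdot)\|_{TV} \le M\rho^n$ for some $\rho < 1$ and $M < \infty$, uniformly in $\bm{y} \in \Omega$.

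For strict stationarity, I would initialize $\bm{Y}_0 \sim \pi$. The time-homogeneity of the transition kernel, together with $\pi P = \pi$, implies that every finite-dimensional distribution of $(\bm{Y}_{t_1},\dots,\bm{Y}_{t_k})$ depends only on the increments $t_j - t_1$, which is the definition of strict stationarity. Because $\{X_t\}$ is the first coordinate of $\{\bm{Y}_t\}$, strict stationarity and ergodicity transfer immediately to $\{X_t\}$ as well.

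The substantive work is already behind us in Lemmas~\ref{lemma:irreducibility&aperiodicity} and~\ref{lemma:weak Feller}; the remaining argument is almost a bookkeeping exercise. If anything plays the role of a ``hard part,'' it is being careful that the small set identified in Lemma~\ref{lemma:weak Feller} really is the whole of $\Omega$ (so that the trivial $V \equiv 1$ choice is legitimate) and, relatedly, that the boundedness of $\mathcal{A}$ and of the covariate state space $[a,b]^l$ are what is keeping the chain from drifting off — had either the regressors or the $x$-link been unbounded, a nontrivial Lyapunov function controlling the tails would be required and the geometric drift would have to be verified by explicit computation of $\mathbb{E}[V(\bm{Y}_t)\mid\bm{Y}_{t-1}]$ using the Beta conditional density.
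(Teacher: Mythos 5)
Your proposal is correct and follows essentially the same route as the paper: both invoke the $\psi$-irreducibility/aperiodicity and the smallness of the whole compact state space $\Omega$, satisfy the geometric drift condition trivially with a constant Lyapunov function (the paper takes $V \equiv b \ge 1$ with $\beta = 1$, you take $V \equiv 1$ with $\lambda = b = \tfrac12$), apply Theorem 15.0.1 of \cite{meyn1993markov} to get geometric ergodicity and an invariant law $\pi$, and then deduce strict stationarity from time-homogeneity of the kernel with $\pi$ as the marginal. Your added remark that Doeblin's condition in fact yields uniform ergodicity is a harmless strengthening, not a different method.
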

The proof can be found in Section \ref{app: stationary ergodicity}. 
\subsection{Parameter estimation}
\label{subsec: parameter estimation}
In this subsection, we will discuss parameter estimation for the true parameter vector $\bm{\beta}_0$ for the link function and the true dispersion parameter $\tau_0$ by using partial maximum likelihood estimation (PMLE). In order to prove the consistency and asymptotic normality of the PMLE, we get the closed-form partial log-likelihood function of the Beta AR(1) model. We prove the score vector is a martingale difference sequence (MDS), which will be useful in building the test statistic for change-point detection in Section \ref{sec: monitoring scheme and null}. 

\subsubsection{Partial maximum likelihood estimation (PMLE)}

\begin{assumption}
\label{assumption: compact}
The parameter vector $\bm{\eta}:=(\tau,\bm{\beta})$ is defined on a convex and compact set $ \Xi \subset (0,\infty) \times \mathbbm{R}^{l+2}$, where $l$ is the dimension of the exogenous variable vector $\bm{W}_t$, and the true parameter $\bm{\eta}_0 = (\tau_0, \bm{\beta}_0)$ is in the interior of $\Xi$. 
\end{assumption}

Assumption \ref{assumption: compact} is a general assumption for the parameter space. It, together with the boundness of $\{\bm{Y}_t\}$, guarantees the availability of PMLE. The details will be discussed later in this section. 

Denote the set with the first $m+1$ observations as $\bm{Y} = (X_0,\bm{W}_0,X_1,\bm{W}_1,...,X_m,\bm{W}_m)$. From Theorem \ref{theorem:stationary&ergodic} we know that the process is strictly stationary. With parameter $\bm{\eta}$, we denote the unconditional pdf of the first observation $\bm{Y}_0 = (X_0,\bm{W}_0)$ as $\pi_{\bm{\eta}}(\bm{Y}_0).$ Then the likelihood function based on the first $m+1$ observations is 
\begin{equation*}
    L_{\bm{\eta}}(\bm{Y}) = \pi_{\bm{\eta}}(\bm{Y}_0) \Big[\prod_{t=1}^m R_{\bm{\eta}}(\bm{Y}_t| \mathcal{F}_{t-1}) \Big],
\end{equation*}
 where $\mathcal{F}_{t-1} = \sigma( X_{t-1},\bm{W}_{t-1},...,\bm{W}_0, X_0)$ and $R_{\bm{\eta}}(\bm{Y}_t| \mathcal{F}_{t-1})$ is the conditional density function of $\bm{Y}_{t} = (X_t,\bm{W}_t)$ given past information $\mathcal{F}_{t-1}$.
 Considering that at time $t$, we first observe $\bm{W}_{t}$, and $X_t$ follows as a Beta random variable with mean determined by $X_{t-1}$ and $\bm{W}_t$, $R_{\bm{\eta}}(\bm{Y}_t|\mathcal{F}_{t-1})$ can be further decomposed as 
 \begin{equation*}
 \begin{aligned}
       R_{\bm{\eta}}(\bm{Y}_t|\mathcal{F}_{t-1}) &= h(\bm{W}_t|\mathcal{F}_{t-1}) f_{\bm{\eta}}(X_t|\mathcal{C}_{t}) \\
       & = h_{w_{t-1}}(\bm{W}_{t})f_{\bm{\eta}}(X_t|\mathcal{C}_t)
 \end{aligned}
 \end{equation*}
where 
\begin{equation}
\label{eq: beta ar conditional density function of X_t given C_t}
    f_{\bm{\eta}}(X_t|\mathcal{C}_{t}) = \frac{1}{B(\tau \mu_t, \tau (1-\mu_t))}X_t^{\tau \mu_t-1}(1-X_t)^{\tau(1-\mu_t)-1}, 0\leq X_t\leq1,
\end{equation}
with $\mu_t$ defined as (\ref{eq: link function of GBETAAR}) and $\mathcal{C}_t$ is the information set defined in (\ref{eq: notations in the definition of generalized beta}).  

In \cite{Liu_2025}, the authors take advantage of the partial likelihood function and prove the consistency and asymptotic normality of PMLE. Similarly, we can define the partial likelihood function for the generalized Beta AR(1) model $$PL_{m}(\bm{\eta}):= \prod_{t=1}^m f_{\bm{\eta}}(X_t|\mathcal{C}_{t}).$$

The PMLE, $\hat{\bm{\eta}} := (\hat{\tau},\hat{\bm{\beta}})$, can be drawn from
\begin{equation}
\label{eq: PMLE objective function}
    \underset{\bm{\eta} \in \Xi}{\arg\max} \log PL_{m}(\bm{\eta}).
\end{equation}

\subsubsection{Closed-form expressions of the partial log-likelihood function, the score vector and the Hessian matrix}
\label{subsec: Closed-form expressions of the partial log-likelihood function, the score vector and the Hessian matrix}
As discussed in Section \ref{sec: Properties of Generalized Beta AR($1$)}, one of the motivations of developing generalized Beta AR models is to have closed-form partial likelihood function and closed-form score vector. In this part, we provide the closed-form expressions of the partial log-likelihood function, the score vector and the Hessian matrix. The score vector and the Hessian matrix will be used in proving the consistency and the asymptotic normality of $\hat{\bm{\eta}}$. The score vector will be further used in building the test statistic for change-point detection.

According to (\ref{equation: model definition}), (\ref{eq: link function of GBETAAR}) and (\ref{eq: beta ar conditional density function of X_t given C_t}), the partial log-likelihood function
\begin{equation}
\label{eq: closed-form of the partial log-likelihood function}
    \begin{aligned}
        \log PL_{m}(\bm{\eta})&= \sum_{t=1}^m \log f_{\bm{\eta}}(X_t|\mathcal{C}_t) \\
        &= \sum_{t=1}^m \log f_{\bm{\eta}}(X_t|X_{t-1},\bm{W}_t) \\
        &= \sum_{t=1}^m \Big(\log\Gamma(\tau) -\log \Gamma(\tau \mu_t)-\log\Gamma(\tau(1-\mu_t)) \\
        & \quad \quad \quad +(\tau \mu_t-1) \log X_t+(\tau(1-\mu_t)-1) \log (1-X_t) \Big).
    \end{aligned}
\end{equation}
The first derivative of $\log PL_m(\bm{\eta})$ with respect to $\tau$ is
\begin{equation*}
\begin{aligned}
    \frac{\partial \log PL_{m}(\bm{\eta})}{\partial \tau} & = \sum_{t=1}^m \Big(\psi(\tau)-\mu_t \psi(\tau \mu_t)-(1-\mu_t) \psi(\tau(1-\mu_t))+\mu_t \log X_t+(1-\mu_t) \log(1-X_t) \Big)\\
    &= \sum_{t=1}^m \Big(\mu_t(X_t^*-\mu_t^*)+\log(1-X_t)-\psi(\tau(1-\mu_t))+\psi(\tau) \Big),
\end{aligned}
\end{equation*}
where $X_t^* = \log(\frac{X_t}{1-X_t})$, $\mu_t^* = \psi(\tau \mu_t)-\psi(\tau(1-\mu_t))$. The digamma function $\psi(\cdot)$ is defined as the logarithmic derivative of the gamma function.
\begin{equation*}
    \psi(x) = \frac{\text{d}\log \Gamma(x)}{\text{d}x} = \frac{\Gamma' (x)}{\Gamma(x)}.
\end{equation*}
For the $i$th element of the parameter vector $\bm{\beta}$, $\beta_i$, denote the corresponding regressor as $\mathbbm{Z}_{t,i}$, then the first derivative with respect to $\beta_i$ is
\begin{equation*}
    \begin{aligned}
\frac{\partial \log PL_{m}(\bm{\eta})}{\partial \beta_i} & = \sum_{t=1}^m \frac{\partial \log f_{\bm{\eta}}(X_t|X_{t-1},\bm{W}_t)}{\partial \mu_t} \frac{\partial \mu_t}{\partial \beta_i} \\
&= \sum_{t=1}^m \Big(-\tau \frac{\Gamma'(\tau \mu_t)}{\Gamma(\tau \mu_t)}+\tau \frac{\Gamma'(\tau(1-\mu_t))}{\Gamma(\tau(1-\mu_t))}+\tau \log X_t -\tau \log(1-X_t) \Big)\frac{\partial \mu_t}{\partial \beta_i} \\
&= \sum_{t=1}^m  \tau\Big(\log(\frac{X_t}{1-X_t})-(\psi(\tau \mu_t)-\psi(\tau(1-\mu_t)))\Big)\frac{\partial \mu_t}{\partial \beta_i}\\
&= \sum_{t=1}^m  \tau(X_t^*-\mu_t^*) \frac{\partial \mu_t}{\partial \beta_i} \\
&= \sum_{t=1}^m \tau(X_t^*-\mu_t^*) \frac{\partial g^{-1}(\mathbbm{Z}_{t-1}^T \bm{\beta})}{\partial \beta_i} \\
&= \sum_{t=1}^m \tau(X_t^*-\mu_t^*) \frac{\exp(-\mathbbm{Z}_{t-1}^T \bm{\beta})}{(1+\exp(-\mathbbm{Z}_{t-1}^T \bm{\beta}))^2} \mathbbm{Z}_{t-1,i} \\
&= \sum_{t=1}^m \tau(X_t^*-\mu_t^*)\mu_t(1-\mu_t) \mathbbm{Z}_{t-1,i}.
\end{aligned}
\end{equation*}

Denote the score vector $ \triangledown \log PL_{m}(\bm{\eta}) \in \mathbbm{R}^{l+3}$ as $S_m(\bm{\eta})$. Based on the calculation above, the closed-form expression of $S_m(\bm{\eta})$ is
\small
\begin{equation}
\label{eq: score vector}
    S_m(\bm{\eta}) = \sum_{t=1}^m \begin{bmatrix} \mu_t(X_t^*-\mu_t^*)+\log(1-X_t)-\psi(\tau(1-\mu_t))+\psi(\tau) \\
    \tau(X_t^*-\mu_t^*)\mu_t(1-\mu_t) \mathbbm{Z}_{t-1}
    \end{bmatrix} \overset{\Delta}{=} \sum_{t=1}^m G(X_t,\bm{\eta}|
    \mathcal{C}_t).
\end{equation}
\normalsize

Denote the $i$th element of the score vector $S_m(\bm{\eta})$ as $S_m(\bm{\eta})_i$, then the Hessian matrix 
\begin{equation*}
    \triangledown S_m(\bm{\eta}) = \begin{bmatrix} \frac{\partial S_m(\bm{\eta})_1}{\partial \tau} & \frac{\partial S_m(\bm{\eta})_1}{\partial \beta_1}& \cdots & \frac{\partial S_m(\bm{\eta})_1}{\partial \beta_{l+2}} \\
    \frac{\partial S_m(\bm{\eta})_2}{\partial \tau} & \frac{\partial S_m(\bm{\eta})_2}{\partial \beta_1} & \cdots & \frac{\partial S_m(\bm{\eta})_2}{\partial \beta_{l+2}} \\
    \vdots & \vdots & \ddots & \vdots \\
    \frac{\partial S_m(\bm{\eta})_{l+3}}{\partial \tau} & \frac{\partial S_m(\bm{\eta})_{l+3}}{\partial \beta_1}& \cdots & \frac{\partial S_m(\bm{\eta})_{l+3}}{\partial \beta_{l+2}}
\end{bmatrix}.
\end{equation*}
Denote the $(i,j)$th element of $\triangledown S_m(\bm{\eta})$ as $[\triangledown S_m(\bm{\eta})]_{i,j}$. 
\begin{equation*}
    \begin{aligned}
        [\triangledown S_m(\bm{\eta})]_{1,1} =& \frac{\partial S_m(\bm{\eta})_1}{\partial \tau} \\
        =& \sum_{t=1}^m \frac{\partial}{\partial \tau} \Big(\mu_t(X_t^*-\mu_t^*)+\log(1-X_t)-\psi(\tau(1-\mu_t))+\psi(\tau)\Big)\\
        =& \Big(\sum_{t=1}^m -\mu_t \frac{\partial \mu_t^*}{\partial \tau}-\frac{\partial}{\partial \tau} \psi(\tau(1-\mu_t))+\frac{\partial}{\partial \tau}\psi(\tau) \Big)\\
        =& \sum_{t=1}^m \psi'(\tau)-\mu_t^2 \psi'(\tau \mu_t)-(1-\mu_t)^2 \psi'(\tau(1-\mu_t)).
    \end{aligned}
\end{equation*}
For $i = 1,2,...,l+2$,
\begin{equation*}
    \begin{aligned}
        [\triangledown S_m(\bm{\eta})]_{i+1,1} =& \frac{\partial S_m(\bm{\eta})_{i+1}}{\partial \tau} \\
        =& \sum_{t=1}^m \frac{\partial}{\partial \tau}\tau(X_t^*-\mu_t^*)\mu_t(1-\mu_t) \mathbbm{Z}_{t-1,i} \\
        =& \sum_{t=1}^m [(X_t^*-\mu_t^*)-\tau\frac{\partial \mu_t^*}{\partial \tau}]\mu_t(1-\mu_t) \mathbbm{Z}_{t-1,i} \\
        =& \sum_{t=1}^m [(X_t^*-\mu_t^*)-\tau(\mu_t\psi'(\tau \mu_t)-(1-\mu_t)\psi'(\tau(1-\mu_t))]\mu_t(1-\mu_t) \mathbbm{Z}_{t-1,i}. \\
\end{aligned}
\end{equation*}
\begin{equation*}
    \begin{aligned}
        [\triangledown S_m(\bm{\eta})]_{1,i+1} =& \frac{\partial S_m(\bm{\eta})_{1}}{\partial \beta_i} \\
        =& \sum_{t=1}^m \frac{\partial G(X_t,\bm{\eta})_{1}}{\partial \mu_t}\frac{\partial \mu_t}{\partial \beta_{i}} \\
        =& \sum_{t=1}^m [(X_t^*-\mu_t^*)-\mu_t(\tau \psi'(\tau \mu_t)+\tau \psi'(\tau (1-\mu_t)))+\tau \psi'(\tau (1-\mu_t))]\mu_t(1-\mu_t)\mathbbm{Z}_{t-1,i} \\
        =& \sum_{t=1}^m [(X_t^*-\mu_t^*)-\tau(\mu_t\psi'(\tau \mu_t)-(1-\mu_t)\psi'(\tau(1-\mu_t))]\mu_t(1-\mu_t) \mathbbm{Z}_{t-1,i}. \\
        [\triangledown S_m(\bm{\eta})]_{1,i+1}=&[\triangledown S_m(\bm{\eta})]_{i+1,1}.\end{aligned}
\end{equation*}
For $i = 1,2,...,l+2$ and $j = 1,2,...,l+2$,
\begin{equation*}
    \begin{aligned}
        [\triangledown S_m(\bm{\eta})]_{i+1,j+1} = & \frac{\partial S_m(\bm{\eta})_{i+1}}{\partial \beta_{j+1}} \\
        =& \sum_{t=1}^m \frac{\partial G(X_t,\bm{\eta})_{i+1}}{\partial \mu_t} \frac{\partial \mu_t}{\partial \beta_{j+1}} \\
        =& \sum_{t=1}^m \tau \mathbbm{Z}_{t-1,i}[-\frac{\partial \mu_t^*}{\partial \mu_t}\mu_t(1-\mu_t)+(X_t^*-\mu_t^*)\frac{\partial \mu_t(1-\mu_t)}{\partial \mu_t})]\frac{\partial \mu_t}{\partial \beta_{j+1}}\\
        =& \sum_{t=1}^m \tau [-\tau(\psi'(\tau \mu_t)+\psi'(\tau(1-\mu_t)))\mu_t(1-\mu_t)+(1-2\mu_t)(X_t^*-\mu_t^*)]\mu_t(1-\mu_t)\mathbbm{Z}_{t-1,i}\mathbbm{Z}_{t-1,j}.
        \end{aligned}
\end{equation*}
\begin{remark}
\label{remark: continuities of the closed-forms}
$pl_m(\bm{\eta}), S_m(\bm{\eta})$ and $\triangledown S_m(\bm{\eta})$ are all continuous with respect to $\bm{\eta} \in \Xi$. 
It is easy to verify since the digamma function $\psi$ and polygamma function $\psi'$ are continuous with respect to $\mu_t$ and $\tau$ (the properties of polygamma functions can be found in \cite{batir2007some}) and $\mu_t$ is a continuous function with respect to $\bm{\eta}$. 
\end{remark}
\subsubsection{Consistency and asymptotic normality of $\hat{\bm{\eta}}$}
In this part, we will prove the consistency and asymptotic normality of the PMLE $\hat{\bm{\eta}}$. The properties of the score vector will also be discussed. 

Denote the expectation of the conditional log-likelihood function $\mathbbm{E}(\log f_{\bm{\eta}}(X_t|\mathcal{C}_t))$ as $pl(\bm{\eta})$. Denote the rescaled partial log-likelihood function $\frac{1}{m} \log PL_m(\bm{\eta})$ as $pl_m(\bm{\eta})$. In order to investigate the properties of PMLE $\hat{\bm{\eta}}$, which is the maximizer of the conditional log-likelihood function $\log PL_m(\bm{\eta})$, we first show that the true parameter $\bm{\eta}_0$ is the unique maximizer of $pl(\bm{\eta})$. The ergodicity of $pl_m(\bm{\eta})$ will be shown. They will be used in the proof of the consistency of $\hat{\bm{\eta}}$. 

\begin{lemma}
\label{lemma: unique of eta_0}
The true parameter $\bm{\eta}_0$ is the unique maximizer of $pl(\bm{\eta})$ in the parameter space $\Xi$. That is, 
\begin{equation*}
    pl(\bm{\eta}_0)>pl(\bm{\eta}), \text{ for any } \bm{\eta} \in \Xi \text{ and } \bm{\eta} \neq \bm{\eta}_0. 
\end{equation*}
\end{lemma}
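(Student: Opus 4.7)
The plan is to exploit the standard Kullback--Leibler argument adapted to the partial likelihood. Writing
\begin{equation*}
pl(\bm{\eta}_0) - pl(\bm{\eta}) = \mathbbm{E}\Bigl[\log\bigl\{f_{\bm{\eta}_0}(X_t|\mathcal{C}_t)/f_{\bm{\eta}}(X_t|\mathcal{C}_t)\bigr\}\Bigr],
\end{equation*}
I would first condition on $\mathcal{C}_t$ and recognize the inner quantity as the KL divergence between two Beta densities. By Jensen's inequality applied inside the conditional expectation, this is non-negative, with equality iff $f_{\bm{\eta}_0}(\cdot|\mathcal{C}_t) = f_{\bm{\eta}}(\cdot|\mathcal{C}_t)$ almost surely under the stationary law guaranteed by Theorem \ref{theorem:stationary&ergodic}. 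Finiteness of $pl(\bm{\eta})$ on $\Xi$ follows from compactness of $\Xi$, boundedness of $\mathbbm{Z}_{t-1}$ (hence of $\mu_t$ away from $0$ and $1$ uniformly in $\bm{\eta}$), and standard control of $\mathbbm{E}|\log X_t|$, $\mathbbm{E}|\log(1-X_t)|$ for a Beta random variable with parameters bounded away from $0$.

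Given that the conditional densities agree a.s., identifiability of the two-parameter Beta family forces the shape parameters to match:
\begin{equation*}
\tau_0\mu_{t,\bm{\eta}_0} = \tau\mu_{t,\bm{\eta}}, \qquad \tau_0(1-\mu_{t,\bm{\eta}_0}) = \tau(1-\mu_{t,\bm{\eta}}) \quad \text{a.s.}
\end{equation*}
Adding these gives $\tau = \tau_0$, and then subtracting gives $\mu_{t,\bm{\eta}} = \mu_{t,\bm{\eta}_0}$ a.s. Since the logit link $g$ is strictly monotonic, this reduces to $(\bm{\beta}-\bm{\beta}_0)^T \mathbbm{Z}_{t-1} = 0$ almost surely.

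The main obstacle is then the identification step: converting this a.s. linear relation into $\bm{\beta} = \bm{\beta}_0$. This requires $\mathbbm{E}[\mathbbm{Z}_{t-1}\mathbbm{Z}_{t-1}^T]$ to be positive definite, which I would either impose as an explicit identifiability assumption (in the spirit of standard GLM theory) or derive from the structure of the stationary law. For the latter, a non-trivial linear combination $c_0 + c_1 \mathcal{A}(x) + \bm{c}_W^T \bm{w}$ vanishing on the joint support of $(X_{t-1},\bm{W}_t)$ would force $c_1\mathcal{A}(x)$ to be affine in $\bm{w}$ on that support; since the Beta conditional density places positive mass on every subinterval of $[0,1]$ and $\bm{W}_t$ has non-degenerate support on $[a,b]^l$ (with continuous transition density by Assumption \ref{assumption: properties of W_t}), the only possibility is $c_0 = c_1 = 0$ and $\bm{c}_W = 0$, provided $\mathcal{A}$ is non-constant on the support of $X_{t-1}$ (which holds for the admissible choices described after equation (\ref{eq: notations in the definition of generalized beta})). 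Positive definiteness then yields $\bm{\beta} = \bm{\beta}_0$, completing the strict inequality $pl(\bm{\eta}_0) > pl(\bm{\eta})$ for $\bm{\eta}\neq\bm{\eta}_0$.
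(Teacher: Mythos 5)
Your proposal is correct and follows essentially the same route as the paper's proof, which likewise rests on the Gibbs/KL inequality $\log x \le x-1$ applied to the conditional density ratio $f_{\bm{\eta}}(X_t|\mathcal{C}_t)/f_{\bm{\eta}_0}(X_t|\mathcal{C}_t)$ and then integrates out using $\int_0^1 f_{\bm{\eta}}(x|\mathcal{C}_t)\,\text{d}x=1$. The only difference is that the paper merely asserts that the ratio differs from $1$ with positive probability when $\bm{\eta}\neq\bm{\eta}_0$ (pointing to an analogous argument in the cited reference), whereas you spell out that identifiability step explicitly via the Beta-family parameterization ($\tau=\tau_0$, then $\mu_{t,\bm{\eta}}=\mu_{t,\bm{\eta}_0}$) and the non-degeneracy of $\mathbbm{Z}_{t-1}$ — a worthwhile elaboration of the one point the paper leaves unproved, and your caveat that it needs either an explicit positive-definiteness condition on $\mathbbm{E}[\mathbbm{Z}_{t-1}\mathbbm{Z}_{t-1}^T]$ or a support argument for $(X_{t-1},\bm{W}_t)$ is well placed.
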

The detailed proof can be found in Section \ref{app: unique of eta_0}.

\begin{lemma}
\label{lemma: ergodicity of log-partial likelihood}
The rescaled partial log-likelihood function
\begin{equation*}
pl_m(\bm{\eta}) = \frac{1}{m}\log PL_m(\bm{\eta}) = \frac{1}{m} \sum_{t=1}^m \log f_{\bm{\eta}}(X_t|\mathcal{C}_{t}) \overset{a.s.}{\rightarrow} pl(\bm{\eta}) \text{ as }m \rightarrow \infty
\end{equation*}
for any $\bm{\eta} \in \Xi$.
\end{lemma}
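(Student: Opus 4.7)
The plan is to apply Birkhoff's ergodic theorem to the real-valued sequence $\{\log f_{\bm{\eta}}(X_t|\mathcal{C}_t)\}_{t\ge 1}$. By Theorem \ref{theorem:stationary&ergodic}, $\{\bm{Y}_t\}$ is strictly stationary and geometrically ergodic under its invariant law; geometric ergodicity implies ergodicity of the shift, and the same property passes to the consecutive-pair process $\{(\bm{Y}_{t-1},\bm{Y}_t)\}$. Since $\log f_{\bm{\eta}}(X_t|\mathcal{C}_t)$ is a measurable function of $(X_t,X_{t-1},\bm{W}_t)$, and therefore of $(\bm{Y}_{t-1},\bm{Y}_t)$, the sequence $\{\log f_{\bm{\eta}}(X_t|\mathcal{C}_t)\}$ is strictly stationary and ergodic. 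Once the integrability $\mathbbm{E}\lvert\log f_{\bm{\eta}}(X_1|\mathcal{C}_1)\rvert<\infty$ is in hand, Birkhoff yields the claimed a.s.\ convergence to $\mathbbm{E}[\log f_{\bm{\eta}}(X_1|\mathcal{C}_1)]=pl(\bm{\eta})$.

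The only substantive step is the integrability. Using the closed form in (\ref{eq: closed-form of the partial log-likelihood function}), the summand splits into three $\log\Gamma$ contributions and the boundary terms $(\tau\mu_t-1)\log X_t$ and $(\tau(1-\mu_t)-1)\log(1-X_t)$. Boundedness of the $x$-link $\mathcal{A}:[0,1]\to[L,U]$, of the covariate state space $[a,b]^l$ in Assumption \ref{assumption: properties of W_t}, and compactness of $\Xi$ in Assumption \ref{assumption: compact} together confine both $\bm{\beta}^T\mathbbm{Z}_{t-1}$ and the true $\bm{\beta}_0^T\mathbbm{Z}_{t-1}$ to bounded intervals. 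Hence $\mu_t$ and the true $\mu_{t,0}$ lie in some $[\mu_*,\mu^*]\subset(0,1)$, and $\tau\mu_t,\tau(1-\mu_t),\tau_0\mu_{t,0},\tau_0(1-\mu_{t,0})$ lie in a compact subset of $(0,\infty)$. Continuity of $\log\Gamma$ on that compact subset bounds the Gamma terms uniformly. For the remaining pieces, condition on $\mathcal{C}_t$: under the true law $X_t|\mathcal{C}_t\sim\text{Beta}(\tau_0\mu_{t,0},\tau_0(1-\mu_{t,0}))$, so the Beta identity $\mathbbm{E}[\log X_t|\mathcal{C}_t]=\psi(\tau_0\mu_{t,0})-\psi(\tau_0)$, together with the corresponding second-moment computation and the analogous identities for $\log(1-X_t)$, gives $\mathbbm{E}[\lvert\log X_t\rvert|\mathcal{C}_t]$ and $\mathbbm{E}[\lvert\log(1-X_t)\rvert|\mathcal{C}_t]$ uniformly bounded. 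Iterated expectation then yields $\mathbbm{E}\lvert\log X_t\rvert,\mathbbm{E}\lvert\log(1-X_t)\rvert<\infty$, completing the check.

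The main obstacle is exactly this integrability step, because $\log X_t$ and $\log(1-X_t)$ diverge at the endpoints of $[0,1]$, and one needs the Beta distribution under the true parameters to concentrate enough mass away from those endpoints. It is resolved by the three structural ingredients — bounded $x$-link $\mathcal{A}$, bounded covariate state space, and compact parameter space — which jointly keep the true Beta shape parameters in a compact subset of $(0,\infty)$, under which the logarithmic moments are finite. The rest of the argument is a routine appeal to the ergodic theorem for stationary ergodic sequences.
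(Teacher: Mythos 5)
Your proposal is correct and follows essentially the same route as the paper: both apply the ergodic theorem to $\log f_{\bm{\eta}}(X_t|\mathcal{C}_t)$ viewed as a fixed measurable function of the strictly stationary, geometrically ergodic chain. Your version is in fact more complete than the paper's, since you correctly treat the summand as a function of the pair $(\bm{Y}_{t-1},\bm{Y}_t)$ rather than of $\bm{Y}_t$ alone, and you explicitly verify the integrability $\mathbbm{E}\lvert\log f_{\bm{\eta}}(X_1|\mathcal{C}_1)\rvert<\infty$ (via boundedness of $\mathcal{A}$, the covariate space, and $\Xi$, plus the Beta log-moment identities) that Birkhoff's theorem requires and that the paper does not check.
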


\begin{proof}
According to Theorem \ref{theorem:stationary&ergodic}, the Markov chain $\bm{Y}_t = (X_t, \bm{W}_t)$ is geometrically ergodic. From the expression shown in (\ref{eq: closed-form of the partial log-likelihood function}), it's easy to conclude the continuity of $\log f_{\bm{\eta}}(X_t|\mathcal{C}_t)$ with respect to $\bm{Y}_t$. Based on Proposition 2.1.1 and Section 2.2 of \cite{Strauman2005Daniel}, $\log f_{\bm{\eta}}(X_t|\mathcal{C}_t)$, as a continuous function of $\bm{Y}_t$, is ergodic. Applying the ergodic theorem,
\begin{equation*}
pl_m(\bm{\eta})\overset{a.s.}{\rightarrow} \mathbbm{E}(\log f_{\bm{\eta}}(X_t|\mathcal{C}_t)) = pl(\bm{\eta}) \text{, as }m \rightarrow \infty.
\end{equation*}
\end{proof}

According to Lemma \ref{lemma: unique of eta_0}, we can find a compact neighborhood around $\bm{\eta}_0$ so that $pl(\bm{\eta})$ is concave in the neighborhood. Combining the concavity of $pl(\bm{\eta})$ in the neighborhood and the pointwise convergence of $pl_m(\bm{\eta})$ stated in Lemma \ref{lemma: ergodicity of log-partial likelihood}, we can conclude that as $m\rightarrow \infty$, $pl_m(\bm{\eta})$ is a concave function in the compact neighborhood around $\bm{\eta}_0$. Therefore, the PMLE $\hat{\bm{\eta}}$ that optimizes (\ref{eq: PMLE objective function}) has
\begin{equation}
\label{eq: Sm=0}
    S_m(\hat{\bm{\eta}}) = 0.
\end{equation}

\begin{theorem}
\label{theorem: consistency}
The PMLE $\hat{\bm{\eta}}$ is strongly consistent.
\begin{equation*}
    \hat{\bm{\eta}} \overset{a.s.}{\rightarrow}\bm{\eta}_0 \text{ , as } m \rightarrow \infty.
\end{equation*}
\end{theorem}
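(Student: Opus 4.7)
I would use the classical Wald-type consistency argument for M-estimators on a compact parameter space, in two steps: first, upgrade the pointwise convergence of Lemma \ref{lemma: ergodicity of log-partial likelihood} to uniform convergence $\sup_{\bm{\eta}\in\Xi}|pl_m(\bm{\eta})-pl(\bm{\eta})|\overset{a.s.}{\rightarrow}0$; second, combine this with the identifiability from Lemma \ref{lemma: unique of eta_0} to force $\hat{\bm{\eta}}\to\bm{\eta}_0$.

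For the uniform strong law, I plan to invoke a uniform ergodic theorem for strictly stationary ergodic sequences (of the Ranga Rao / Straumann type used in \cite{Strauman2005Daniel}). Its three hypotheses are already partially in place: strict stationarity and ergodicity of $\bm{Y}_t=(X_t,\bm{W}_t)$ from Theorem \ref{theorem:stationary&ergodic}, and continuity of $\log f_{\bm{\eta}}(X_t|\mathcal{C}_t)$ in $\bm{\eta}$ from Remark \ref{remark: continuities of the closed-forms}. The remaining and critical hypothesis is an integrable envelope $\mathbbm{E}\bigl[\sup_{\bm{\eta}\in\Xi}|\log f_{\bm{\eta}}(X_t|\mathcal{C}_t)|\bigr]<\infty$. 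Inspecting (\ref{eq: closed-form of the partial log-likelihood function}), the problematic pieces are the gamma terms and the two terms $(\tau\mu_t-1)\log X_t$, $(\tau(1-\mu_t)-1)\log(1-X_t)$. Since $\Xi$ is compact, $\tau$ is bounded away from $0$ and $\infty$; and since $\mathcal{A}$ takes values in the bounded interval $[L,U]$ and $\bm{W}_t\in[a,b]^l$, the linear predictor $\mathbbm{Z}_{t-1}^T\bm{\beta}$ lies in a bounded set uniformly over $\bm{\eta}\in\Xi$, so $\mu_t=g^{-1}(\mathbbm{Z}_{t-1}^T\bm{\beta})$ is bounded away from $0$ and $1$ uniformly in $\bm{\eta}$. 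This makes the gamma terms uniformly bounded by a deterministic constant, and the envelope reduces to $C_1+C_2\bigl(|\log X_t|+|\log(1-X_t)|\bigr)$. Integrability of the last two pieces under the stationary law reduces, via conditioning on $\mathcal{C}_t$ and explicit Beta moment formulas, to the fact that the conditional shape parameters $\tau\mu_t$ and $\tau(1-\mu_t)$ stay bounded away from $0$ on $\Xi$.

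With uniform convergence in hand, consistency is routine. Continuity of $pl$ together with the unique-maximizer property on the compact set $\Xi$ imply that for every open neighborhood $U$ of $\bm{\eta}_0$, the quantity $\delta:=pl(\bm{\eta}_0)-\sup_{\bm{\eta}\in\Xi\setminus U}pl(\bm{\eta})$ is strictly positive. On the probability-one event where $\sup_{\bm{\eta}\in\Xi}|pl_m(\bm{\eta})-pl(\bm{\eta})|<\delta/3$ for all large $m$, the chain of inequalities $pl_m(\hat{\bm{\eta}})\ge pl_m(\bm{\eta}_0)>pl(\bm{\eta}_0)-\delta/3\ge \sup_{\bm{\eta}\in\Xi\setminus U}pl(\bm{\eta})+2\delta/3>\sup_{\bm{\eta}\in\Xi\setminus U}pl_m(\bm{\eta})$ forces $\hat{\bm{\eta}}\in U$ eventually. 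Shrinking $U$ to $\{\bm{\eta}_0\}$ yields $\hat{\bm{\eta}}\overset{a.s.}{\rightarrow}\bm{\eta}_0$.

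The main obstacle I expect is the integrable-envelope step. It is exactly the deliberate boundedness of $\mathcal{A}$ (emphasized as a design choice in Section \ref{sec: Model definition}) together with the bounded state space in Assumption \ref{assumption: properties of W_t} that keeps $\mu_t$ inside a compact subinterval of $(0,1)$ uniformly in $\bm{\eta}\in\Xi$; without either ingredient, neither the gamma terms nor the coefficients multiplying $\log X_t$ and $\log(1-X_t)$ would admit a deterministic-in-$\bm{\eta}$ bound and the uniform ergodic theorem would not apply. Once that envelope is secured, the remainder is the standard compactness-plus-identifiability reduction.
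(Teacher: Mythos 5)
Your proposal is correct and follows essentially the same route as the paper, which likewise combines the identifiability of Lemma \ref{lemma: unique of eta_0} with the ergodic convergence of Lemma \ref{lemma: ergodicity of log-partial likelihood} and then appeals to the Wald-type argument of Theorem 4.1 in \cite{straumann2006quasi} (itself a uniform-ergodic-theorem-plus-compactness argument). The only difference is that you spell out the integrable-envelope/uniform-convergence step---made possible by the boundedness of $\mathcal{A}$, the compactness of $\Xi$, and Assumption \ref{assumption: properties of W_t} keeping $\mu_t$ in a compact subinterval of $(0,1)$---which the paper delegates to the cited references.
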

\begin{proof}
According to Lemma \ref{lemma: unique of eta_0}, the true parameter $\bm{\eta}_0$ is the unique maximizer of $pl(\bm{\eta})$. We also point out the almost sure convergence of the average log-partial likelihood function, $pl_m(\bm{\eta}) \overset{a.s.}{\rightarrow} pl(\bm{\eta})$, in Lemma \ref{lemma: ergodicity of log-partial likelihood}. Therefore, similar with the proof of Theorem 4.1 of \cite{straumann2006quasi}, according to \cite{wald1949note}, the strong consistency of $\hat{\bm{\eta}}$ can be shown.
\begin{equation*}
    \hat{\bm{\eta}} \overset{a.s.}{\rightarrow}\bm{\eta}_0.
\end{equation*}
\end{proof}

Before proving the asymptotic normality of $\hat{\bm{\eta}}$, we first show that $G(X_t,\bm{\eta}_0|\mathcal{C}_t)$ is a martingale difference sequence (MDS), and naturally, the sum of $G$, $S_m(\bm{\eta}_0)$, is a martingale with respect to the filter $\mathcal{C}_t$. Relying on the martingale $S_m(\bm{\eta}_0)$, we would like to use martingale central limit theorem (MCLT) to prove the asymptotic normality of $\hat{\bm{\eta}}$.

\begin{lemma}
\label{lemma: G is MDS}
$G(X_t, \bm{\eta}_0|\mathcal{C}_t)$ is a martingale difference sequence (MDS).
\end{lemma}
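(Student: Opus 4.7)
The plan is to verify the two defining properties of a martingale difference sequence with respect to the filtration $\{\mathcal{C}_t\}$: that each $G(X_t,\bm{\eta}_0\mid\mathcal{C}_t)$ is integrable and $\mathcal{C}_{t+1}$-measurable, and that $\mathbbm{E}\bigl[G(X_t,\bm{\eta}_0\mid\mathcal{C}_t)\,\big|\,\mathcal{C}_t\bigr]=\bm{0}$. The $\mathcal{C}_{t+1}$-measurability is immediate, since $G$ is a continuous function of $X_t$, $\mu_t$, $\tau_0$, and $\mathbbm{Z}_{t-1}$, each of which is $\mathcal{C}_{t+1}$-measurable (indeed, $\mu_t$ and $\mathbbm{Z}_{t-1}$ are $\mathcal{C}_t$-measurable by the definition in (\ref{eq: notations in the definition of generalized beta})).

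The heart of the argument is a direct computation using two classical log-moment identities for the Beta distribution. Because $X_t\mid\mathcal{C}_t\sim\text{Beta}(\tau_0\mu_t,\tau_0(1-\mu_t))$, with shape parameters summing to $\tau_0$, the digamma identities give
\begin{equation*}
\mathbbm{E}\bigl[\log X_t\mid\mathcal{C}_t\bigr]=\psi(\tau_0\mu_t)-\psi(\tau_0),\qquad
\mathbbm{E}\bigl[\log(1-X_t)\mid\mathcal{C}_t\bigr]=\psi(\tau_0(1-\mu_t))-\psi(\tau_0).
\end{equation*}
Subtracting these two identities yields $\mathbbm{E}[X_t^{*}\mid\mathcal{C}_t]=\psi(\tau_0\mu_t)-\psi(\tau_0(1-\mu_t))=\mu_t^{*}$, hence $\mathbbm{E}[X_t^{*}-\mu_t^{*}\mid\mathcal{C}_t]=0$.

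Plugging this into the expression for $G$ in (\ref{eq: score vector}) and using $\mathcal{C}_t$-measurability of $\mu_t$, $\mathbbm{Z}_{t-1}$, and $\tau_0$, the second block $\tau_0(X_t^{*}-\mu_t^{*})\mu_t(1-\mu_t)\mathbbm{Z}_{t-1}$ has conditional mean zero by pulling the $\mathcal{C}_t$-measurable factors out of the conditional expectation. For the first (scalar) component, the same argument handles the $\mu_t(X_t^{*}-\mu_t^{*})$ term, while the remaining piece becomes
\begin{equation*}
\mathbbm{E}\bigl[\log(1-X_t)\mid\mathcal{C}_t\bigr]-\psi(\tau_0(1-\mu_t))+\psi(\tau_0)=\bigl(\psi(\tau_0(1-\mu_t))-\psi(\tau_0)\bigr)-\psi(\tau_0(1-\mu_t))+\psi(\tau_0)=0,
\end{equation*}
so both blocks vanish simultaneously.

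The only technical point worth checking carefully is integrability, since $X_t^{*}=\log(X_t/(1-X_t))$ and $\log(1-X_t)$ blow up at the boundary. This is handled by the very log-moment identities above, which are finite because $\tau_0\mu_t$ and $\tau_0(1-\mu_t)$ are bounded away from $0$ on the compact parameter set $\Xi$ combined with the boundedness of $\mu_t=g^{-1}(\bm{\beta}_0^T\mathbbm{Z}_{t-1})$ guaranteed by the bounded $x$-link $\mathcal{A}$ and the bounded covariate space $[a,b]^l$ from Assumption \ref{assumption: properties of W_t}; this also gives a uniform bound on $\mu_t(1-\mu_t)\mathbbm{Z}_{t-1}$. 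Thus $\mathbbm{E}\|G(X_t,\bm{\eta}_0\mid\mathcal{C}_t)\|<\infty$, completing the three MDS requirements. I do not anticipate a genuine obstacle here; the argument is essentially a one-line identity for the Beta distribution, with bookkeeping to verify that every non-Beta factor in $G$ is $\mathcal{C}_t$-measurable and uniformly bounded.
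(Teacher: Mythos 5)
Your proposal is correct and follows essentially the same route as the paper: both reduce the claim to the digamma log-moment identities for the Beta distribution (the paper derives $\mathbbm{E}[X_t^*\mid\mathcal{C}_t]=\mu_t^*$ and $\mathbbm{E}[\log(1-X_t)\mid\mathcal{C}_t]=\psi(\tau_0(1-\mu_t))-\psi(\tau_0)$ via the moment/cumulant generating function of $\log(X_t/(1-X_t))$, while you quote the same identities directly) and then use $\mathcal{C}_t$-measurability of $\mu_t$, $\tau_0$, and $\mathbbm{Z}_{t-1}$ to conclude each block of $G$ has zero conditional mean. Your extra bookkeeping on measurability and integrability is a welcome refinement of the paper's argument, not a different method.
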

\begin{proof}
In order to calculate the expectation of $G$, we need to calculate the expectation of $X_t^* = \log (\frac{X_t}{1-X_t})$, the transformed random variable of the Beta random variable $X_t \sim$Beta $(\tau \mu_t, \tau(1-\mu_t)).$

For a random variable $X \sim $Beta $(\alpha,\beta)$, the moment generating function for $\log(\frac{X}{1-X})$ is 
\begin{equation*}
\begin{aligned}
    M_{\log(\frac{X}{1-X})}(t) &= \mathbbm{E}(e^{t\log(\frac{X}{1-X})}) \\
    &=\int_0^1\frac{\Gamma(\alpha+\beta)}{\Gamma(\alpha)\Gamma(\beta)}e^{t\log(\frac{x}{1-x})}x^{\alpha-1}(1-x)^{\beta-1} \text{d}x \\
        &= \frac{\Gamma(\alpha+\beta)}{\Gamma(\alpha)\Gamma(\beta)}\int_0^1 x^{\alpha+t-1}(1-x)^{\beta-t-1} \text{d}x \\
        &= \frac{\Gamma(\alpha+\beta)}{\Gamma(\alpha)\Gamma(\beta)} \frac{\Gamma(\alpha+t)\Gamma(\beta-t)}{\Gamma(\alpha+\beta)} \\
        &=\frac{\Gamma(\alpha+t)\Gamma(\beta-t)}{\Gamma(\alpha)\Gamma(\beta)}, \;|t|<\min\{\alpha,\beta\}.
\end{aligned}
\end{equation*}
Then the cumulant-generating function is 
\begin{equation}
\label{eq: cgf for logX/(1-X)}
    K_{\log(\frac{X}{1-X})}(t) = \log M_{\log(\frac{X}{1-X})}(t) =\log \Gamma(\alpha+t)+\log \Gamma(\beta-t)-\log(\Gamma(\alpha)\Gamma(\beta)).
\end{equation}
The expectation of $X^*$ can be derived from taking the first derivative of (\ref{eq: cgf for logX/(1-X)}) at $t=0$.
\begin{equation*}
    \mathbbm{E}(X^*) = \Big (\frac{\text{d}\log \Gamma(\alpha+t)}{\text{d}t}+ \frac{\text{d}\log \Gamma(\beta-t)}{\text{d}t} \Big) |_{t = 0} = \frac{\Gamma'(\alpha)}{\Gamma(\alpha)}-\frac{\Gamma'(\beta)}{\Gamma(\beta)} \overset{\Delta}{=} \psi(\alpha)-\psi(\beta).
\end{equation*}
So $\mathbbm{E}(X^*_t) = \psi(\tau \mu_t)-\psi(\tau (1-\mu_t)),$ which, according to the notation used in (\ref{eq: score vector}), is $\mu^*_t$. Similar with the method we use to calculate the expectation of $X^*_t$, we are able to get the cumulant-generating function of $\log(1-X)$ and conclude that
\begin{equation*}
    \mathbbm{E}(\log(1-X_t)) = \psi(\tau(1-\mu_t))-\psi(\tau).
\end{equation*}
As a statistic built on $\mathcal{C}_t$ (so $\mu_t$ is known), the expectation of the first element of $G(X_t, \bm{\eta}_0|\mathcal{C}_t)$,
\begin{equation*}
\begin{aligned}
    \mathbbm{E}\Big(G(X_t, \bm{\eta}_0|\mathcal{C}_t)_1\Big) &= \mathbbm{E}\Big(\mu_t(X_t^*-\mu_t^*)+\log(1-X_t)-\psi(\tau(1-\mu_t))+\psi(\tau)\Big) \\
    & =  \mu_t\mathbbm{E}\Big((X_t^*-\mu_t^*)\Big) + \mathbbm{E}\Big(\log(1-X_t)-\psi(\tau(1-\mu_t))+\psi(\tau)\Big) \\
    &= 0 .
\end{aligned}
\end{equation*}
Similarly, we can calculate the expectations of the rest elements of $G$ and it is easy to see that they are all 0. That is,
\begin{equation*}
    \mathbbm{E}(G(X_t, \bm{\eta}_0|\mathcal{C}_t)) = 0.
\end{equation*}
$G(X_t, \bm{\eta}_0)$ is a MDS. The sum of $G$ based on the $m$ observations and the true parameter $\bm{\eta}_0$, $S_m(X_t, \bm{\eta}_0)$, is a martingale.
\end{proof} 

In \cite{brown1971martingale}, the author proposes certain conditions under which MCLT holds. In order to guarantee the conditions hold, we first show that $G$ has bounded fourth moment. For the sake of simplicity, we denote $G(X_t,\bm{\eta}|\mathcal{C}_t)$ as $G(X_t,\bm{\eta})$ onward.
\begin{lemma}
\label{lemma: bounded fourth moment of G}
 $\mathbbm{E}(|G(X_t,\bm{\eta})|^4)<\infty$. 
\end{lemma}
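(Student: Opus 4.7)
The plan is to bound each coordinate of $G(X_t,\bm{\eta}|\mathcal{C}_t)$ in $L^4$ by first showing that the conditional Beta parameters $\tau\mu_t$ and $\tau(1-\mu_t)$ stay in a fixed compact subset of $(0,\infty)$, and then using the explicit conditional MGFs of $X_t^{*} = \log(X_t/(1-X_t))$ and $\log(1-X_t)$ derived in the proof of Lemma \ref{lemma: G is MDS} to obtain uniform bounds on the relevant fourth conditional moments.

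First I would establish uniform boundedness of the deterministic coefficients appearing in $G$. Because $\mathcal{A}$ maps $[0,1]$ into $[L,U]$ and $\bm{W}_t \in [a,b]^l$ by Assumption \ref{assumption: properties of W_t}, the regressor $\mathbbm{Z}_{t-1}$ is bounded uniformly in $t$. Combined with the compactness of $\Xi$ in Assumption \ref{assumption: compact}, this gives a uniform bound on $\bm{\beta}^T \mathbbm{Z}_{t-1}$, and hence $\mu_t = g^{-1}(\bm{\beta}^T \mathbbm{Z}_{t-1}) \in [c_1, c_2]$ for some $0 < c_1 < c_2 < 1$. Compactness also forces $\tau \in [\tau_{-}, \tau_{+}]$ with $\tau_{-} > 0$, so $\tau\mu_t$ and $\tau(1-\mu_t)$ live in a fixed compact subset of $(0,\infty)$. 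By continuity of $\psi$ and of each $\psi^{(k)}$, the quantities $\mu_t^{*}$, $\psi(\tau(1-\mu_t))$, $\psi(\tau)$ and $\tau\mu_t(1-\mu_t)\mathbbm{Z}_{t-1,i}$ are all bounded by deterministic constants independent of $t$.

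Next I control the conditional moments of $X_t^{*}$ and $\log(1-X_t)$ through their MGFs. The computation already carried out in the proof of Lemma \ref{lemma: G is MDS} shows that $\mathbbm{E}(e^{s X_t^{*}}|\mathcal{C}_t) = \Gamma(\tau\mu_t+s)\Gamma(\tau(1-\mu_t)-s)/[\Gamma(\tau\mu_t)\Gamma(\tau(1-\mu_t))]$ exists for $|s| < \min\{\tau\mu_t,\tau(1-\mu_t)\}$, whose lower bound is uniformly positive by the previous step. Differentiating the associated cumulant generating function four times at $s = 0$ expresses $\mathbbm{E}((X_t^{*})^4|\mathcal{C}_t)$ as a polynomial in $\psi^{(k)}(\tau\mu_t)$ and $\psi^{(k)}(\tau(1-\mu_t))$ for $k = 0,1,2,3$, which is bounded by a universal constant by continuity on the compact argument range. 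An analogous calculation with $\mathbbm{E}((1-X_t)^s|\mathcal{C}_t) = B(\tau\mu_t,\tau(1-\mu_t)+s)/B(\tau\mu_t,\tau(1-\mu_t))$ yields a uniform upper bound on $\mathbbm{E}((\log(1-X_t))^4|\mathcal{C}_t)$.

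Finally, each coordinate $G(X_t,\bm{\eta})_i$ is an affine combination of $X_t^{*}$ and $\log(1-X_t)$ with bounded deterministic coefficients plus a bounded additive term, so the $C_r$-inequality and the two preceding conditional fourth-moment bounds give $\mathbbm{E}(|G(X_t,\bm{\eta})_i|^4 \mid \mathcal{C}_t) \leq C$ almost surely, and the tower property then yields $\mathbbm{E}|G(X_t,\bm{\eta})|^4 < \infty$. The main obstacle is really the uniform lower bound $\mu_t(1-\mu_t) \geq c_1(1-c_2) > 0$: once this is in place, every gamma-function denominator in the MGFs stays bounded away from its singularity at zero and all remaining work is continuity of polygamma functions on a compact interval. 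This is precisely the role played by the boundedness of the $x$-link $\mathcal{A}$, a distinguishing design feature of the generalized Beta AR model relative to the unbounded logit-based links in $\beta$ARMA and Logit-Beta-M-ARMA.
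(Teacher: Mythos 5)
Your proposal is correct and follows essentially the same route as the paper: both use the cumulant-generating function of $\log(X_t/(1-X_t))$ to express the conditional fourth moment in terms of polygamma functions, exploit compactness of the state space and of $\Xi$ to bound $\mu_t$ uniformly away from $0$ and $1$ (and $\tau$ away from $0$), bound the resulting polygamma expressions on a compact argument range, and conclude via the tower property. The only difference is that you explicitly carry out the analogous Beta-function MGF argument for the $\log(1-X_t)$ term in the first coordinate, which the paper dispatches with ``by using the same idea.''
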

 The proof can be found in Section \ref{proof: bounded fourth moment}.

\begin{theorem}
\label{theorem: CLT}
The PMLE $\hat{\bm{\eta}}$ is asymptotically normal.
\begin{equation*}
    \sqrt{m}(\hat{\bm{\eta}}-\bm{\eta}_0) \overset{\mathcal{D}}{\rightarrow} N(\bm{0},(-\mathbbm{E}\triangledown G(X_1,\bm{\eta}_0))^{-1})\text{, as } m \rightarrow \infty.
\end{equation*}
\end{theorem}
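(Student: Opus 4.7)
The plan is to follow the classical Taylor-expansion route for maximum likelihood estimators, adapted to the martingale/ergodic framework established earlier. Since the score equation $S_m(\hat{\bm{\eta}}) = 0$ in (\ref{eq: Sm=0}) holds for all sufficiently large $m$ by the concavity argument preceding Theorem \ref{theorem: consistency}, I would expand each coordinate of $S_m$ about $\bm{\eta}_0$ via the mean value theorem to obtain
\begin{equation*}
\bm{0} \;=\; S_m(\hat{\bm{\eta}}) \;=\; S_m(\bm{\eta}_0) \,+\, \nabla S_m(\tilde{\bm{\eta}}_m)\,(\hat{\bm{\eta}} - \bm{\eta}_0),
\end{equation*}
where $\tilde{\bm{\eta}}_m$ lies (componentwise) on the segment between $\bm{\eta}_0$ and $\hat{\bm{\eta}}$. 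Rearranging gives
\begin{equation*}
\sqrt{m}\,(\hat{\bm{\eta}} - \bm{\eta}_0) \;=\; -\left[\tfrac{1}{m}\nabla S_m(\tilde{\bm{\eta}}_m)\right]^{-1}\tfrac{1}{\sqrt{m}} S_m(\bm{\eta}_0),
\end{equation*}
so the theorem reduces to (i) a CLT for $m^{-1/2} S_m(\bm{\eta}_0)$ and (ii) a law of large numbers for $m^{-1} \nabla S_m(\tilde{\bm{\eta}}_m)$.

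For (i), I would invoke the martingale central limit theorem of \cite{brown1971martingale} applied to the MDS $\{G(X_t,\bm{\eta}_0)\}$ furnished by Lemma \ref{lemma: G is MDS}, using the Cram\'er--Wold device to reduce to the scalar case. The conditional Lindeberg condition follows from the bounded fourth moment of Lemma \ref{lemma: bounded fourth moment of G} via a Chebyshev truncation, combined with the ergodic theorem to control $m^{-1}\sum_{t=1}^m \mathbbm{E}(|G(X_t,\bm{\eta}_0)|^4\mid \mathcal{C}_t)$. Convergence of the normalized conditional covariance $m^{-1}\sum_{t=1}^m \mathbbm{E}[G(X_t,\bm{\eta}_0)G(X_t,\bm{\eta}_0)^T \mid \mathcal{C}_t]$ to $V := \mathbbm{E}[G(X_1,\bm{\eta}_0)G(X_1,\bm{\eta}_0)^T]$ follows because the Beta conditional second moments admit closed forms in $(\mu_t,\tau_0)$, making the summand a continuous functional of $\bm{Y}_{t-1}$; ergodicity from Theorem \ref{theorem:stationary&ergodic} then applies. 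This yields $m^{-1/2} S_m(\bm{\eta}_0) \overset{\mathcal{D}}{\rightarrow} N(\bm{0},V)$.

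For (ii), consistency of $\hat{\bm{\eta}}$ (Theorem \ref{theorem: consistency}) forces $\tilde{\bm{\eta}}_m \overset{a.s.}{\rightarrow} \bm{\eta}_0$. I would then establish a uniform law of large numbers $\sup_{\bm{\eta} \in B} \|m^{-1}\nabla S_m(\bm{\eta}) - \mathbbm{E}\nabla G(X_1,\bm{\eta})\| \overset{a.s.}{\rightarrow} 0$ on a compact neighborhood $B$ of $\bm{\eta}_0$. The ingredients are the entrywise ergodic theorem applied to the continuous summands in the Hessian (Remark \ref{remark: continuities of the closed-forms}), together with a local Lipschitz bound in $\bm{\eta}$. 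The latter is available because digamma and trigamma functions are smooth on compact sub-intervals of $(0,\infty)$, and crucially because $\mu_t = g^{-1}(\bm{\beta}^T\mathbbm{Z}_{t-1})$ stays bounded away from $0$ and $1$: the boundedness of $\mathcal{A}$, the compactness of the $\bm{W}_t$-support in Assumption \ref{assumption: properties of W_t}, and the compactness of $\Xi$ jointly keep $\bm{\beta}^T\mathbbm{Z}_{t-1}$ inside a bounded interval. Continuity of $\bm{\eta}\mapsto \mathbbm{E}\nabla G(X_1,\bm{\eta})$ then gives $m^{-1}\nabla S_m(\tilde{\bm{\eta}}_m) \overset{a.s.}{\rightarrow} \mathbbm{E}\nabla G(X_1,\bm{\eta}_0)$.

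Finally, the information matrix identity $V = -\mathbbm{E}\nabla G(X_1,\bm{\eta}_0)$ follows by differentiating $\int f_{\bm{\eta}}(x\mid \mathcal{C}_1)\,dx = 1$ twice in $\bm{\eta}$ under the integral sign, an interchange justified because $f_{\bm{\eta}}$ and its first two $\bm{\eta}$-derivatives are dominated by integrable functions on the compact parameter neighborhood where $\mu_t$ remains interior. Slutsky's theorem then combines (i) and (ii) to yield the claimed $N(\bm{0},(-\mathbbm{E}\nabla G(X_1,\bm{\eta}_0))^{-1})$ limit. I expect the main technical obstacle to be the Hessian LLN: producing \emph{uniform} (rather than merely pointwise) convergence on $B$, and converting the a.s.\ convergence of $\tilde{\bm{\eta}}_m$ to convergence of the randomly-evaluated Hessian, requires carefully controlling digamma derivatives through the boundedness of $\mu_t$ and an explicit envelope function with finite expectation---a bookkeeping task rather than a deep difficulty, but the most delicate part of the argument.
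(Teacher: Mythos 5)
Your proposal is correct and follows essentially the same route as the paper's proof: the score equation plus Taylor/mean-value expansion, Brown's martingale CLT with the Cramér--Wold device and a Lindeberg condition extracted from the bounded fourth moment of $G$, an ergodic-theorem argument for $m^{-1}\nabla S_m(\tilde{\bm{\eta}}_m)$, the information matrix identity, and Slutsky. If anything you are more explicit than the paper about the uniform law of large numbers needed to justify evaluating the Hessian at the random intermediate point $\tilde{\bm{\eta}}_m$, and about verifying the conditional covariance condition in the martingale CLT, both of which the paper passes over more briefly.
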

The proof can be found in Section \ref{proof: CLT}.
The closed-form expression can be found in Section \ref{subsec: Closed-form expressions of the partial log-likelihood function, the score vector and the Hessian matrix}.
\section{Sequential change-point detection for the generalized Beta AR(1) model and the asymptotic results under the null hypothesis}
\label{sec: monitoring scheme and null}
\subsection{Monitoring procedure}
In this section, we will construct close-end sequential monitoring procedure for the generalized Beta AR(1) model based on the score vector defined as (\ref{eq: score vector}). 
\begin{assumption}
\label{assumption: no contamination assumptions}
(a) There is no change point in the exogenous variable $\bm{W}_t$. \\
(b) There is no change point in the first $m$ observations. 
\end{assumption}
Assumption \ref{assumption: no contamination assumptions} (a) indicates that we only discuss the case when any parameters in $\bm{\eta}$ change. Assumption \ref{assumption: no contamination assumptions} (b) (i.e. the non-contamination assumption) guarantees that given the first $m$ observations, we can estimate the parameter vector $\hat{\bm{\eta}}$, which has been proved consistent and asymptotically normally distributed with mean $\bm{\eta}_0$ as $m \rightarrow \infty$. Then the sequential monitoring procedure launches from the $m+1^{\text{th}}$ point. For the close-end monitoring procedure, we stop monitoring when there is no change point detected among the $Nm$ new observations ($N$ is fixed and known). The score vector 
\begin{equation*}
    S_{m,k}(\hat{\bm{\eta}}) = \sum_{t=m+1}^{m+k} G(X_t,\hat{\bm{\eta}})
\end{equation*}
keeps updating as the number of new observations, $k$, increases until it reaches the end point $Nm$. If there is no change point in the monitoring range, $S_{m,k}(\hat{\bm{\eta}})$ should be reasonably close to 0. Therefore, the basic idea for the sequential change-point detection is that we reject the null hypothesis that there is no change point within the monitoring range if the test statistic exceeds a certain threshold, which is controlled by the desired significance level.

Following this idea, suppose that $\{X_t\},t=1,2,...,m,$ follows model (\ref{equation: model definition}) with parameter $\bm{\eta}_0$. For the new observations, we denote $\{X_t\}, t=m+1,m+2,...,m+Nm$, follows model (\ref{equation: model definition}) with parameter $\bm{\eta}_t$, then we can propose test statistics with respect to $S_{m,k}(\hat{\bm{\eta}})$ to test the hypotheses
\\

$H_0: \bm{\eta}_t = \bm{\eta}_0 \text{ for all }t \in \{m+1,m+2,...,m+Nm\},$ 

 v.s.  
 
$H_a: \exists k^*\text{ such that } \bm{\eta}_t = \bm{\eta}_0 \text{ for } m+1\leq t< m+k^* <m+Nm \text{, and }\bm{\eta}_t \neq \bm{\eta}_0 \text{ for } m+k^*\leq t\leq m+Nm.$
\\
Due to the elegant closed-form expression of $S_{m,k}(\hat{\bm{\eta}})$ shown as (\ref{eq: score vector}), we continue to use the monitoring scheme proposed in \cite{kirch2015use} that the null is rejected when 
\begin{equation}
\label{equation: test statistic}
    w^2(m,k)S_{m,k}(\hat{\bm{\eta}})^T\bm{A}S_{m,k}(\hat{\bm{\eta}}) \geq c,
\end{equation}
where $w(m,k)$ is a weight function and $\bm{A}$ is a symmetric positive definite matrix. The matrix $\bm{A}$ can be considered as a rescaled matrix. The test statistic would be analogous to the score test statistic if $\bm{A}$ is chosen as the inverse of the Fisher information matrix of the partial log-likelihood function. The weight function is used to adjust the sensitivity of the monitoring procedure and it satisfies the following assumption. 

\begin{assumption}
\label{assumption: weight function}
The weight function satisfies the form
$$
\omega (m,k) = m^{-\frac{1}{2}}\Tilde{\omega}(m,k),
$$
where
$$
\Tilde{\omega}(m,k) = \begin{cases} \rho(k/m),& \text{ for }k \geq a_m \text{ with } a_m/m \rightarrow0\\
                                    0, &\text{ for } k<a_m.\end{cases}
$$
The function $\rho(.)$ is continuous and satisfies 
$$
\underset{t\rightarrow 0} {\lim} t^{\gamma}\rho(t) <\infty \text{ for some } 0\leq \gamma < \frac{1}{2},
$$
and
$$
\underset{t \rightarrow \infty}{\lim} t\rho(t)<\infty.
$$
The tuning parameter $\gamma$ in the weight function determines the sensitivity of the method. The influence of $\gamma$ on the detection results will be further discussed in Section \ref{sec: simulation} and Section \ref{sec: real life data}.\\
\end{assumption} 
The threshold $c$ is chosen so that the following equation is satisfied under $H_0$. 
\begin{equation*}
    \underset{m\rightarrow \infty}{\lim}P(\underset{1\leq k \leq Nm}{\sup}w^2(m,k)S_{m,k}(\hat{\bm{\eta}})^T\bm{A}S_{m,k}(\hat{\bm{\eta}}) \geq c|H_0)  = \alpha.
\end{equation*}
Assume a change point occurs at $k^*$, say the new parameter $\Tilde{\bm{\eta}} \neq \bm{\eta}_0$, the score vector $S_{m,k}(\hat{\bm{\eta}}), k>k^*,$ will be far from 0 under certain regularity conditions (we will discuss them in Section \ref{sec: power analysis}) as more and more observations after $k^*$ are collected. With a large enough $m$, ideally, the test statistic will exceed the threshold with probability 1. That is, 
\begin{equation*}
    \underset{m\rightarrow \infty}{\lim}P(\underset{1\leq k \leq Nm}{\sup}w^2(m,k)S_{m,k}(\hat{\bm{\eta}})^T\bm{A}S_{m,k}(\hat{\bm{\eta}}) \geq c|H_a)  = 1.
\end{equation*}
We will further discuss the power of the test in Section \ref{sec: power analysis}.
\subsection{Asymptotic results under the null hypothesis}
In the rest of this section, we will derive the asymptotic results of $w^2(m,k)S_{m,k}(\hat{\bm{\eta}})^T\bm{A}S_{m,k}(\hat{\bm{\eta}})$ under the null hypothesis.

\begin{lemma}
\label{proposition: CUSUM approximation}
Under Assumption \ref{assumption: properties of W_t}-\ref{assumption: weight function} and under $H_0$, with the monitoring function $G$ and a weight function $\omega(m,k)$, as $m \rightarrow \infty$,
$$
\underset{k\geq 1}{\sup} \; \omega(m,k)||\sum_{t=m+1}^{m+k} G(X_t,\hat{\bm{\eta}})-(\sum_{t=m+1}^{m+k}G(X_t,\bm{\eta}_0)-\frac{k}{m}\sum_{t=1}^m G(X_t,\bm{\eta}_0))|| = o_p(1).
$$
\end{lemma}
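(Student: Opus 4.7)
The approach is to combine a first-order Taylor expansion of $G(X_t,\cdot)$ about $\bm{\eta}_0$ with the optimality identity $S_m(\hat{\bm{\eta}})=0$ in (\ref{eq: Sm=0}), producing a Bahadur-type linearization of $\hat{\bm{\eta}}-\bm{\eta}_0$ under which the two sums in the statement collapse to a single perturbation that is small in a uniformly weighted sense. For each $t$ I would write
\begin{equation*}
G(X_t,\hat{\bm{\eta}}) = G(X_t,\bm{\eta}_0) + \nabla G(X_t,\bm{\eta}_0)(\hat{\bm{\eta}}-\bm{\eta}_0) + R_t,
\end{equation*}
where, using the closed-form Hessian in Section~\ref{subsec: Closed-form expressions of the partial log-likelihood function, the score vector and the Hessian matrix}, smoothness of the digamma and polygamma functions, Assumptions~\ref{assumption: properties of W_t} and \ref{assumption: compact}, and boundedness of $\mathcal{A}$, the remainder satisfies $\|R_t\|\leq C(\bm{Y}_t,\bm{Y}_{t-1})\|\hat{\bm{\eta}}-\bm{\eta}_0\|^2$ with $\mathbbm{E}[C(\bm{Y}_1,\bm{Y}_0)]<\infty$. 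Summing over $t=1,\dots,m$, using (\ref{eq: Sm=0}), Theorem~\ref{theorem: consistency}, and the ergodic theorem for $\nabla G(X_t,\bm{\eta}_0)$, one extracts the Bahadur representation
\begin{equation*}
\hat{\bm{\eta}}-\bm{\eta}_0 \;=\; I(\bm{\eta}_0)^{-1}\,\frac{1}{m}S_m(\bm{\eta}_0) + o_p(m^{-1/2}),
\end{equation*}
with $I(\bm{\eta}_0):=-\mathbbm{E}[\nabla G(X_1,\bm{\eta}_0)]$ invertible by Theorem~\ref{theorem: CLT}.

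Summing the Taylor expansion from $t=m+1$ to $m+k$ and substituting this representation (writing $U_t := \nabla G(X_t,\bm{\eta}_0)+I(\bm{\eta}_0)$) yields the decomposition
\begin{equation*}
\sum_{t=m+1}^{m+k} G(X_t,\hat{\bm{\eta}}) - \sum_{t=m+1}^{m+k} G(X_t,\bm{\eta}_0) + \frac{k}{m}S_m(\bm{\eta}_0) \;=\; T_1(k)+T_2(k)+T_3(k),
\end{equation*}
\begin{equation*}
\begin{aligned}
T_1(k) &= \Big(\sum_{t=m+1}^{m+k}U_t\Big)(\hat{\bm{\eta}}-\bm{\eta}_0), \\
T_2(k) &= -k\,I(\bm{\eta}_0)\Big\{(\hat{\bm{\eta}}-\bm{\eta}_0)-I(\bm{\eta}_0)^{-1}m^{-1}S_m(\bm{\eta}_0)\Big\}, \\
T_3(k) &= \sum_{t=m+1}^{m+k}R_t,
\end{aligned}
\end{equation*}
so the Lemma reduces to $\sup_{k\geq 1}\omega(m,k)\|T_j(k)\|=o_p(1)$ for $j=1,2,3$ (the tail beyond the monitoring horizon $Nm$ is harmless because $\lim_{t\to\infty}t\rho(t)<\infty$ in Assumption~\ref{assumption: weight function}).

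Terms $T_2$ and $T_3$ are the easier ones. The Bahadur step gives $\|T_2(k)\|\leq k\|I(\bm{\eta}_0)\|\cdot o_p(m^{-1/2})$, so $\omega(m,k)\|T_2(k)\|\leq (k/m)\rho(k/m)\cdot o_p(1)$, which is uniformly $o_p(1)$ because Assumption~\ref{assumption: weight function} forces $(k/m)\rho(k/m)$ to be bounded on $(0,N]$. For $T_3$, combining $\|\hat{\bm{\eta}}-\bm{\eta}_0\|^2=O_p(m^{-1})$ (Theorem~\ref{theorem: CLT}) with the ergodic bound $\sup_k k^{-1}\sum_{t=m+1}^{m+k}C(\bm{Y}_t,\bm{Y}_{t-1})=O_p(1)$ gives $\|T_3(k)\|=O_p(k/m)$, and the same weight-function argument closes it. For $T_1$, the sequence $\{U_t\}$ is a centered, stationary, geometrically ergodic functional of the Markov chain $\{\bm{Y}_t\}$ (Theorem~\ref{theorem:stationary&ergodic}) with finite second moments analogous to Lemma~\ref{lemma: bounded fourth moment of G}; a H\'ajek--R\'enyi maximal inequality yields $\max_{k\leq Nm}k^{-1/2-\delta}\|\sum_{t=m+1}^{m+k}U_t\|=O_p(1)$ for any small $\delta>0$, and combined with $\|\hat{\bm{\eta}}-\bm{\eta}_0\|=O_p(m^{-1/2})$ this delivers $\sup_k\omega(m,k)\|T_1(k)\|=O_p(m^{\delta-1/2})=o_p(1)$ once $\delta<1/2$.

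The main obstacle is the uniform control of $T_1$ in the small-$k$ regime, where $\rho(k/m)$ may diverge like $(k/m)^{-\gamma}$ as $k/m\to 0$. The restriction $\gamma<\tfrac{1}{2}$ in Assumption~\ref{assumption: weight function} is precisely what keeps $m^{-1}k^{1/2}\rho(k/m)$ uniformly bounded on $\{k\leq Nm\}$ and leaves essentially no slack in the scaling. Since $\{U_t\}$ is a smooth bounded functional of the Markov chain $\{\bm{Y}_t\}$ but not itself a martingale difference, making the H\'ajek--R\'enyi bound rigorous will likely proceed through a Poisson-equation / martingale-approximation decomposition driven by the geometric ergodicity in Theorem~\ref{theorem:stationary&ergodic}, after which Doob's $L^2$ inequality furnishes the uniform sup-bound.
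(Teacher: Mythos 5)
Your proposal is correct and follows essentially the same route as the paper: the paper's own proof establishes the uniform $O_p(1)$ bound on $\sup_{\bm{\eta}\in\delta_{\bm{\eta}_0}}\|\triangledown^2 G_j(X_1,\bm{\eta})\|_\infty$, invokes stationarity, geometric ergodicity, and the $\sqrt{m}$-consistency of $\hat{\bm{\eta}}$, and then defers the Taylor-expansion argument to Proposition 5.2 of \cite{kirch2015use}, which is precisely the expansion, Bahadur linearization, and $T_1+T_2+T_3$ decomposition you carry out explicitly. The one step you flag as the main obstacle (the maximal bound on $\sum U_t$) can in fact be closed more cheaply than via H\'ajek--R\'enyi: the Oodaira--Yoshihara FCLT already used in Proposition \ref{proposition: wiener processes} applies to the centered, strongly mixing sequence $U_t$ and gives $\max_{k\le Nm}\|\sum_{t=m+1}^{m+k}U_t\|=O_p(\sqrt{m})$, which together with $\sup_{k\ge a_m}\rho(k/m)=O(m^{\gamma})$ and $\gamma<1/2$ suffices.
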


The proof of this lemma can be found in Section \ref{app: proof of CUSUM approximation}. It follows the proof for Proposition 5.2 in \cite{kirch2015use}. This lemma provides us the insight how we can use $S_{m,k}(\hat{\bm{\eta}})$ to estimate $\sum_{t=m+1}^{m+k}G(X_t,\bm{\eta}_0)-\frac{k}{m}\sum_{t=1}^m G(X_t,\bm{\eta}_0)$, which is a type of cumulative sum (CUSUM) statistic commonly used in change point detection, e.g. \cite{lee2003cusum}. 

\begin{proposition}
\label{proposition: wiener processes}
Under Assumption \ref{assumption: properties of W_t}-\ref{assumption: no contamination assumptions} and under $H_0$, for any fixed $N>0$, as $m \rightarrow \infty$, the process
\begin{equation*}
    \frac{1}{\sqrt{m}}\sum_{t=1}^{\floor{ms}}G(X_t,\bm{\eta}_0), \text{ for }0<s \leq N,
\end{equation*}
converges weakly to a $(l+3)$-dimensional Wiener process $W(s)$ with covariance matrix 
\begin{equation*}
    \bm{\Sigma} = \mathbbm{E}\big(G(X_1,\bm{\eta}_0)G^T(X_1,\bm{\eta}_0)\big).
\end{equation*}
$W(s)$ is defined on $D = D[0,N]$, the space of $(l+3)$-dimensional functions that are right-continuous and have left-hand limits, with $s \in [0,N]$\footnote{The details of the definition of the space can be found \cite{oodaira1972functional}.}. 
\end{proposition}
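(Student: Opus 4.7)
The plan is to establish this invariance principle by verifying the hypotheses of a functional central limit theorem for stationary ergodic multivariate martingale difference sequences, and then concluding weak convergence in the Skorokhod space $D[0,N]$.

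First, I would collect the ingredients already available in the paper. By Lemma \ref{lemma: G is MDS}, $\{G(X_t,\bm{\eta}_0)\}$ is a martingale difference sequence with respect to the filtration $\{\mathcal{C}_t\}$, and Lemma \ref{lemma: bounded fourth moment of G} gives $\mathbbm{E}\|G(X_1,\bm{\eta}_0)\|^4<\infty$, which is more than enough to ensure $\bm{\Sigma}=\mathbbm{E}(G(X_1,\bm{\eta}_0)G^T(X_1,\bm{\eta}_0))$ is finite. Since $G(X_t,\bm{\eta}_0)$ is a continuous function of the pair of consecutive states $(\bm{Y}_{t-1},\bm{Y}_t)=((X_{t-1},\bm{W}_{t-1}),(X_t,\bm{W}_t))$ of the Markov chain shown in Theorem \ref{theorem:stationary&ergodic} to be strictly stationary and geometrically ergodic, the vector sequence $\{G(X_t,\bm{\eta}_0)\}$ is itself strictly stationary and ergodic.

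Next, I would reduce to a scalar problem via the Cram\'er--Wold device. For each fixed $\bm{\lambda}\in\mathbbm{R}^{l+3}$, consider
$$
U_m(s):=\frac{1}{\sqrt{m}}\sum_{t=1}^{\floor{ms}} \bm{\lambda}^T G(X_t,\bm{\eta}_0),\qquad 0\leq s\leq N.
$$
The summands $\bm{\lambda}^T G(X_t,\bm{\eta}_0)$ form a scalar, stationary, ergodic martingale difference sequence with finite second (indeed fourth) moment, so the functional martingale invariance principle (for example, Billingsley's stationary-ergodic MDS invariance principle, or the functional version of the Brown (1971) MCLT already used in Theorem \ref{theorem: CLT}) applies. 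Its two hypotheses are: (i) the conditional variances average correctly, $\frac{1}{m}\sum_{t=1}^{\floor{ms}}\mathbbm{E}((\bm{\lambda}^T G(X_t,\bm{\eta}_0))^2\mid\mathcal{C}_t)\overset{p}{\rightarrow} s\bm{\lambda}^T\bm{\Sigma}\bm{\lambda}$; and (ii) a conditional Lindeberg condition. Condition (i) follows from the ergodic theorem applied to $\mathbbm{E}((\bm{\lambda}^T G(X_t,\bm{\eta}_0))^2\mid\mathcal{C}_t)$, which is a measurable function of the stationary ergodic chain $\{\bm{Y}_t\}$, while (ii) follows from the uniform integrability of $(\bm{\lambda}^T G(X_t,\bm{\eta}_0))^2$ provided by the finite fourth moment. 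Thus $U_m(s)$ converges weakly in $D[0,N]$ to a Brownian motion with variance $s\bm{\lambda}^T\bm{\Sigma}\bm{\lambda}$.

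The main obstacle I anticipate is handling the functional (rather than merely finite-dimensional) aspect of the convergence, namely tightness in $D[0,N]$. For this I would rely on the martingale maximal inequality together with stationarity: for $s_1<s_2$,
$$
\mathbbm{E}|U_m(s_2)-U_m(s_1)|^2 \;\leq\; \frac{1}{m}(\floor{ms_2}-\floor{ms_1})\,\bm{\lambda}^T\bm{\Sigma}\bm{\lambda} \;\leq\; C(s_2-s_1+1/m),
$$
which, combined with Doob's inequality applied to the martingale structure of the increments, yields the standard modulus-of-continuity bound needed for tightness in $D[0,N]$ (as in Theorem 18.3 of Billingsley or the proof of Proposition 5.2 in \cite{kirch2015use}). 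Having finite-dimensional convergence along every linear combination together with tightness, the Cram\'er--Wold theorem elevates this to joint weak convergence of the multivariate process $m^{-1/2}\sum_{t=1}^{\floor{ms}} G(X_t,\bm{\eta}_0)$ to the $(l+3)$-dimensional Wiener process $W(s)$ with covariance matrix $\bm{\Sigma}$ on $D[0,N]$, as required.
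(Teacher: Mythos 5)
Your proposal is correct, but it takes a genuinely different route from the paper. The paper does not invoke a martingale functional CLT at all: it first upgrades the geometric ergodicity of the strictly stationary chain $\{\bm{Y}_t\}$ to absolute regularity with exponentially decaying coefficients (Theorem 3.7 of Bradley), hence to strong mixing via $2\alpha(n)\leq\beta(n)$, and then applies the Oodaira--Yoshihara functional CLT for strong-mixing sequences; the MDS property of $G$ enters only to kill the cross-covariance terms so that the long-run covariance matrix collapses to $\bm{\Sigma}=\mathbbm{E}(G G^T)$. You instead make the martingale structure the engine of the argument, applying the stationary-ergodic MDS invariance principle (Billingsley) after a Cram\'er--Wold reduction. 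Your route is arguably more economical here: it needs only stationarity, ergodicity of $G$ as a function of consecutive states of the chain, and finite second moments, with the fourth moment from Lemma \ref{lemma: bounded fourth moment of G} supplying the conditional Lindeberg condition for free; it does not require quantifying a mixing rate. The paper's route is more robust in the sense that it would survive even if the MDS property failed (one would just keep the cross terms in $\bm{\Sigma}$), and it reuses machinery (mixing from geometric ergodicity) that is standard for Markov chains. One small caution on your tightness sketch: the increment bound $\mathbbm{E}|U_m(s_2)-U_m(s_1)|^2\leq C(s_2-s_1+1/m)$ is of order $|s_2-s_1|$ only, which by itself is not a sufficient modulus-of-continuity criterion; the actual tightness in the stationary-ergodic MDS invariance principle comes from the finer argument inside that theorem (or from a fourth-moment increment bound, which your Lemma \ref{lemma: bounded fourth moment of G} would also support). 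Since you ultimately invoke the established theorem rather than reproving tightness from the displayed inequality, the argument stands.
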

The proof is in Section \ref{app: proof of wiener process}. 

Here we follow the definition of high-dimensional Wiener process on page 49 of  \cite{evans2012introduction}, from which the standardized $l$-dimensional Wiener process, $\bm{\Sigma}^{-1/2}W(s)$, can be considered as a vector of $n$ standardized independent 1-dimensional Wiener process. Analogous to the definition of 1-dimensional Wiener process, at $s$, the $l$-dimensional Wiener process $W(s) \sim N(\bm{0}, s \bm{\Sigma})$. Also, the increment $W(s+u)-W(s), u \geq 0$ is independent with the past values $W(t), t \leq s$. 

\begin{theorem}
\label{theorem: null}
Under Assumption \ref{assumption: properties of W_t}-\ref{assumption: weight function} and under $H_0$, for any fixed symmetric positive definite matrix $\bm{A}$, the distribution of the test statistic 
\begin{equation*}
\begin{aligned}
    \underset{m \rightarrow \infty}{\lim} P\left(\underset{1\leq k \leq Nm}{\sup} w^2(m,k)S_{m,k}(\hat{\bm{\eta}})^T\bm{A}S_{m,k}(\hat{\bm{\eta}}) \leq c    \right) \\
     = &P \left( \underset{0<s\leq N}{\sup} \rho^2(s)(W_1(s)-sW_2(1))^T\bm{A}(W_1(s)-sW_2(1))\leq c \right),
\end{aligned}
\end{equation*}
where $W_1(.)$ and $W_2(.)$ are independent Wiener processes with the same covariance matrix $\bm{\Sigma}$.
\end{theorem}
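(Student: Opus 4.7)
The plan is to reduce the test statistic under $H_0$ to a continuous functional of partial sums of $G(X_t,\bm{\eta}_0)$, and then combine Lemma \ref{proposition: CUSUM approximation}, Proposition \ref{proposition: wiener processes}, and the continuous mapping theorem (CMT) to pass to the limit.

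First, I would use Lemma \ref{proposition: CUSUM approximation} to replace $S_{m,k}(\hat{\bm{\eta}})$ by the CUSUM-type sum
\begin{equation*}
    V_{m,k} := \sum_{t=m+1}^{m+k} G(X_t, \bm{\eta}_0) - \frac{k}{m}\sum_{t=1}^m G(X_t, \bm{\eta}_0),
\end{equation*}
uniformly in $k$, up to an $o_p(1)$ error in the weighted norm. Writing $w(m,k) = m^{-1/2}\rho(k/m)$ for $k \geq a_m$ and setting $s = k/m$, the test statistic becomes asymptotically equivalent to
\begin{equation*}
    \sup_{a_m/m \leq s \leq N}\rho^2(s)\, U_m(s)^T \bm{A}\, U_m(s), \qquad U_m(s) := \frac{1}{\sqrt{m}} V_{m,\floor{ms}}.
\end{equation*}

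Next, I would invoke Proposition \ref{proposition: wiener processes} on the extended interval $[0, 1+N]$ to obtain $m^{-1/2}\sum_{t=1}^{\floor{mu}}G(X_t,\bm{\eta}_0) \Rightarrow W(u)$ in $D[0,1+N]$, where $W$ is an $(l+3)$-dimensional Wiener process with covariance $\bm{\Sigma}$. Setting $W_2(1) := W(1)$ and $W_1(s) := W(1+s) - W(1)$ for $s \in [0,N]$, the independent-increments property of $W$ makes $W_1$ a Wiener process with covariance $\bm{\Sigma}$ that is independent of $W_2(1)$. Splitting
\begin{equation*}
    U_m(s) = \Bigl(\frac{1}{\sqrt{m}}\sum_{t=1}^{m+\floor{ms}} G - \frac{1}{\sqrt{m}}\sum_{t=1}^m G\Bigr) - \frac{\floor{ms}}{m}\cdot \frac{1}{\sqrt{m}}\sum_{t=1}^m G,
\end{equation*}
the functional CLT yields $U_m(\cdot) \Rightarrow W_1(\cdot) - (\cdot)\, W_2(1)$ in $D[0,N]$.

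Finally, I would apply the CMT to the functional $\Phi(f) := \sup_{0<s\leq N}\rho^2(s)\, f(s)^T \bm{A}\, f(s)$. The main obstacle is that $\rho$ may blow up near $0$: Assumption \ref{assumption: weight function} only gives $\rho(s) = O(s^{-\gamma})$ with $\gamma < 1/2$, so $\Phi$ is not continuous on the full Skorohod space. My plan is a standard truncation argument. For each $\epsilon > 0$, $\rho$ is bounded on $[\epsilon, N]$, so the ordinary CMT yields convergence of $\sup_{\epsilon \leq s \leq N}\rho^2(s)\,U_m(s)^T\bm{A}\,U_m(s)$ to its analogue with $U_m$ replaced by $W_1 - sW_2(1)$. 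The contribution of $s \in (a_m/m, \epsilon]$ must then be shown to be $o_p(1)$ uniformly in $m$ and to vanish in the limit as $\epsilon \to 0$: on the limit side, L\'evy's modulus of continuity gives $\|W_1(s) - sW_2(1)\|^2 = O(s)$ near $0$, hence $\rho^2(s)\|\cdot\|^2 = O(s^{1-2\gamma}) \to 0$; on the pre-limit side I would use a H\'ajek--R\'enyi / Doob-type maximal inequality for the martingale partial sums of $G(X_t,\bm{\eta}_0)$ (relying on Lemma \ref{lemma: G is MDS} and the bounded fourth moment in Lemma \ref{lemma: bounded fourth moment of G}) to obtain an analogous bound. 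Sending $\epsilon \to 0$ after $m \to \infty$, and noting that $k < a_m$ contributes nothing since $w \equiv 0$ there, identifies the stated limit distribution.
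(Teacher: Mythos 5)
Your proposal follows essentially the same route as the paper's proof: first the CUSUM approximation lemma to replace $S_{m,k}(\hat{\bm{\eta}})$ by the partial sums at $\bm{\eta}_0$, then the functional CLT of Proposition \ref{proposition: wiener processes} with $W_1$ and $W_2$ obtained from independent increments of a single Wiener process, and finally the continuous mapping theorem for the weighted supremum. The one substantive difference is that you are more careful near $s=0$: the paper simply asserts continuity of $\rho^2(s)\,(W_1(s)-sW_2(1))^T\bm{A}(W_1(s)-sW_2(1))$ on $[0,N]$, which is not immediate when $\gamma>0$ since $\rho(s)\sim s^{-\gamma}$ is unbounded there, and your truncation argument combining L\'evy's modulus of continuity on the limit side with a H\'ajek--R\'enyi-type maximal inequality for the martingale partial sums on the pre-limit side supplies exactly the justification the paper's proof glosses over.
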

The proof can be found in Section \ref{app: proof of the null}.

From Lemma \ref{proposition: CUSUM approximation}, we know that $w(m,k)S_{m,k}(\hat{\bm{\eta}})$ can be used to estimate $$w(m,k)(\sum_{t=m+1}^{m+k}G(X_t,\bm{\eta}_0)-\frac{k}{m}\sum_{t=1}^m G(X_t,\bm{\eta}_0)),$$ which is based on the unknown true parameter vector $\bm{\eta}_0 = (\tau_0,\bm{\beta}_0)$. As $m \rightarrow \infty$, from Proposition \ref{proposition: wiener processes}, $\frac{1}{\sqrt{m}}\sum_{t=1}^{\floor{ms}}G(X_t,\bm{\eta}_0)$ converges in distribution to a Wiener process. Therefore, we prove that $w(m,k)S_{m,k}(\hat{\bm{\eta}})$ converges in distribution to a linear combination of the two independent Wiener processes $W_1(.)$ and $W_2(.)$.

\section{Power analysis of the hypothesis test}
\label{sec: power analysis}
In this section, we will prove that when a change point occurs, as $m \rightarrow \infty$, the probability of rejecting the null hypothesis is 1. 
We first propose some assumptions for the change point and the process after the change point.
\begin{assumption}
\label{assumption: power analysis}
(a) The change point happens at $m+k^*$, with $k^* = \lfloor m \nu \rfloor$, where $l<\nu<N$. $l$ and $N$ are fixed and free from $m$. \\
(b) There exists a $x_0$, $\nu<x_0<N$, and a compact neighborhood of $x_0$, $U_{x_0}$, such that $\frac{\lfloor mx_0 \rfloor}{m} \in U_{x_0}$ and $\underset{x \in U_{x_0}}{\inf}\rho(x)>0$, where $\rho(.)$ is the second term of the weight function $
w(m,k)$. \\
(c) Denote the process after the change point as $\{X_t^*\}, t = m+k^*+1,...,m+Nm$. There exists a $(l+3)$-dimensional constant vector $\bm{E}_{G,H_a}$ such that
\begin{equation*}
    \frac{1}{(x_0-\nu)m}\sum_{t=m+k^*+1}^{m+\lfloor mx_0 \rfloor}G(X^*_t, \hat{\bm{\eta}}) \overset{P}{\rightarrow}\bm{E}_{G,H_a},
\end{equation*}
and $\bm{A}^{1/2}\bm{E}_{G,H_a} \neq \bm{0}$.
\end{assumption}
(a) guarantees that the change point would not happen right after the monitoring procedure starts. (c) first makes sure that after the change point $k^*$, $G$ is still ergodic. Also, (b) and (c) together guarantee that the test statistic after the change point would not be equal to zero. (c) excludes the cases that $\bm{E}_{G,H_a} = \bm{0}$ after the change point. In other words, the change should be identified by using the score vector $G$ based on the PMLE $\hat{\bm{\eta}}$ estimated based on the $m$ initial non-contaminated observations. In the meantime, it constrains the choice of $\bm{A}$ so that $\bm{A}^{1/2}\bm{E}_{G,H_a} \neq \bm{0}$. However, it can be naturally satisfied when $\bm{A}$ is a positive definite matrix, which is required in Theorem \ref{theorem: null}.
\begin{theorem}
\label{theorem: power analysis}
Under Assumption \ref{assumption: properties of W_t}-\ref{assumption: power analysis}, the power of the hypothesis test based on (\ref{equation: test statistic}) converges to 1 as $m\rightarrow \infty$. That is,
\begin{equation*}
    \underset{m \rightarrow \infty}{\lim}P\left(\underset{1\leq k \leq Nm}{\sup} w^2(m,k)S_{m,k}(\hat{\bm{\eta}})^T\bm{A}S_{m,k}(\hat{\bm{\eta}}) \geq c  |H_a  \right) = 1.
\end{equation*}
\end{theorem}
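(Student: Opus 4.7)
The plan is to lower-bound the supremum by evaluating the test statistic at a single well-chosen index $k_m = \lfloor m x_0 \rfloor$ where $x_0$ is the point guaranteed by Assumption \ref{assumption: power analysis}(b). Because $x_0 \in (\nu, N)$, $k_m$ lies in the admissible range $[1, Nm]$ for all large $m$, and since the supremum always dominates the value at $k_m$, it suffices to show that $w^2(m,k_m) S_{m,k_m}(\hat{\bm{\eta}})^T \bm{A} S_{m,k_m}(\hat{\bm{\eta}}) \to \infty$ in probability.

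The first step is to split the score at $k_m$ into pre-change and post-change pieces,
\begin{equation*}
S_{m,k_m}(\hat{\bm{\eta}}) \;=\; \underbrace{\sum_{t=m+1}^{m+k^*} G(X_t, \hat{\bm{\eta}})}_{T_1(m)} \;+\; \underbrace{\sum_{t=m+k^*+1}^{m+k_m} G(X^*_t, \hat{\bm{\eta}})}_{T_2(m)}.
\end{equation*}
For $T_2(m)$, Assumption \ref{assumption: power analysis}(c) immediately yields $T_2(m)/m \overset{P}{\to} (x_0 - \nu) \bm{E}_{G,H_a}$. For $T_1(m)$, the observations $X_{m+1}, \ldots, X_{m+k^*}$ still follow the model with parameter $\bm{\eta}_0$, so this block behaves as under $H_0$; combining Lemma \ref{proposition: CUSUM approximation}, Proposition \ref{proposition: wiener processes}, and the consistency $\hat{\bm{\eta}} \overset{a.s.}{\to} \bm{\eta}_0$ (Theorem \ref{theorem: consistency}), one gets $T_1(m) = O_P(\sqrt{m})$. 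Hence $S_{m,k_m}(\hat{\bm{\eta}}) = m(x_0 - \nu)\bm{E}_{G,H_a} + o_P(m)$.

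Writing $\lambda := \bm{E}_{G,H_a}^T \bm{A} \bm{E}_{G,H_a}$, the condition $\bm{A}^{1/2}\bm{E}_{G,H_a} \neq \bm{0}$ together with positive definiteness of $\bm{A}$ gives $\lambda > 0$, so expanding the quadratic form yields
\begin{equation*}
S_{m,k_m}(\hat{\bm{\eta}})^T \bm{A} S_{m,k_m}(\hat{\bm{\eta}}) \;=\; m^2 (x_0-\nu)^2 \lambda \;+\; o_P(m^2).
\end{equation*}
By Assumption \ref{assumption: weight function}, $w^2(m,k_m) = m^{-1} \rho^2(k_m/m)$; by continuity of $\rho$ and Assumption \ref{assumption: power analysis}(b), $\rho(k_m/m) \to \rho(x_0) \geq \inf_{x \in U_{x_0}} \rho(x) > 0$. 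Therefore
\begin{equation*}
w^2(m,k_m)\, S_{m,k_m}(\hat{\bm{\eta}})^T \bm{A} S_{m,k_m}(\hat{\bm{\eta}}) \;=\; m\,\rho^2(x_0)\,(x_0-\nu)^2 \lambda \;+\; o_P(m) \;\overset{P}{\to}\; \infty,
\end{equation*}
which, after taking the supremum over $k \in [1, Nm]$, delivers the claim for every fixed threshold $c$.

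The main obstacle is controlling $T_1(m)$ cleanly: Lemma \ref{proposition: CUSUM approximation} is stated globally under $H_0$, whereas here $H_0$ holds only on the initial sub-block $\{m+1, \ldots, m+k^*\}$. The fix is to note that the lemma's proof localizes to any index range over which the data are generated from the $\bm{\eta}_0$-model, so it applies verbatim with $Nm$ replaced by $k^*$; alternatively, one can Taylor-expand $G(X_t, \hat{\bm{\eta}})$ around $\bm{\eta}_0$ and invoke a uniform law of large numbers for $\nabla G$ over a shrinking neighborhood, using the continuity and moment bounds established in Remark \ref{remark: continuities of the closed-forms} and Lemma \ref{lemma: bounded fourth moment of G}. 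Either route produces the crude bound $T_1(m) = o_P(m)$, which is all the subsequent calculation requires.
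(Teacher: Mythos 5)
Your proposal is correct and follows essentially the same route as the paper's proof: lower-bound the supremum at the single index $\lfloor m x_0\rfloor$, split the score into the pre-change block (shown to be $o_P(m)$) and the post-change block (which converges to $(x_0-\nu)\bm{E}_{G,H_a}$ after dividing by $m$ via Assumption \ref{assumption: power analysis}(c)), and conclude divergence from $\rho^2(x_0)>0$ and $\bm{A}^{1/2}\bm{E}_{G,H_a}\neq\bm{0}$. Your treatment of the pre-change block is in fact slightly more careful than the paper's, which simply invokes ergodicity and consistency to get $\bm{S}_{H_0}(m,k^*)=o_P(m)$.
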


The proof can be found in Section \ref{app: proof of power analysis}.

We can eventually reject the null hypothesis if there exists a change point satisfying Assumption \ref{assumption: power analysis}.

\section{Simulation study}
\label{sec: simulation}
In Section \ref{sec: Model definition}, we prove the consistency and asymptotic normality of the estimated parameter vector $\hat{\bm{\eta}}$ based on the first $m$ observations, as $m \rightarrow \infty$. Limiting results for the test statistic under $H_0$ and $H_a$ are shown in Section \ref{sec: monitoring scheme and null} and Section \ref{sec: power analysis}, respectively. In this section, we run simulation studies to verify the properties of $\hat{\bm{\eta}}$ and the asymptotic behaviors of the test statistic under $H_0$ and $H_a$.
\subsection{Consistency and asymptotic normality of $\hat{\bm{\eta}}$}
\label{subsec: simulation-consistency and normality}
To assess the consistency and the asymptotic normality of the estimated parameter vector, we first generate compositional time series $\{X_t\}$ from the model
\begin{equation}
\label{equation: simulation consistency model}
X_t \sim \text{Beta} (100 \mu_t, 100(1-\mu_t)).
\end{equation}
The exogenous variable $\{W_t\}$ follows an AR(1) model
\begin{equation*}
    W_t = -0.1 W_{t-1}+\epsilon_{t}, \epsilon_t \sim N(0,1).
\end{equation*}
To guarantee the boundness of $\{W_t\}$ in order to satisfy Assumption \ref{assumption: properties of W_t}, we set up a lower bound (-10) and a upper bound (10) for $W_t$ and the transformed variable is defined as 
$$W_t^* = \min(\max(-10, W_t), 10).$$
The choices of the lower bound and the upper bound can be flexible. The transformed variable $W^*_t$ can be unchanged if we choose a small enough lower bound and a large large upper bound. 

In this case, we choose $\mathcal{A}(x) = \text{logit}(x^*)$ where $x^* = \min\{\max(c,x),1-c\}$, which is one of the options of $\mathcal{A}(x)$ discussed in Section \ref{sec: Model definition} and has been investigated in the literature, e.g. \cite{woodard2010stationarity}. 
The conditional mean $\mu_t$ is determined by
\begin{equation}
\label{equation: determination of mu_t}
        \text{logit}(\mu_t) = -0.6+0.1 \text{logit}(X_{t-1}^*)+0.1 W_t^*,
\end{equation}
with the transformed variable $X_t^*$ defined as $$X_{t}^* = \min(\max(0.01, X_t), 0.99).$$ 
That is, we generate $\{X_t\}$ from a generalized Beta AR(1) model defined by (\ref{equation: model definition}) and (\ref{eq: link function of GBETAAR}), with parameter vector $\bm{\eta}_0 = (\tau_0, \bm{\beta}_0) = (\tau_0, \varphi_0, \varphi_1, \phi) = (100, -0.6, 0.1, 0.1)$ and threshold $c = 0.01$. 

In order to check the consistency of $\hat{\bm{\eta}}$, we implement three experiments with increasing numbers of observations. For each experiment, we generate 100 samples following (\ref{equation: simulation consistency model}) and (\ref{equation: determination of mu_t}). The three experiments include samples with the number of observations $m = 1000, 2000, 3000$, respectively. We are able to estimate $\hat{\bm{\eta}}$ based on every sample. Under consistency, $\hat{\bm{\eta}}$ should be closer to the true parameter $\bm{\eta}_0$ as the number of observations increases. Therefore, for each experiment, we summarize the mean square error (MSE) of the $i$th element of the PMLEs obtained from the 100 samples, 
\begin{equation*}
    \text{MSE}_i = \frac{1}{100} \sum_{k=1}^{100}(\hat{\bm{\eta}}_{k,i}-\bm{\eta}_{0,i})^2,
\end{equation*}
in order to check the consistency of $\hat{\bm{\eta}}.$ The complete results are listed in Table \ref{table: consistency}. As $m$ increases, the MSEs for all the four parameters decrease, which validates the consistency of $\hat{\bm{\eta}}.$
\begin{table}[!h]
\centering
\begin{tabular}{l|llll}
\hline
MSE        & $\tau_0$  & $\varphi_0$ & $\varphi_1$ & $\phi$  \\ \hline
$m = 1000$ & 542.78966 & 0.20006     & 0.12186     & 0.00436 \\
$m = 2000$ & 12.60172  & 0.00922     & 0.00505     & 0.00014 \\
$m = 3000$ & 7.85109   & 0.00168     & 0.00128     & 0.00003 \\ \hline
\end{tabular}
\caption{MSEs of the PMLEs from the three experiments.}
\label{table: consistency}
\end{table}

Thanks to the ergodicity of the process, we are able to estimate the variance-covariance matrix of $\hat{\bm{\eta}}$ by generating a sample with enough observations. Following (\ref{equation: simulation consistency model}) and (\ref{equation: determination of mu_t}), we generate a sample with 100,000 observations. According to Theorem \ref{theorem: CLT}, as $m \rightarrow \infty$, the variance-covariance matrix of $\sqrt{m}(\hat{\bm{\eta}}- \bm{\eta}_0)$ converges to $(-\mathbbm{E}\triangledown G(X_1, \bm{\eta}_0))^{-1}$. Denote the estimated matrix as $\hat{\bm{\Sigma}}^{-1}_0,$ with $n = 100,000$,
\begin{equation*}
    \hat{\bm{\Sigma}}^{-1}_0 = \Big (\frac{1}{n}\sum_{t=1}^n (-\triangledown G(X_t, \hat{\bm{\eta}})) \Big) ^{-1} \overset{P}{\rightarrow} (-\mathbbm{E}\triangledown G(X_1, \bm{\eta}_0))^{-1}.
\end{equation*}
The expression of $\triangledown G(X_t, \hat{\bm{\eta}})$ can be found in Section \ref{subsec: Closed-form expressions of the partial log-likelihood function, the score vector and the Hessian matrix}. For checking the asymptotic normality, we first generate 1,000 samples from the model (\ref{equation: simulation consistency model}). For each sample, there are 1,001 observations $(X_0,X_1,...,X_{1000})$. That is, $m=1000$. Then, for sample $i$, we can get the PMLE $\bm{\hat{\eta}}_i$. We use the multivariate normality test introduced in \cite{jiang2015likelihood} (Test 2.5) to test the null hypothesis
\begin{equation*}
    \hat{\bm{\eta}}_i \sim N(\bm{\eta}_0, \frac{1}{m}\hat{\bm{\Sigma}}^{-1}_0).
\end{equation*}
The $p$-value of the test based on the samples is 0.71, which verifies the asymptotic normality of $\hat{\bm{\eta}}.$ For each parameter, we also get the quantile-quantile plot (Q-Q plot) for visualizing the marginal distribution of the parameters. Theoretically, the marginal distribution of each parameter is a normal distribution. From the four Q-Q plots shown in Figure \ref{fig: qqplots}, we can see that the marginal distributions can be considered as normal distributions.

\begin{figure}[!h]
    \centering
    \includegraphics[width=0.4\textwidth]{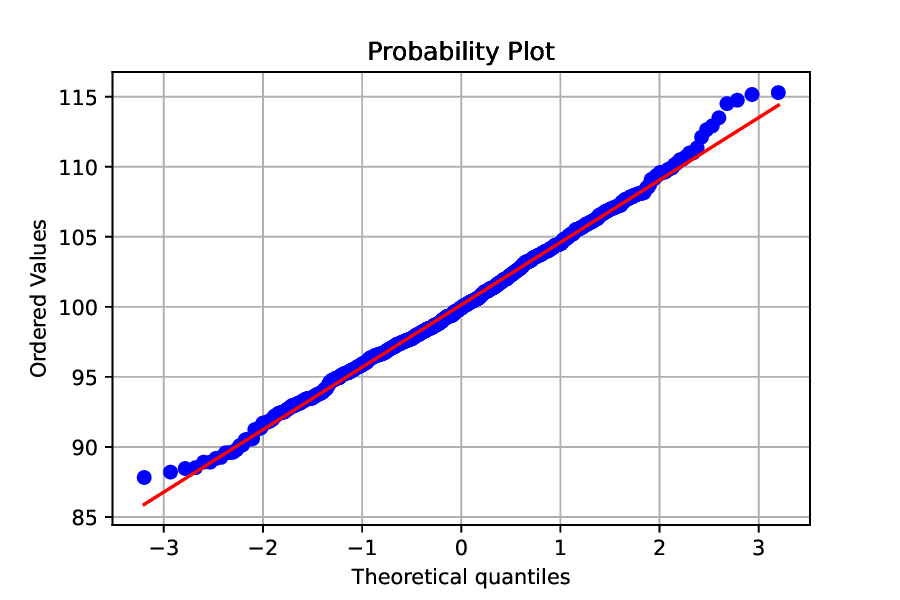}
    \includegraphics[width=0.4\textwidth]{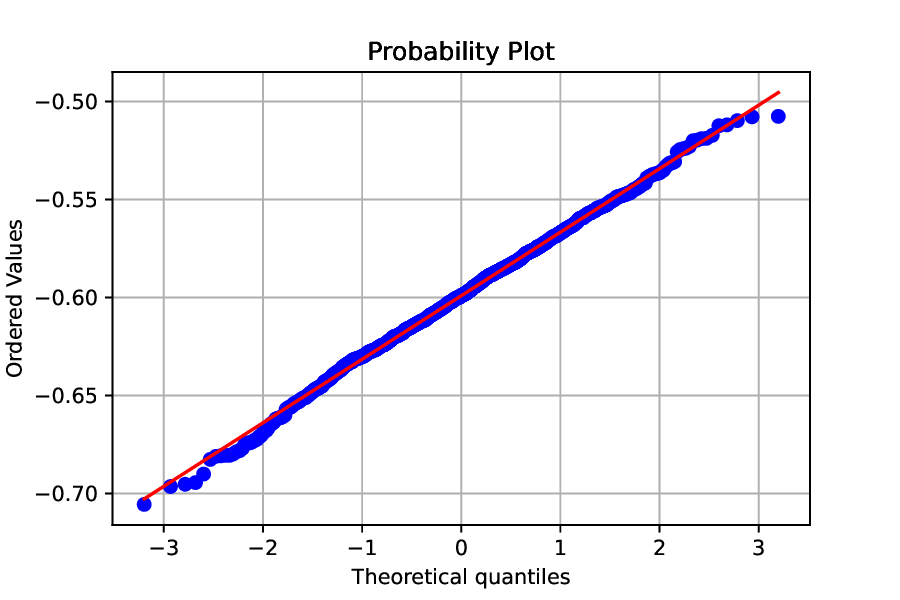}
    \includegraphics[width=0.4\textwidth]{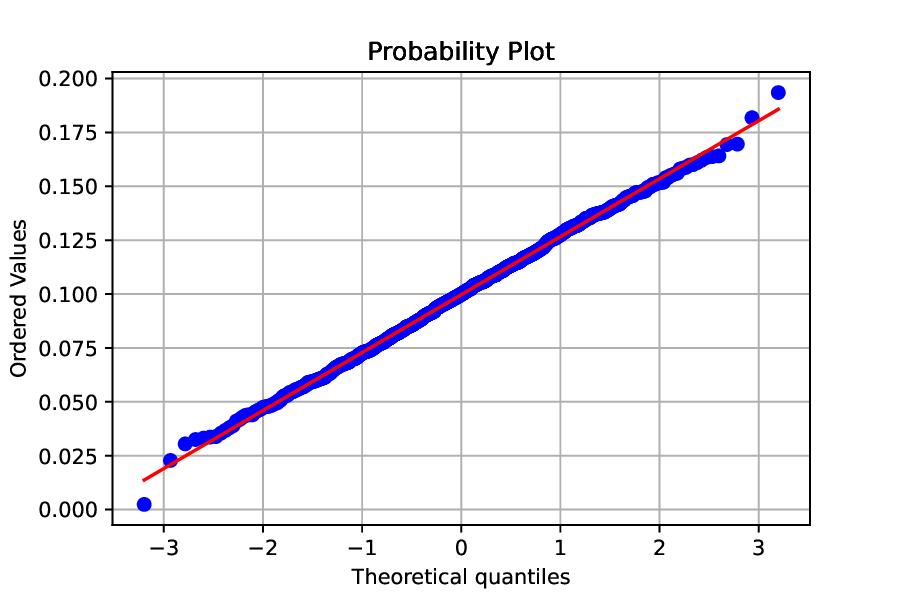} 
    \includegraphics[width=0.4\textwidth]{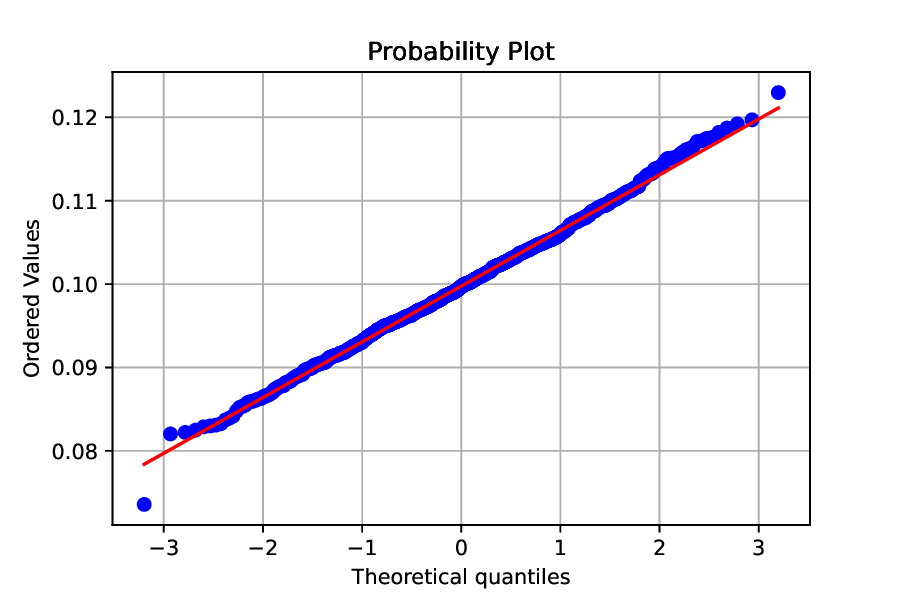} 
    \caption{The Q-Q plots of the four estimated parameters based on the 1000 replications. The true parameter vector is [100, -0.6, 0.1, 0.1].}
\label{fig: qqplots}
\end{figure}

\subsection{Asymptotic results under the null}
\label{sec: asymptotic results under the null}
We use the same model (\ref{equation: simulation consistency model}) to check the validity of the asymptotic distribution under the null hypothesis, as shown in Theorem \ref{theorem: null}. As mentioned in Section \ref{sec: monitoring scheme and null}, the test statistic would be analogous to the score test statistic if $\bm{A} = \bm{\Sigma}^{-1}_0,$ which is the inverse of the information matrix of the partial log-likelihood function. The idea has been used in building test statistics for other change-point detection problems, e.g. \cite{huvskova2005monitoring} and \cite{hudecova2013structural}. In this case, we assign the estimated inverse of the theoretical information matrix $\hat{\bm{\Sigma}}^{-1}_0$ calculated in Section \ref{subsec: simulation-consistency and normality} to the $\bm{A}$ in the test statistic. 

We choose the weight function 
\begin{equation}
    \label{eq: definition of the weight function}
    \omega(m,k, \gamma) = m^{-\frac{1}{2}}(1+\frac{k}{m})^{-1}(\frac{k}{m+k})^{-\gamma}, 0\leq \gamma< \frac{1}{2},
\end{equation}
for the simulations in this chapter. Following the notations in Assumption \ref{assumption: weight function}, we denote 
\begin{equation*}
    \rho(s,\gamma) = s^{-\gamma}(s+1)^{\gamma-1},
\end{equation*}
then 
\begin{equation*}
    \omega(m,k, \gamma) = m^{-\frac{1}{2}}\rho(k/m, \gamma), 0\leq \gamma< \frac{1}{2}.
\end{equation*}
With a fixed $\gamma$ and the significance level $\alpha$, the threshold $c(\gamma,\alpha)$ can be determined so that
$$P(\underset{0< s \leq N}{\sup}\rho^2(s,\gamma)(W_1(s)-sW_2(1))^T\bm{A}(W_1(s)-sW_2(1))\leq c(\gamma,\alpha)) = 1-\alpha.$$
According to Theorem \ref{theorem: null}, $W_1(.)$ and $W_2(.)$ are Wiener processes with covariance matrix equal to $\bm{\Sigma}_0$.

In order to investigate the influence of $\gamma$ on the sensitivity of the change point detection approach, we choose $\gamma = 0, 0.25, 0.4$ for the simulation. With $N = 3$, for every $\gamma$, 10,000 samples of $\rho^2(s,\gamma)(W_1(s)-sW_2(1))^T\bm{A}(W_1(s)-sW_2(1)), 0<s\leq N$ are generated, and $\hat{\bm{\Sigma}}_0$ is used as the estimated covariance matrix of $W_1(.)$ and $W_2(.)$. Following the generating method we mentioned in Theorem \ref{theorem: null}, we choose $m = 1000$ and generate $N*m = 3000$ i.i.d. normal random vectors following $N(\bm{0}, \hat{\bm{\Sigma}}_0)$. 
The cumulative sum of the first $k$ random variables divided by $\sqrt{m}$ simulates $W_1(k/m), k = 1,2,...,Nm.$  Similarly, we are able to simulate a $W_2(1)$ by generating a random vector following $N(\bm{0}, \hat{\bm{\Sigma}}_0)$. For every iteration, we repeat the procedure to build $\rho^2(s, \gamma)(W_1(s)-sW_2(1))^T\bm{A}(W_1(s)-sW_2(1)), s = \frac{1}{m}, \frac{2}{m},...,\frac{Nm}{m}$, and record the supremum over $s$. Then, the $(1-\alpha)$ quantile of the supremum of the 10,000 samples is considered as the threshold $c(\gamma, \alpha)$.
In Table \ref{table:null c }, we list the threshold $c(\gamma,\alpha)$ for different $(\gamma,\alpha)$ with $N = 3$.
\begin{table}[!htb]
\centering
\begin{tabular}{lllll}
\hline
$\alpha$        & 0.1    & 0.05   & 0.025  & 0.01    \\ \hline
$\gamma = 0$    & 6.7396 & 7.9931 & 9.1888 & 10.5312  \\
$\gamma = 0.25$ & 8.4479 & 9.9127 & 11.3129 & 13.0243  \\
$\gamma = 0.4$  & 10.4888 & 12.0926 & 13.6117 & 16.0009 \\ \hline
\end{tabular}
\caption{$c(\gamma,\alpha)$ under different $\gamma$ and significance level $\alpha$}\label{table:null c }
\end{table}

\noindent
With $N = 3$, $m=500, 1000,1500$ and $\gamma = 0,0.25,0.4$, the empirical probability that the test statistic crossing the threshold 
$$\underset{1\leq k \leq Nm}{\sup} w^2(m,k)S_{m,k}(\hat{\bm{\eta}})^T\bm{A}S_{m,k}(\hat{\bm{\eta}}) \geq c(\gamma, \alpha)$$
are obtained based on 5,000 repetitions, and the comparisons of the empirical probabilities and the theoretical $\alpha$'s are listed in Table \ref{table:null alpha compare}.
\begin{table}[!htb]
\centering
\begin{tabular}{llllll}
\hline
\multicolumn{2}{c}{$\alpha$}                                & 0.1    & 0.05                       & 0.025  & 0.01   \\ \hline
\multicolumn{1}{c}{\multirow{3}{*}{$\gamma = 0$}} & $m=500$ & 0.1158 & 0.0644& 0.0398& 0.0184 \\
\multicolumn{1}{c}{}                              & $m=1000$ & 0.1018 & 0.0574 & 0.0328 & 0.0162 \\
\multicolumn{1}{c}{}                              & $m=1500$ & 0.1062 & 0.0576 & 0.0298 & 0.0152 \\ \hline
\multirow{3}{*}{$\gamma = 0.25$}                  & $m=500$ & 0.1252& 0.0696& 0.0398& 0.0194 \\
                                                  & $m=1000$ & 0.1106& 0.0592& 0.0358& 0.0170 \\
                                                  & $m=1500$ & 0.1086& 0.0586& 0.0296& 0.0138 \\ \hline
\multirow{3}{*}{$\gamma = 0.4$}                   & $m=500$ & 0.1756&0.1168& 0.0732& 0.0366 \\
                                                  & $m=1000$ & 0.1480 & 0.0954& 0.0594& 0.0266 \\
                                                  & $m=1500$ & 0.1456& 0.0816& 0.0488& 0.0218\\ \hline
\end{tabular}
\caption{The empirical probabilities with different $m$ and $\gamma$}\label{table:null alpha compare}
\end{table}

According to Table \ref{table:null alpha compare}, we can see the influence of $\gamma$ and the size of the initial observations $m$ on the results of the change point detection method.
\begin{itemize}
    \item With a fixed $\gamma$, as the amount of initial observations $m$ increases, the empirical probability approaches the nominal significance level $\alpha$. 
    
    It coincides with the theoretical results we proposed in Theorem \ref{theorem: null}. As $m \rightarrow \infty$, the empirical probability should converge to the theoretical probability.
    
    \item As $\gamma$ increases, it requires larger $m$ to guarantee the empirical probability is close to the nominal significance level $\alpha$. 
    
    For example, with the nominal significance level $\alpha = 0.1$, the empirical probability when $\gamma = 0$ and $m=1000$ has been already around 0.1. However, when $\gamma = 0.4$ and $m=1000$, the empirical probability is 0.1480, still far away from 0.1. In other words, with the same $m$, test statistic with larger $\gamma$ tends to provide more sensitive detection results. We will discuss this in Section \ref{sec: asymptotic results under the alternative}.
\end{itemize}

\subsection{Asymptotic results under the alternative}
\label{sec: asymptotic results under the alternative}
In order to examine the power of the test, we also generate processes including change points to see whether the test can detect the change point and how sensitive the test is. There are two simulation strategies. 
First, with different amounts of initial observations $m=100,500,1000$ and $N=3$, we generate 5,000 ($K$) processes. For every process, the first $m+50$ data points follow the Beta AR(1) model (\ref{equation: simulation consistency model}) and (\ref{equation: determination of mu_t}), and the following $Nm-50$ data points follow a Beta AR(1) model with changed parameter vector $\bm{\eta}' = (\tau', \bm{\beta}') = (\tau', \varphi'_0, \varphi'_1, \phi') = (100, -0.6, 0.2, 0.1)$. Other parameters are the same as the original model except the parameter vector $\bm{\eta}'$. That is, the change point occurs at $k^* = 50$. The parameter vector is estimated based on the first $m$ observations, and then the monitoring procedure launches from $m+1$. With $\gamma = 0, 0.25, 0.4$, we record the first time point $k$ ($1\leq k \leq Nm$) that $w^2(m,k,\gamma)S_{m,k}(\hat{\bm{\eta}})^T\bm{A}S_{m,k}(\hat{\bm{\eta}}) \geq c(\gamma,0.05)$, which is the time point we reject the no change point null hypothesis with significance level 0.05, if the $k$ exists. 

Also, we run simulation studies with $m=100$ and change point $k^* = 10,30,50$ to compare the influence of the distance between the launch point and the change point on detection sensitivity. 

For each experiment, we evaluate the performance of the sequential change point detection method based on the three metrics.
\begin{itemize}
    \item (M1) The distance between the mean of detected change point $k$ and the ground truth $k^*$.
    \item (M2) The percentage of processes that the null hypothesis is rejected.
    \item (M3) The sensitivity of the detection method, defined as
    \begin{equation}
    \label{eq: definition of sensitivity}
        \text{Sensitivity} = \frac{\sum_{i=1}^K \text{Boolean}((k-k^*)>0)}{K} \times 100\%.
    \end{equation}
    Given that the change point occurs at $k^*$, we introduce the metric to measure the percentage of processes with detected $k$ greater than $k^*$. Although we would like to detect the change point as quick as possible, it is also important to control the sensitivity of the monitoring procedure so that
    $H_0$ would not be rejected before $k^*$.
\end{itemize}
Table \ref{table: asymptotic results under the alternative1} shows the three metrics for each $(m,\gamma)$ combination with ground truth change point $k^* = 50$. 

\begin{table}[!htb]
\centering
\begin{tabular}{l|l|lll}
\hline
Metric                & $\gamma$ & $m=100$  & $m=500 $          & $m=1000$ \\ \hline
\multirow{3}{*}{(M1)} & $0$      & 65.46 & 142.99 & 180.83 \\
                      & $0.25$   & 39.68  & 90.67          & 108.86 \\
                      & $0.4$    & 13.66  & 41.33           & 50.36 \\ \hline
\multirow{3}{*}{(M2) (\%)} & $0$      & 86.50  & 100.00          & 100.00 \\
                      & $0.25$   & 88.66  & 100.00          & 100.00 \\
                      & $0.4$    & 90.26  & 100.00          & 100.00 \\ \hline
\multirow{3}{*}{(M3) (\%)} & $0$      & 73.06  & 99.92           & 100.00 \\
                      & $0.25$   & 57.24  & 93.50           & 97.98  \\
                      & $0.4$    & 39.76  & 67.38           & 75.40  \\ \hline
\end{tabular}
\caption{The three metrics for each $(m,\gamma)$ combination with $k^* = 50$.}
\label{table: asymptotic results under the alternative1}
\end{table}
Table \ref{table: asymptotic results under the alternative2} shows the three metrics for each $(k^*,\gamma)$ combination with the same amount of initial observation $m = 100$. 
\begin{table}[!htb]
\centering
\begin{tabular}{l|l|lll}
\hline
Metric                & $\gamma$ & $k^*=10$  & $k^*=30$           & $k^* = 50$ \\ \hline
\multirow{3}{*}{(M1)} & $0$      & 74.33 & 71.43 & 65.46 \\
                      & $0.25$   & 51.28 & 47.26          & 39.68 \\
                      & $0.4$    & 32.66  & 25.71          & 13.66 \\ \hline
\multirow{3}{*}{(M2) (\%)} & $0$      & 92.36 & 89.66         & 86.50 \\
                      & $0.25$   & 93.48  & 91.26         & 88.66 \\
                      & $0.4$    & 93.92  & 92.68         & 90.26 \\ \hline
\multirow{3}{*}{(M3) (\%)} & $0$      & 91.82  & 83.58         & 73.06 \\
                      & $0.25$   & 83.22  & 68.56           & 57.24  \\
                      & $0.4$    & 64.02  & 49.06           & 39.76 \\ \hline
\end{tabular}
\caption{The three metrics for each $(k^*,\gamma)$ combination with $m = 100$.}
\label{table: asymptotic results under the alternative2}
\end{table}

From Table \ref{table: asymptotic results under the alternative1} and Table \ref{table: asymptotic results under the alternative2} we can conclude the influence of different choices of $\gamma$ and $m$ on the final detection results and provide insights on real-life practice.
\begin{itemize}
    \item The detection response speed increases as $\gamma$ increases. 
    
    With fixed $m$, from Table \ref{table: asymptotic results under the alternative1}, the distance between the mean of detected change point $k$ and the ground truth $k^*$ decreases as $\gamma$ increases. That is, the detection method can detect change point sooner with higher $\gamma$. Therefore, in practice, if we assume a change point will occur soon, a relatively large $\gamma$ might be chosen so as to detect the change point as soon as possible. However, in terms of (M3), a synchronous drawback of high $\gamma$ is low sensitivity (defined as (\ref{eq: definition of sensitivity})). Since a test statistic with higher $\gamma$ is more sensitive to possible change points, it is more likely to cause false positive alarms.
    
    \item According to (M2) in Table \ref{table: asymptotic results under the alternative1}, the power of the test goes to 1 as $m$ increases.
    
    The simulation outputs validate the theoretical results proposed in Theorem \ref{theorem: power analysis}. $H_0$ is rejected for all the sample processes when $m = 500$ or $m = 1000$. However, when $m = 100$, there are processes of which the change points are not successfully detected. In practice, when a false negative alarm is really expensive, we suggest to choose $m$ as large as possible.
\end{itemize}

\section{Applications of the Generalized Beta AR(1) Model and the Sequential Change-point Detection Method} 
\label{sec: real life data}
We will apply the generalized Beta AR(1) model to two real-life data sets: 
\begin{enumerate}
    \item (D1) percentage change of the consumer price index for alcoholic beverages in U.S. (referred as CPI-AB onward) from its value 12 months prior, from January 2010 to February 2022,
    \item (D2) daily Covid-19 positivity rate in Arizona from 03/01/2021 to 10/28/2021,
\end{enumerate} and investigate the usage of the model from the two aspects, forecasting ability on (D1) and sequential change-point detection on (D2). 

The flexibility of the choice of $\mathcal{A}(x)$ allows us to fit proportional time series using different $\mathcal{A}(x)$ and choose the most appropriate function for the application. In order to illustrate the influences of different $\mathcal{A}(x)$, for both applications, we fit generalized Beta AR(1) models with three $\mathcal{A}(x)$, 
\begin{itemize}
    \item identity transformation, $\mathcal{A}(x) = x$,
    \item logit transformation, $\mathcal{A}(x) = \log \frac{x^*}{1-x^*}$,
     \item complementary log-log transformation, $\mathcal{A}(x) = \log(-\log(1-x^*))$.
\end{itemize}
$x^*$ is the truncated output defined as $x^* = \min\{\max(c,x),1-c\}$, where $c$ is the customized constant. From the practical point of view, $c$ is used to prevent overflow when $x$ is very close to (or equal to) 0 or 1. Therefore, $c$ can be 0 when none of the observations is close to 0 and 1, which is the case for both (D1) and (D2).

First, in order to indicate the effects of different choices of $\mathcal{A}(x)$ on the forecasting results, we fit generalized Beta AR(1) models to the monthly percentage change of CPI-AB with three choices of $\mathcal{A}(.)$, logit transformation, complementary log-log transformation and identity transformation. We will evaluate the performances of the models with different $\mathcal{A}(x)$ through both in-sample error metrics and out-of-sample error metrics, e.g. mean absolute error (MAE), mean absolute percentage error (MAPE) and so on. The definitions of the metrics will be introduced in Section \ref{sec: forecasting results}.

We will also fit a generalized Beta AR(1) model to the Arizona daily Covid-19 positivity rate time series, and then employ the proposed sequential change-point detection method to explore the possibility of a change point, which may be caused by the Delta variant of Covid-19. During the modeling process and the change point detection process for (D2), we use daily extreme weather indication as the exogenous variable. It is a binary variable indicating the existence of extreme weathers when it is 1, and 0 otherwise. It is based on the fact that extreme weathers affect the Covid-19 test procedure so as to have an influence on the following positivity rate (details will be discussed in Section \ref{sec: Change-point detection application to Arizona daily Covid-19 positivity rate}). We select the best of the three $\mathcal{A}(x)$ by minimizing AIC. Then, we conduct sequential change-point detection based on the estimated model. By comparing the change points detected by the method and the reports about when Delta variant started to dominate in Arizona, we can see how the sequential change-point detection method can be used in practice and provides alerts in areas like epidemiology.

\subsection{Forecasting CPI-AB percentage change using generalized Beta AR(1) model with different $\mathcal{A}(x)$}
\label{sec: forecasting results}
The consumer price index for alcoholic beverages in U.S. (CPI-AB) is an indicator measuring the overall price level of alcoholic beverages. Overall speaking, CPI-AB keeps increasing month by month. The reasons can be inflation, increase taxes on alcoholic beverages, increasing minimum pricing restrictions on alcohol and so on. Meanwhile, CPI-AB displays seasonality. Therefore, we collect the monthly percentage change of CPI-AB from 12 months prior. Because of the overall increasing trend, the percentage is always greater than 0 (and also less than 1). Also, it can remove the seasonality of the original time series. Figure \ref{fig: CPI_AB} shows the percentage change of the CPI-AB \footnote{Data source: FRED \url{https://fred.stlouisfed.org/series/CUSR0000SAF116}.} from 12 months prior from January 2010 to February 2022. 
\begin{figure}[!h]
    \centering
    \includegraphics[width = 0.8\textwidth]{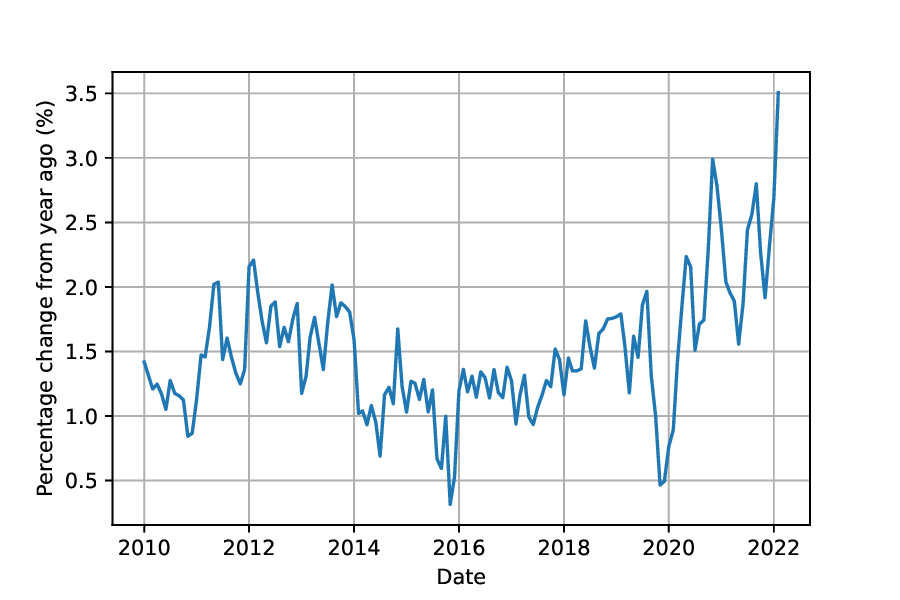}
    \caption{Percentage change of monthly CPI-AB from January 2010 to February 2022.}
     \label{fig: CPI_AB}
\end{figure}

We consider the outputs from January 2010 to December 2014 as training samples (in-sample period) and fit the Beta AR(1) models with the three $\mathcal{A}(x)$. We conduct one-step ahead predictions for the CPI-AB changes since 2015 (out-of-sample period) using the three models. The in-sample performance and the out-of-sample forecasting results are evaluated based on the four metrics.
\begin{itemize}
    \item (L1) Mean absolute error (MAE).
    \begin{equation*}
        \text{MAE} = \frac{\sum_{t=1}^N |X_t-\hat{X}_t|}{N}
    \end{equation*}
    \item (L2) Mean absolute percentage error (MAPE).
    \begin{equation*}
        \text{MAPE} = \sum_{t=1}^N \Big|\frac{X_t-\hat{X}_t}{X_t}\Big| \times 100\%
    \end{equation*}
    \item (L3) Root mean square error (RMSE).
    \begin{equation*}
        \text{RMSE} = \sqrt{\frac{\sum_{t=1}^N (X_t-\hat{X}_t)^2}{N}}
    \end{equation*}
    \item (L4) Coverage percentage (CP).
    \begin{equation*}
            \text{CP} = \frac{\sum_{t=1}^N \text{Boolean}(L_t \leq X_t \leq U_t)}{N} \times 100\%
    \end{equation*}
\end{itemize}
The first three metrics are commonly used in model assessment. In order to compare the forecasting abilities and the robustness of the models, we introduce the metric (L4). Given the customized significance level $\alpha$, for each time $t$, we are able to estimate the $(1-\alpha) \times 100\%$ confidence interval $[L_t, U_t]$ for the one-step ahead prediction for $X_t$. Then, we calculate the percentage of the real outputs covered by the estimated confidence interval. The metrics of the in-sample results and the out-of-sample results are listed in Table \ref{table: CPI-AB in-sample metrics} and Table \ref{table: CPI-AB out-of-sample metrics}, respectively.

\begin{table}[!h]
\centering
\begin{tabular}{l|lllll}
\hline
$\mathcal{A}(x)$ &  (L1)   & (L2) (\%) & (L3)   & (L4) (\%) \\ \hline
$x$                               & 0.0032 & 24.37     & 0.0038 & 91.38     \\
$\log \frac{x^*}{1-x^*}$          & 0.0020 & 14.50     & 0.0025 & 89.66      \\
$\log(-\log(1-x^*))$              & 0.0020 & 14.35     & 0.0025 & 89.66     \\ \hline
\end{tabular}
\caption{In-sample error metrics. \\The significance level is $\alpha = 0.1$, so (L4) is the CP of the 90\% confidence interval. The threshold $c=0$.}
\label{table: CPI-AB in-sample metrics}
\end{table}
\begin{figure}[!h]
    \centering
    \includegraphics[width=0.4\textwidth]{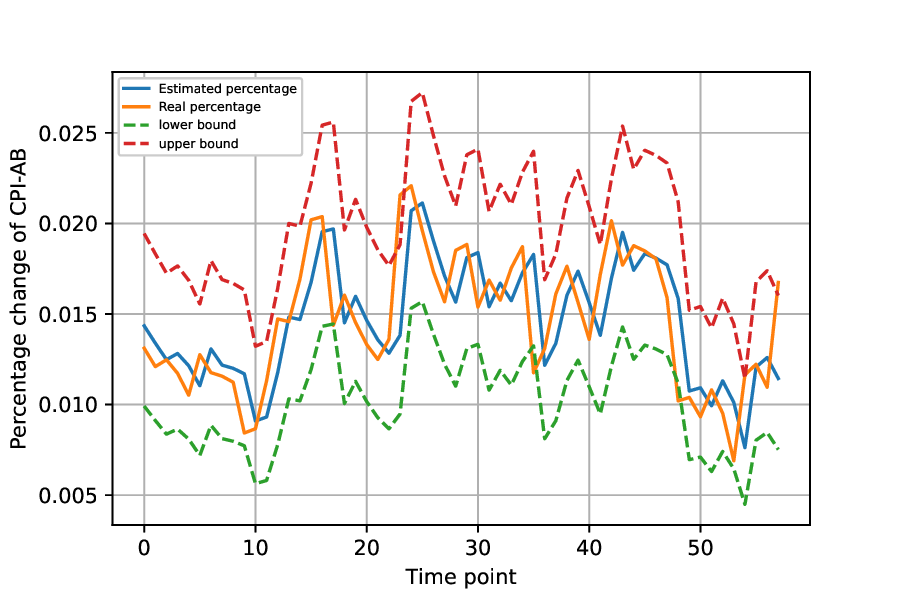} 
    \includegraphics[width=0.4\textwidth]{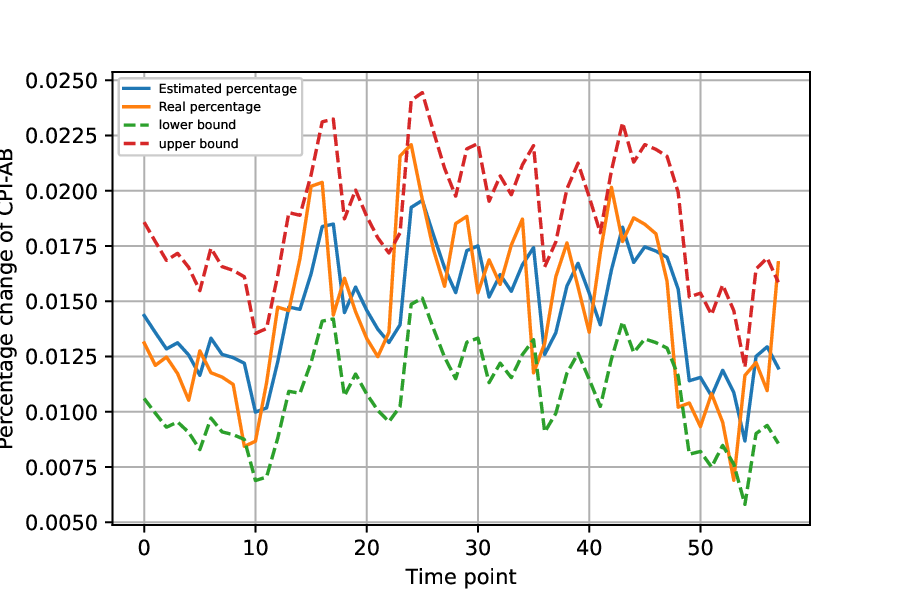}
    \includegraphics[width=0.4\textwidth]{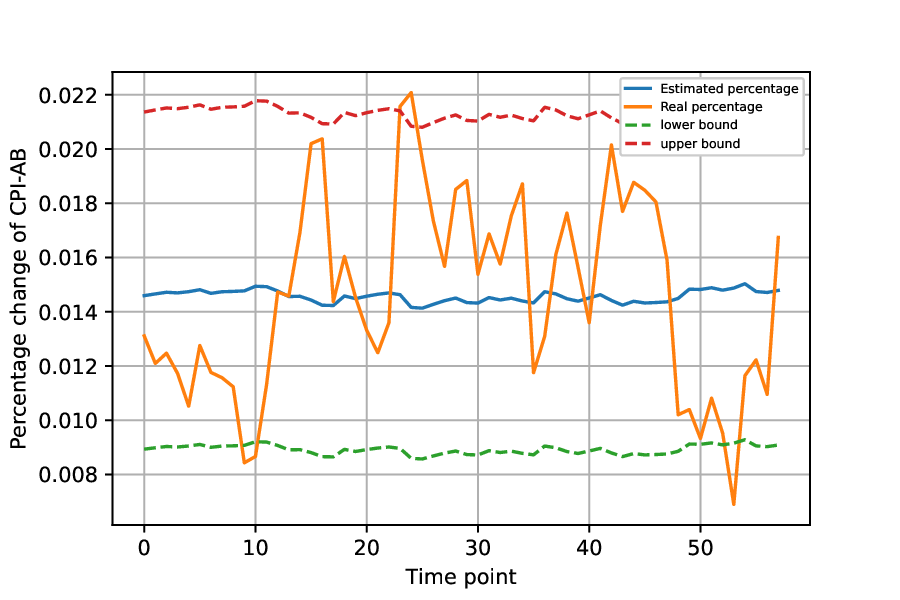} 
    \caption{The in-sample model fitting results and the 90\% confidence intervals. Top left: complementary log-log link; top right: logit link; bottom: identity link. In each plot, the orange solid line is the real data; the blue solid line is the estimated sequence; the red dotted line is the upper bound of the prediction confidence interval; the green dotted line is the lower bound of the prediction confidence interval.}
\label{fig: ci comparison in-sample}
\end{figure}
\begin{table}[!h]
\centering
\begin{tabular}{l|lllll}
\hline
$\mathcal{A}(x)$ &  (L1)   & (L2) (\%) & (L3)   & (L4) (\%) \\ \hline
$x$                               & 0.0046 & 38.51     & 0.0062 & 75.58     \\
$\log \frac{x^*}{1-x^*}$          & 0.0025 & 20.26     & 0.0034 & 81.40      \\
$\log(-\log(1-x^*))$              & 0.0024 & 19.93     & 0.0031 & 87.21     \\ \hline
\end{tabular}
\caption{Out-of-sample error metrics. \\The significance level is $\alpha = 0.1$, so (L4) is the CP of the 90\% confidence interval. The threshold $c=0$.}
\label{table: CPI-AB out-of-sample metrics}
\end{table}

Logit link function is the most commonly used $x$-link function in the literature, e.g. \cite{rocha2009beta}, \cite{scher2020goodness}, \cite{woodard2010stationarity}. However, when the increase and decrease of the previous observation don't have symmetric influences on the current observation, certain asymmetric link functions can be more appropriate options. As a commonly used asymmetric link function, complementary log-log link function is included in this application. The identity link function was used in models such as \cite{wang2011autopersistence}. Therefore, we also consider that as an option. According to the error metrics (L1)-(L3) in Table \ref{table: CPI-AB in-sample metrics}, we can see that the model based on the identity link is not appropriate, and the models based on the logit link and the complementary log-log link have similar in-sample fitting results. We can get the same conclusion from Figure \ref{fig: ci comparison in-sample}. The fitted model based on the identity link is flat and is not informative. But the models based on the other two link functions can reflect the fluctuation of the time series. 

The lack of predictive ability of the model with identity link is also reflected in the out-of-sample results. As shown in Table \ref{table: CPI-AB out-of-sample metrics}, it has high error metrics (L1)-(L3) and has low coverage rate (L4). Figure \ref{fig: ci comparison out-of-sample} shows the prediction results from the models with the complementary log-log link and the logit link. From the plots we can see possible deviations from stationarity roughly after the $80^{\text{th}}$ point. In order to show the differences between the two models more obviously, we include all the points even the sequence displaying a possible deviation from stationarity. Comparing with the model based on the logit link, the model with complementary log-log link outperforms in terms of all the metrics, especially (L4). From Figure \ref{fig: ci comparison out-of-sample} we can see that the model with complementary log-log link provides more robust point estimations and is more adaptive to large oscillations. It leads to an average coverage percentage (87.21\%) closer to the nominal one (90\%), comparing with the counterpart of the model based on the logit link. Therefore, we consider the model with complementary log-log link as a proper model for this application.

\begin{figure}[!h]
    \centering
    \includegraphics[width=0.8\textwidth]{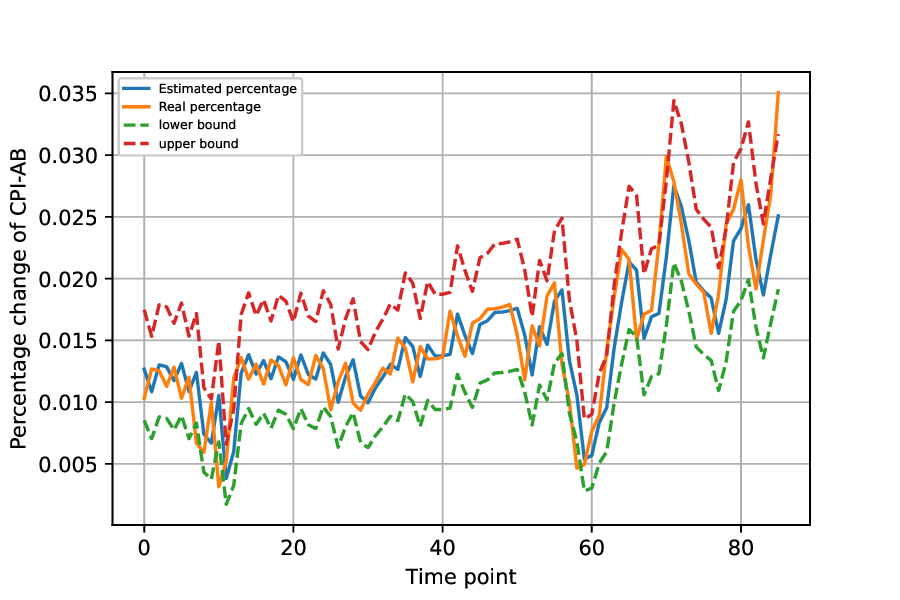} 
    \includegraphics[width=0.8\textwidth]{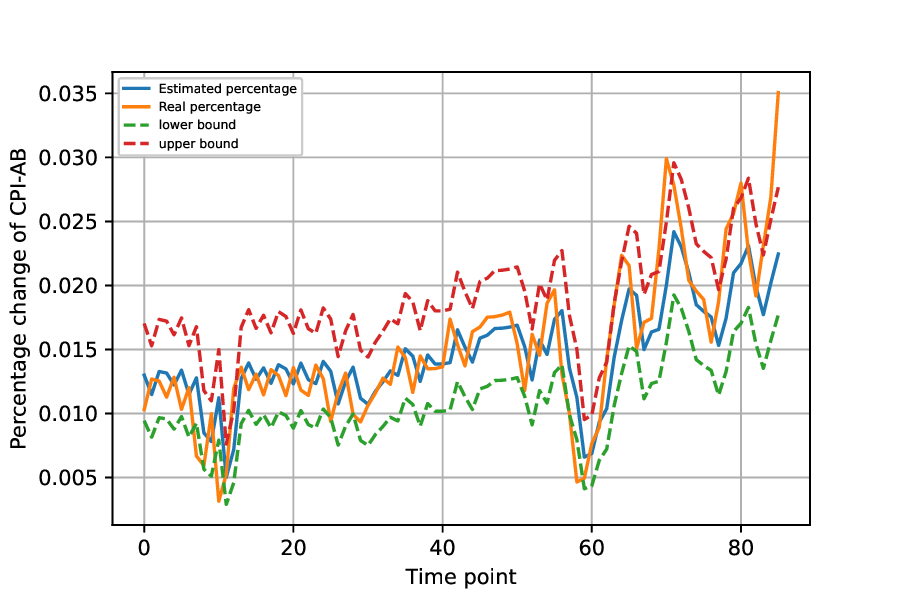}
    \caption{The out-of-sample model fitting results and the 90\% confidence intervals. Top: complementary log-log link; bottom: logit link. In each plot, the orange solid line is the real data; the blue solid line is the estimated sequence; the red dotted line is the upper bound of the prediction confidence interval; the green dotted line is the lower bound of the prediction confidence interval.}
\label{fig: ci comparison out-of-sample}
\end{figure}

\subsection{Change-point detection application to Arizona daily Covid-19 positivity rate}
\label{sec: Change-point detection application to Arizona daily Covid-19 positivity rate}
Based on the nationwide data collected by CDC \footnote{\url{https://covid.cdc.gov/covid-data-tracker}}, the Covid-19 positivity rate was roughly stationary from, roughly speaking, the beginning of March, 2021 to the end of June, 2021. Then, because of the spread of Delta variant, Covid-19 positivity rates started to soar again. In the meantime, we notice that the daily positivity rate is affected by exogenous factors. There are several assumptions indicating that extreme weathers affect the Covid testing procedure, and therefore, influence the daily positivity rate. 
\begin{enumerate}
    \item Testing sites would stop operations due to extreme weathers.
    
    Since some testing sites would be closed because of extreme weathers, it is reasonable to assume that extreme weathers would have an influence on the Covid-19 positivity rate. 
    
    \item Extreme weathers would stop people going outside and got tested. 
\end{enumerate}
In order to detect the time point when the influence of Delta variant started to be reflected on the Covid-19 positivity rate, we would like to fit generalized Beta AR(1) models on the positivity rate time series and see if the sequential change-point detection method can successfully detect a change point, which can shed insights on the time point when Delta variant affects the positivity rate.

The time points of Delta variant outbreaks are different across all the states. Therefore, we only consider the daily Covid-19 positivity rate time series in Arizona. The daily positivity rate time series in Arizona from March 2021 to July 2021 is more stationary compared with other states, which is helpful for model parameter estimation. 

\begin{figure}[!h]
    \centering
    \includegraphics[width = 0.8\textwidth]{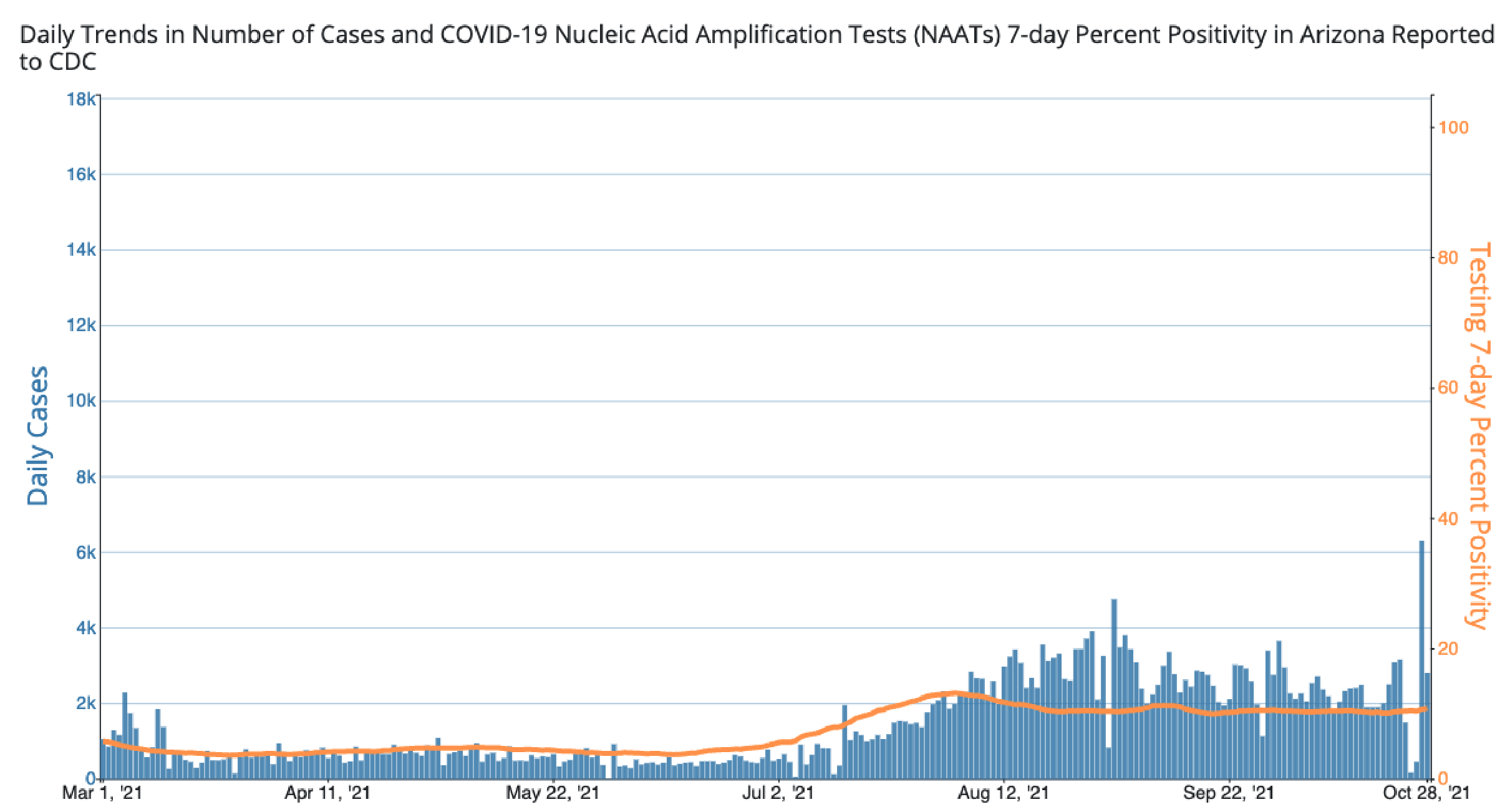}
    \caption{The daily Covid-19 positivity rate in Arizona from 03/01/2021 to 10/28/2021.}
     \label{fig:ts_positiverate_AZ}
\end{figure}

We collect the daily Covid-19 positivity rate time series in Arizona from 03/01/2021 to 10/28/2021. That is, there are overall 242 observations, i.e. the $X_t$ in (\ref{equation: model definition}). Based on the aforementioned assumptions, we would like to investigate the influence of extreme weathers on the daily positivity rate. For the same time period and the same frequency, we collect extreme weather indication time series in Phoenix from National Centers for Environmental Information \footnote{\url{https://www.ncdc.noaa.gov/cdo-web/search}}. The output is 1 if any of the extreme weathers, e.g. fog, thunder, sleet and so on, happens, and otherwise, 0. It is considered as the exogenous variable $W_t$ in (\ref{equation: model definition}).

We use the first $m=100$ (03/01/2021 to 06/09/2021) observations for model selection and parameter estimation, under the assumption that there is no change point existing in the first 100 observations. We fit generalized Beta AR(1) models incorporating the exogenous variable sequence based on three $\mathcal{A}(x)$, identity transformation, logit transformation and complementary log-log transformation. The estimated parameters and the AIC scores for the three models are listed in Table \ref{Table: model comparison}. According to Table \ref{Table: model comparison}, the model with logit $x$-link function has the lowest AIC. So we use the fitted generalized Beta AR(1) model 
\begin{equation}
\begin{aligned}
    X_t &\sim \text{Beta} (2803.541 \mu_t, 2803.541 (1-\mu_t)) \\
\text{logit}(\mu_t) &= -2.215+0.290 \text{logit}(X_{t-1}^*)+0.004W_t
\end{aligned}
\end{equation}
for the following change-point detection. In order to check the influence of $W_t$ on the prediction of $X_t$, we also fit a model excluding $W_t$, with $\mathcal{A}(x)$ as the logit transformation. So the link function follows the format
\begin{equation*}
    \log(\mu_t) = \varphi_0+\varphi_1 \text{logit}(X_{t-1}^*).
\end{equation*}
The corresponding AIC = -574.941. Comparing it with the AIC of the model including exogenous variable, we can conclude that the exogenous variable sequence can indeed improve the model fitting results. 
\begin{table}[!h]
\centering
\begin{tabular}{l|lllll}
\hline
$\mathcal{A}(x)$         & $\tau$ & $\varphi_0$ & $\varphi_1$ & $\phi$ & AIC \\ \hline
$x$                      &  302.834 & -4.036 & 22.411& -36.708&-320.163\\
$\log \frac{x^*}{1-x^*}$ &    2803.541 &-2.215 & 0.290 & 0.004 &-863.571\\
$\log(-\log(1-x^*))$     &   210.512 & -10.747& -2.418&-0.018 &-561.035 \\ \hline
\end{tabular}
\caption{Parameter estimation based on the different $\mathcal{A}(x)$ and the corresponding AIC scores $x^* = \min\{\max(0.01,x),0.99\}$.}
\label{Table: model comparison}
\end{table}

Since the total number of observations is 242 and the amount of initial observations is $m=100$, we consider the rest 142 observations as the new observations after parameter selection, and it is a close-end sequential change-point detection problem with $N = \frac{142}{100} = 1.42.$ We use the same weight function as the one in Section \ref{sec: asymptotic results under the null} with $\gamma = 0, 0.25, 0.4$. With the significance level $\alpha = 0.05$ and $N = 1.42$, we follow the same procedure as Section \ref{sec: asymptotic results under the alternative} to compute the corresponding thresholds $c(\gamma, 0.05)$. Then, from the 101th observation, we continue to update the test statistic $w^2(m,k)S_{m,k}(\hat{\bm{\eta}})^T\bm{A}S_{m,k}(\hat{\bm{\eta}})$ until the test statistic exceeds $c(\gamma, 0.05)$. The time point is considered as the change point in this case.

\begin{table}[!h]
\centering
\begin{tabular}{l|lll}
\hline
$\gamma$     & 0          & 0.25       & 0.4        \\ \hline
Change point & 07/03/2021 & 07/02/2021 & 07/02/2021 \\ \hline
\end{tabular}
\caption{Detected change points based on the weight functions with different $\gamma$'s.}
\label{table: detected change points}
\end{table}

According to the detected change points in Table \ref{table: detected change points}, we can see that even with different $\gamma$'s, the detected change points are close and they are all at the beginning of July 2021. However, according to Arizona local news, the majority of the reports about Delta variant spreading in Arizona were after the detected change points, e.g. the report on 07/09/2021 \footnote{\url{https://www.azmirror.com/2021/07/09/deadly-delta-variant-of-covid-19-is-spreading-rapidly-in-arizona/}} and the report on 07/20/2021 \footnote{\url{https://www.azcentral.com/story/news/local/arizona-health/2021/07/21/delta-variant-covid-19-identified-dominant-az/8021859002/}}. Therefore, we can see that the change-point detection method is able to provide alerts for the spread of Delta variant prior to the news, and can be used in future Covid-19 postive rate monitoring and other related applications. 
\newpage
\section{Appendix}

\subsection{Proof of Lemma \ref{lemma:weak Feller}}
\label{app: weak feller chain}

For any $y = (x,\bm{W})$, the transition probability $P(y,O)$ is
\begin{equation}
\label{eq: decompose transition probability}
    \begin{aligned}
    P(y,O) &= P((X_{t+1},\bm{W}_{t+1}) \in O|(X_t,\bm{W}_t) = (x,w)) \\
    & = P(\bm{W}_{t+1} \in O_W|\bm{W}_t = w)P(X_{t+1} \in O_X|X_t = x, \bm{W}_{t+1} \in O_W) \\
    \end{aligned}
\end{equation}

As a product of two continuous functions with respect to $x$ and $w$ separately, we can see immediately that for any open set $O \in \mathcal{B}(\Omega)$, $P(y,O)$ is continuous on $y = (x,w)$, which implies lower continuity. Therefore, the process $\{\bm{Y}_t\}$ is a weak Feller chain. 

From Theorem 6.2.8 of \cite{meyn1993markov}, we can conclude that the compact state space, $\Omega$, is a petite set. The definition of petite sets is based on multiple other concepts, so we don't introduce it here. The details can be found in \cite{meyn1993markov}. What's noticeable is that for $\psi$-irreducible and aperiodic Markov chains, every petite set is a small set, see Theorem 5.5.7 of \cite{meyn1993markov}. Therefore, combining Lemma \ref{lemma:irreducibility&aperiodicity} and Lemma \ref{lemma:weak Feller}, we can conclude that the state space $\Omega$ is a small set.

\subsection{Proof of Theorem \ref{theorem:stationary&ergodic}}
\label{app: stationary ergodicity}
First, we prove the geometric ergodicity of the model. It is analogous to the proof of geometric ergodicity of Binomial AR(1) model in \cite{Liu_2025}. According to Theorem 15.0.1 of \cite{meyn1993markov}, for a $\psi$-irreducible and aperiodic Markov chain, if there exists a petite set $C$, constant $b <\infty$, $\beta>0$ and a function $V \geq 1$ finite at some one $y_0 \in \Omega$ such that
\begin{equation*}
    \Delta V(x) \leq -\beta V(x) +b \mathbbm{1}_C(x), x \in \Omega,
\end{equation*}
then the process is geometrically ergodic, where $\Delta V(x)$, as defined on page 558 of \cite{meyn1993markov}, is 
\begin{equation*}
    \Delta V(x) = \int P(x,\text{d}y)V(y)-V(x),
\end{equation*}
and $P(x, \text{d}y)$ is the one-step transition pdf from $x$ to $y$.

From Lemma \ref{lemma:weak Feller} we have proven that the state space $\Omega$ is a small (also petite) set. Define $V(\cdot)$ as the constant function $\Omega \rightarrow b \geq 1$, and let $\beta = 1$, then the drift condition can be easily satisfied.
\begin{equation*}
    \begin{aligned}
    \Delta V(x) = & \int P(x,\text{d}y)V(y)-V(x) \\
             \leq & (\underset{y \in \Omega} {\max}V(y))-V(x) \\
              = & b-V(x) \\
              = & -V(x)+b\mathbbm{1}_{\Omega}(x), \text{ for any }x \in \Omega.
    \end{aligned}
\end{equation*}
So $\{\bm{Y}_t\}$ is geometrically ergodic. Also, according to Theorem 15.0.1 in \cite{meyn1993markov},  $\psi$-irreducibility, aperiodicity and the drift condition together imply the existence of the unconditional probability of the process, say $\pi$, such that
\begin{equation*}
    \pi(A) = P(\bm{Y}_t \in A), \text{ for any } t\in \mathbbm{Z} \text{ and }A \in \mathcal{B}(\Omega).
\end{equation*}
From the modeling specification (\ref{equation: model definition}) and (\ref{eq: link function of GBETAAR}), we can conclude that the one-step transition probability is free from $t$. Then for any $A_0, A \in \mathcal{B}(\Omega)$ and $n\geq 1$, 
\begin{equation*}
    P^n(A_0,A) = P(\bm{Y}_{t+n} \in A|\bm{Y}_t \in A_0) = \int_A \int_{\Omega}\cdots \int_{\Omega}P(A_0,\text{d}y_{t+1})P(\text{d}y_{t+1},\text{d}y_{t+2})\cdots P(\text{d}y_{t+n-1},\text{d}y_{t+n})
\end{equation*}
is free from $t$. So for any random vectors $(\bm{Y}_{t_1},...,\bm{Y}_{t_n})$ and $(\bm{Y}_{t_1+\tau},...,\bm{Y}_{t_n+\tau})$, the joint pdf of the random vectors
\begin{equation*}
\begin{aligned}
    & P(\bm{Y}_{t_n} \in A_{n}, \bm{Y}_{t_{n-1}} \in A_{n-1},\cdots,\bm{Y}_{t_1}=A_1) \\
    = &P^{t_{n}-t_{n-1}}(A_{n-1},A_n)\times \cdots \times P^{t_{2}-t_{1}}(A_1,A_2)\pi(A_1)\\
    = &P(\bm{Y}_{t_n+\tau} \in A_{n}, \bm{Y}_{t_{n-1}+\tau} \in A_{n-1},\cdots,\bm{Y}_{t_1+\tau}=A_1).
\end{aligned}
\end{equation*}
Therefore, the process is strictly stationary.

\subsection{Proof of Lemma \ref{lemma: unique of eta_0}}
\label{app: unique of eta_0}
The proof is similar with the proof of identifiability of the true parameter (part (b) in the proof of Theorem 4.1) in \cite{gorgi2019bnb}. 

For any $x>0$, the following inequality holds
\begin{equation*}
    \log x \leq x-1.
\end{equation*}
$"="$ holds iff $x=1.$
Therefore, for any $\bm{\eta} \in \Xi$ and $\bm{\eta} \neq \bm{\eta}_0$, 
\begin{equation}
    \log \frac{f_{\bm{\eta}}(X_t|\mathcal{C}_t)}{f_{\bm{\eta}_0}(X_t|\mathcal{C}_t)} \leq \frac{f_{\bm{\eta}}(X_t|\mathcal{C}_t)}{f_{\bm{\eta}_0}(X_t|\mathcal{C}_t)}-1.
\end{equation}
In the meantime, with $\bm{\eta} \neq \bm{\eta}_0$, $\frac{f_{\bm{\eta}}(X_t|\mathcal{C}_t)}{f_{\bm{\eta}_0}(X_t|\mathcal{C}_t)} \neq 1$ with positive probability. Therefore, 
\begin{equation*}
    \begin{aligned}
    &pl(\bm{\eta})-pl(\bm{\eta}_0)  \\
    =&\mathbbm{E}\Big(\log f_{\bm{\eta}}(X_t|\mathcal{C}_t)\Big)-\mathbbm{E}\Big(\log f_{\bm{\eta}_0}(X_t|\mathcal{C}_t)\Big) \\
    =& \mathbbm{E}\Big(\log \frac{f_{\bm{\eta}}(X_t|\mathcal{C}_t)}{f_{\bm{\eta}_0}(X_t|\mathcal{C}_t)}\Big) \\
    <& \mathbbm{E}\Big(\frac{f_{\bm{\eta}}(X_t|\mathcal{C}_t)}{f_{\bm{\eta}_0}(X_t|\mathcal{C}_t)}\Big)-1 \\
    =& \mathbbm{E}\Big(\mathbbm{E}(\frac{f_{\bm{\eta}}(X_t|\mathcal{C}_t)}{f_{\bm{\eta}_0}(X_t|\mathcal{C}_t)}|\mathcal{C}_t)\Big)-1 \\
    =& \mathbbm{E}\Big(\int_0^1 \frac{f_{\bm{\eta}}(x|\mathcal{C}_t)}{f_{\bm{\eta}_0}(x|\mathcal{C}_t)}f_{\bm{\eta}_0}(x|\mathcal{C}_t) \text{d}x\Big)-1 \\
    =& \mathbbm{E}\Big(\int_0^1 f_{\bm{\eta}}(x|\mathcal{C}_t)\text{d}x \Big)-1\\
    =&0.
    \end{aligned}
\end{equation*}
\subsection{Proof of the bounded fourth moment of $G$}
Here we prove the second element of $G(X_t, \bm{\eta})$ has bounded fourth moment for any fixed $\bm{\eta} \in \Xi$. That is, $\mathbbm{E}[|\tau(X_t^*-\mu_t^*)\mu_t(1-\mu_t)|^4]<\infty$, where $X^*_t = \log(\frac{X_t}{1-X_t})$ and $\mu_t^* = \psi(\tau \mu_t)-\psi(\tau(1-\mu_t))$. As shown in Lemma \ref{lemma: G is MDS}, $\mathbbm{E}(X^*_t) = \mu^*_t.$ By using the same idea, we can also prove other elements of $G$ have bounded fourth moments. 

For a random variable $X\sim $Beta$(\alpha,\beta)$, in (\ref{eq: cgf for logX/(1-X)}), we have calculated the cumulant-generating function of $\log(\frac{X}{1-X})$, i.e. $K_{\log(\frac{X}{1-X})}(t)$. Denote the $j^{\text{th}}$ cumulant as $\kappa_j$, denote $\log(\frac{X}{1-X})$ as $X^*$ and denote $\mathbbm{E}(X^*) = \psi(\alpha)-\psi(\beta)$ as $\mu^*$ (the notations are similar with the ones for $X_t$). According to \cite{dodge1999complications} and the expression of $K_{\log(\frac{X}{1-X})}(t)$, 
\begin{equation*}
\begin{aligned}
    \kappa_2 &= \mathbbm{E}((X^*- \mu^*)^2) = \frac{\text{d}^2}{\text{d}t^2}K_{\log(\frac{X}{1-X})}(t)|_{t=0} = \psi'(\alpha)+ \psi'(\beta). \\
    \kappa_4 &= \mathbbm{E}((X^*- \mu^*)^4)-3\mathbbm{E}((X^*- \mu^*)^2)^2 = \frac{\text{d}^4}{\text{d}t^4}K_{\log(\frac{X}{1-X})}(t)|_{t=0} = \psi^{(3)}(\alpha)+ \psi^{(3)}(\beta).
\end{aligned}
\end{equation*}
So the fourth central moment of $X^*$ is 
\begin{equation*}
    \mathbbm{E}((X^*- \mu^*)^4) = \kappa_4+3 \kappa_2^2 = \psi^{(3)}(\alpha)+ \psi^{(3)}(\beta)+3 (\psi'(\alpha)+ \psi'(\beta))^2.
\end{equation*}
That is, given the information set $\mathcal{C}_t$ (so $\mu_t$ has been determined by the past information), the Beta random variable $X_t\sim \text{Beta}(\tau \mu_t, \tau (1-\mu_t))$ has 
\begin{equation*}
   \mathbbm{E}((X_t^*-\mu_t^*)^4|\mathcal{C}_t) = \psi^{(3)}(\tau \mu_t)+ \psi^{(3)}(\tau (1-\mu_t))+3 (\psi'(\tau \mu_t)+ \psi'(\tau (1-\mu_t)))^2.
\end{equation*}
According to the series representation of $\psi'$ and $\psi^{(3)}$ shown in \cite{batir2007some}, $\psi'(x)>0$ and $\psi^{(3)}(x)>0$ for $x \in (0,\infty)$, and they are both continuous and decreasing functions. Since the state space of the Markov chain, $\Omega$, is compact, for any fixed parameter vector $\bm{\eta} \in \Xi$, we can find the lower bound and the upper bound of $\bm{\beta}^T \mathbbm{Z}_{t-1}$ ($\bm{\beta}$ is the parameter vector excluding the first element ($\tau$) in $\bm{\eta}$), which are free from $t$. Determined by $\bm{\beta}^T \mathbbm{Z}_{t-1}$ through (\ref{eq: link function of GBETAAR}), $\mu_t$, therefore, has the lower bound $l_{\bm{\eta}}>0$, and the upper bound, $u_{\bm{\eta}}<1$. $l_{\bm{\eta}}$ and $u_{\bm{\eta}}$ are free from $t$. Then, considering $\psi'$ and $\psi^{(3)}$ are both decreasing,
\begin{equation*}
    \begin{aligned}
    &\psi'(\tau\mu_t) \leq \psi'(\tau l_{\bm{\eta}}), \quad \psi^{(3)}(\tau\mu_t) \leq \psi^{(3)}(\tau l_{\bm{\eta}}), \\
        &\psi'(\tau(1-\mu_t)) \leq \psi'(\tau (1-u_{\bm{\eta}})), \quad \psi^{(3)}(\tau(1-\mu_t)) \leq \psi^{(3)}(\tau (1-u_{\bm{\eta}})).
    \end{aligned}
\end{equation*}
Therefore, 
\begin{equation*}
\begin{aligned}
   \mathbbm{E}((X_t^*-\mu_t^*)^4|\mathcal{C}_t)& \leq \psi^{(3)}(\tau l_{\bm{\eta}})+\psi^{(3)}(\tau (1-u_{\bm{\eta}}))+3(\psi'(\tau l_{\bm{\eta}})+\psi^{(3)}(\tau (1-u_{\bm{\eta}})))^2 \\
   &\overset{\Delta}{=} b(\bm{\eta}) \\
   &<\infty.
\end{aligned}
\end{equation*}
Given the information, 
\begin{equation*}
    \begin{aligned}
    \mathbbm{E}[|\tau(X_t^*-\mu_t^*)\mu_t(1-\mu_t)|^4]&=\tau^4\mathbbm{E}[\mathbbm{E}((X_t^*-\mu_t^*)^4\mu^4_t(1-\mu_t)^4|\mathcal{C}_t)]\\
    &=\tau^4 \mathbbm{E}[\mu^4_t(1-\mu_t)^4\mathbbm{E}((X_t^*-\mu_t^*)^4|\mathcal{C}_t)] \\
    &<\tau^4 \times b(\bm{\eta})\\
    &< \infty.
    \end{aligned}
\end{equation*}
\label{proof: bounded fourth moment}

\subsection{Proof of Theorem \ref{theorem: CLT}}
\label{proof: CLT}
According to (\ref{eq: Sm=0}), $S_m(\hat{\bm{\eta}}) = 0.$
By Taylor expansion, we can find $\Tilde{\bm{\eta}} = (\Tilde{\tau},\Tilde{\bm{\beta}})$ between $\hat{\bm{\eta}}$ and $\bm{\eta}_0$ such that 
\begin{equation}
    \label{eq: taylor expansion}
    S_m(\hat{\bm{\eta}}) = S_m(\bm{\eta}_0)+\triangledown S_m(\Tilde{\bm{\eta}})(\hat{\bm{\eta}}-\bm{\eta}_0) = 0.
\end{equation}

In order to prove the asymptotic normality of $\hat{\bm{\eta}}$, we first show the invertibility of $\triangledown S_m(\Tilde{\bm{\eta}})$ as $m \rightarrow \infty$. The strong consistency of $\hat{\bm{\eta}}$ proven in Theorem \ref{theorem: consistency} implies 
\begin{equation*}
    \Tilde{\bm{\eta}} \overset{a.s.}{\rightarrow} \bm{\eta}_0.
\end{equation*}
We have provided the closed-form expression of $\triangledown S_m(\bm{\eta})$ in Section \ref{subsec: parameter estimation}, from which we can conclude that it is a continuous function with respect to $\bm{\eta}$. Based on Continuous Mapping Theorem and Birkhoff's Ergodic Theorem,
\begin{equation}
\label{eq: ergodic of S_m(tilde(eta))}
    \frac{1}{m}\triangledown S_m(\Tilde{\bm{\eta}})\overset{a.s.}{\rightarrow}\mathbbm{E}\Big(\triangledown G(X_1,\bm{\eta}_0)\Big).
\end{equation}
Given that $G(X_1, \bm{\eta}_0)$ is the first derivative of the log-likelihood function and the continuity of $G$ with respect to $\bm{\eta}$, the equation 
\begin{equation*}
    \mathbbm{E}\Big(\triangledown G(X_1,\bm{\eta}_0)\Big) = -\mathbbm{E}\Big(G(X_1,\bm{\eta}_0)G(X_1,\bm{\eta}_0)^T\Big)
\end{equation*}
can be shown by following the standard procedure. Also, following the same method as Lemma A.5. of \cite{gorgi2019bnb}, $-\mathbbm{E}\Big(\triangledown G(X_1,\bm{\eta}_0)\Big)$ is a positive definite matrix. Therefore, with probability one, $-\frac{1}{m}\triangledown S_m(\Tilde{\bm{\eta}})$ is positive definite and therefore invertible for all $m$ sufficiently large.

As shown in Lemma \ref{lemma: G is MDS}, $G(X_t,\bm{\eta}_0)$ is a MDS and therefore $S_m(\bm{\eta}_0) = \sum_{t=1}^m G(X_t, \bm{\eta}_0)$ is a martingale. We first prove the martingale central limit theorem (MCLT) holds on $S_m(\bm{\eta}_0).$ 

We denote $\mathbbm{E}(S_m(\bm{\eta}_0)S_m(\bm{\eta}_0)^T)$ as $I_m^2(\bm{\eta}_0)$. According to \cite{brown1971martingale}, $$I_m^2(\bm{\eta}_0) = m\mathbbm{E}(G(X_1,\bm{\eta}_0)G(X_1,\bm{\eta}_0)^T) = -m \mathbbm{E}\Big(\triangledown G(X_1,\bm{\eta}_0)\Big).$$
We can easily get the same conclusion considering the $S_m(\bm{\eta}_0)$ is a martingale.
In \cite{brown1971martingale}, the author proposes the conditions under which the univariate MCLT holds. For ergodic univariate martingale processes, the MCLT holds when the Lindeberg condition holds. In order to apply the theorem directly, we create the univariate martingale $\bm{\lambda}^T S_m(\bm{\eta}_0)$ in which $\bm{\lambda}$ can be any nonzero $(l+3)$-dimensional vector and first prove the MCLT holds for any $\bm{\lambda}^T S_m(\bm{\eta}_0)$. It is easy to see that $\mathbbm{E}((\bm{\lambda}^T S_m(\bm{\eta}_0))(\bm{\lambda}^T S_m(\bm{\eta}_0))^T)$ is $\bm{\lambda}^TI_m^2(\bm{\eta}_0)\bm{\lambda}.$ We denote it as $I_{m,\lambda}^2(\bm{\eta}_0).$ Then we first show that the Lindeberg condition holds for $\bm{\lambda}^TS_m(\bm{\eta}_0)$.
That is, for any $\epsilon>0$, as $m \rightarrow \infty$,
\begin{equation*}
\begin{aligned}
&\frac{1}{I_{m,\lambda}^2(\bm{\eta}_0)}\sum_{t=1}^m \mathbbm{E}((\bm{\lambda}^TG(X_t,\bm{\eta}_0))^2\mathbbm{1}_{(|\bm{\lambda}^T G|\geq \epsilon I_{m,\lambda}(\bm{\eta}_0))}) \\
\leq &\frac{1}{I_{m,\lambda}^2(\bm{\eta}_0)}\sum_{t=1}^m \mathbbm{E}((\bm{\lambda}^T G)^4)^{\frac{1}{2}}P(|\bm{\lambda}^T G|\geq \epsilon I_{m,\lambda}(\bm{\eta}_0))^{\frac{1}{2}}, \text{by Cauchy-Schwarz inequality} \\
\leq & \frac{1}{I_{m,\lambda}^2(\bm{\eta}_0)}\sum_{t=1}^m \mathbbm{E}((\bm{\lambda}^T G)^4)^{\frac{1}{2}}\Big [\frac{\mathbbm{E}(|\bm{\lambda}^TG|^4)}{(\epsilon^4 I_{m,\lambda}^4(\bm{\eta}_0))}\Big]^{\frac{1}{2}},\text{ by Markov's inequality} \\
= & \frac{1}{(I_{m,\lambda}^2(\bm{\eta}_0))^2\epsilon^2}\sum_{t=1}^m\mathbbm{E}((\bm{\lambda}^T G)^4),\\
= & \frac{m\mathbbm{E}((\bm{\lambda}^TG)^4)}{m^2 (-\bm{\lambda}^T \mathbbm{E}\triangledown G\bm{\lambda})^2 \epsilon^2} \\
{\rightarrow}& 0.
\end{aligned}    
\end{equation*}
Therefore, based on Theorem 2 of \cite{brown1971martingale}, 
\begin{equation}
    \label{eq: MCLT of the univariate martingale}
    \begin{aligned}
    &\bm{\lambda}^T S_m(\bm{\eta}_0)/\sqrt{(I^2_{m,\lambda}(\bm{\eta}_0))} \\
    =&\frac{1}{\sqrt{m}}\bm{\lambda}^T S_m(\bm{\eta}_0)(-\bm{\lambda}^T \mathbbm{E}\triangledown G(X_1,\bm{\eta}_0) \bm{\lambda})^{-1/2} \\
    \overset{\mathcal{D}}{\rightarrow}&N(0,1).    
    \end{aligned}
\end{equation}
With Cramer-Wold device,
\begin{equation}
\label{eq: normality of S_m(beta)}
    \frac{1}{\sqrt{m}}S_m(\bm{\eta}_0) \overset{\mathcal{D}}{\rightarrow} N(0,-\mathbbm{E}\triangledown G(X_1,\bm{\eta}_0))
\end{equation}

Combining (\ref{eq: taylor expansion}), (\ref{eq: ergodic of S_m(tilde(eta))}) and (\ref{eq: normality of S_m(beta)}), 
\begin{equation*}
    \begin{aligned}   
    \sqrt{m}(\hat{\bm{\eta}}-\bm{\eta}_0) =&-\sqrt{m}\triangledown S_m(\Tilde{\bm{\eta}})^{-1}S_m(\bm{\eta}_0)\\
    =& -(\frac{1}{m}\triangledown S_m(\Tilde{\bm{\eta}}))^{-1}\frac{1}{\sqrt{m}}S_m(\bm{\eta}_0) \\
    \overset{\mathcal{D}}{\rightarrow}& N(0,(-\mathbbm{E}\triangledown G(X_1,\bm{\eta}_0))^{-1}).
    \end{aligned}
\end{equation*}

\subsection{Proof of Lemma \ref{proposition: CUSUM approximation}}
\label{app: proof of CUSUM approximation}
Here we first show that in a compact neighborhood of the true parameter $\bm{\eta}_0$, $\delta_{\bm{\eta}_0} \subset \Xi$,
\begin{equation*}
\underset{\bm{\eta} \in \delta_{\bm{\eta}_0}}{\sup} ||\triangledown^2 G_j(X_1,\bm{\eta})||_{\infty} =O_p(1), \text{ for } j \in \{1,2,...,l+3\} \text{ and } X_1 \text{ follows model (\ref{equation: model definition}) with parameter } \bm{\eta},
\end{equation*}
where $\triangledown^2 G_j(X_1,\bm{\eta})$ is the Hessian matrix of the $j^{\text{th}}$ element of $G$ and $||.||_{\infty}$ is the maximum of the absolute values of all the elements of a matrix.

From (\ref{eq: score vector}) we know that every element of $G$ is a continuous function with respect to $\bm{\eta} = (\tau,\bm{\beta}) \in (0,\infty)\times \mathbbm{R}^{l+2}$ and $X_1$. Meanwhile, the digamma function $\psi$ and the inverse of the logit link function are both twice continuously differentiable, so every element in $\triangledown^2 G_j(X_1,\bm{\eta})$ is continuous with respect to $\bm{\eta}$ and $X_1$. From the closed-form expression of $G$ shown in (\ref{eq: score vector}), we can see that every element in $\triangledown^2 G_j(X_1,\bm{\eta})$ is a continuous function of $X^*_1 = \log(X_1/(1-X_1))$. So we denote the $(i_1,i_2)^{\text{th}}$ element of $\triangledown^2 G_j(X_1,\bm{\eta})$ as $\triangledown^2 G_j^{i_1,i_2}(X^*_1,\bm{\eta}), i_1, i_2 \in \{1,2,...,l+3\}$.

First, given that $X_1$ follows the generalized Beta AR(1) model, for any $\epsilon>0$, we can always find a $M>0$ so that
\begin{equation*}
\label{eq: bounded transformed X1}
    P(\underset{\bm{\eta} \in \delta_{\bm{\eta}_0}}{\sup}|X^*_1|>M) <\epsilon,
\end{equation*}
where $\delta_{\bm{\eta}_0}$ is the fixed compact neighborhood of the true parameter $\bm{\eta}_0$. Equivalently,
\begin{equation}
\label{eq: bounded transformed X2}
        P(\underset{\bm{\eta} \in \delta_{\bm{\eta}_0}}{\sup}|X^*_1| \leq M) \geq 1-\epsilon.
\end{equation}
Then, taking advantage of the continuity of $\triangledown^2 G_j^{i_1,i_2}(X^*_1,\bm{\eta})$, for every $(i_1,i_2)$, we can find the corresponding $M_{i_1,i_2}(\bm{\eta})$ given the upper bound $M$ for $X^*_1$ so that 
\begin{equation}
\label{eq: find M_i}
    |\triangledown^2 G_j^{i_1,i_2}(X^*_1,\bm{\eta})| \leq M_{i_1,i_2}(\bm{\eta}), \text{ for }i_1,i_2 \in \{1,2,...,l+3\}.
\end{equation}
By extreme value theorem, every $M_{i_1,i_2}(\bm{\eta})$ is bounded in $\delta_{\bm{\eta}_0}$. Therefore, we are able to find $M^* = \underset{\bm{\eta} \in \delta_{\bm{\eta}_0}}{\sup}(M_{1,1}(\bm{\eta}),M_{1,2}(\bm{\eta}),...,M_{l+3,l+3}(\bm{\eta}))$ so that $\underset{\bm{\eta} \in \delta_{\bm{\eta}_0}}{\sup} ||\triangledown^2 G_j(X_1,\bm{\eta})||_{\infty} \leq M^*.$ That is, given the $M$ enabling (\ref{eq: bounded transformed X2}), we can always find the corresponding $M^*$ as the upper bound for $\underset{\bm{\eta} \in \delta_{\bm{\eta}_0}}{\sup} ||\triangledown^2 G_j(X_1,\bm{\eta})||_{\infty}$. Combining (\ref{eq: bounded transformed X2}) and (\ref{eq: find M_i}), it holds that 
\begin{equation}
   P(\underset{\bm{\eta} \in \delta_{\bm{\eta}_0}}{\sup} ||\triangledown^2 G_j(X_1,\bm{\eta})||_{\infty} \leq M^*) \geq 1-\epsilon.
\end{equation}
Equivalently, 
\begin{equation}
    P(\underset{\bm{\eta} \in \delta_{\bm{\eta}_0}}{\sup} ||\triangledown^2 G_j(X_1,\bm{\eta})||_{\infty} >M^*) <\epsilon.
\end{equation}
So 
\begin{equation*}
    \underset{\bm{\eta} \in \delta_{\bm{\eta}_0}}{\sup} ||\triangledown^2 G_j(X_1,\bm{\eta})||_{\infty} =O_p(1)
\end{equation*}
holds for $j \in \{1,2,...,l+3\}$. 

According to Theorem \ref{theorem:stationary&ergodic}, we have derived the properties that the Markovian process $\{\bm{Y}_t\}$ is strictly stationary and geometrically ergodic. Also, Theorem \ref{theorem: consistency} and Theorem \ref{theorem: CLT} show the consistency and asymptotic normality of the PMLE $\hat{\bm{\eta}}$. Then, the proof of the lemma follows based on the properties.
\subsection{Proof of Proposition \ref{proposition: wiener processes}}
\label{app: proof of wiener process}

In order to prove that the general state space Markov chain $\{\bm{Y}_t\}$ is strong mixing, we first need to introduce the relationship between the absolute regularity ($\beta$-mixing) coefficient $\beta(n)$ and the strong mixing coefficient $\alpha(n)$, where $n$ is the distance between two disjoint information sets. For example, the distance between the information set recording outputs before $\{\bm{Y}_t\}$ and the information set recording outputs after $\{\bm{Y}_{t+n}\}$ is $n$.
The definition of absolute regularity condition can be found from (1.5) of \cite{bradley2005basic} (on page 108). In a nutshell, absolute regularity implies strong mixing. Mathematically, from (1.11) of \cite{bradley2005basic}, we have 
\begin{equation}
\label{eq: mixing inequality}
    2\alpha(n) \leq \beta(n).
\end{equation}
Then by Theorem 3.7 of \cite{bradley2005basic} (on page 121), if a strictly stationary Markov chain defined on a general state space is geometrically ergodic, the process has $\beta(n)\rightarrow 0$ at least exponentially fast as $n \rightarrow \infty$. That is, there exists a $\theta$ such that 
\begin{equation*}
    \beta(n) =O_p(e^{-\theta n}).
\end{equation*}
Due to the inequality (\ref{eq: mixing inequality}), $\alpha(n) = O_p(e^{-\theta n}).$ Since $G(X_{t+n},\bm{\eta}_0)$ is in the $\sigma$-field $\mathcal{C}_{t+n-1},$ the strong mixing coefficient for $G(X_t,\bm{\eta}_0)$, $\alpha_G(n) = \alpha(n-1) = O_p(e^{-\theta n})$. 

From Lemma \ref{lemma: G is MDS} we have shown that $G(X_t,\bm{\eta}_0)$ is a MDS. Therefore, for any $t>1$, $$\mathbbm{E} \big (G(X_1,\bm{\eta}_0)G^T(X_t,\bm{\eta}_0)\Big) = \mathbbm{E} \big(\mathbbm{E}(G(X_1,\bm{\eta}_0)G^T(X_t,\bm{\eta}_0)|\mathcal{C}_{t-1})\Big) = 0.$$ 
From the proof of Theorem \ref{theorem: CLT}, 
\begin{equation*}
    \mathbbm{E}\big(G(X_1,\bm{\eta}_0)G^T(X_1,\bm{\eta}_0)\big) = -\mathbbm{E}\Big(\triangledown G(X_1,\bm{\eta}_0)\Big).
\end{equation*}
Then, the covariance matrix
\begin{equation*}
\begin{aligned}
 \bm{\Sigma} &= \mathbbm{E}\big(G(X_1,\bm{\eta}_0)G^T(X_1,\bm{\eta}_0)\big)+2\sum_{t=2}^{\infty}\mathbbm{E} \big(G(X_1,\bm{\eta}_0)G^T(X_t,\bm{\eta}_0) \big)\\
 &=-\mathbbm{E}\Big(\triangledown G(X_1,\bm{\eta}_0)\Big). 
\end{aligned}
\end{equation*}
The closed-form of $-\mathbbm{E}\Big(\triangledown G(X_1,\bm{\eta}_0)\Big)$ has been shown in Section \ref{subsec: Closed-form expressions of the partial log-likelihood function, the score vector and the Hessian matrix}. Relying on the property of $\log(\frac{X_t}{1-X_t})$ we have discussed in the proof of Lemma \ref{lemma: bounded fourth moment of G}, it is easy to see that $\bm{\Sigma} = -\mathbbm{E}\Big(\triangledown G(X_1,\bm{\eta}_0)\Big) <\infty.$
As a strong mixing process with bounded covariance matrix, following Theorem 1 of \cite{oodaira1972functional}, 
\begin{equation*}
    \left\{\frac{1}{\sqrt{m}}\sum_{t=1}^{\floor{ms}}G(X_t,\bm{\eta}_0)):0< s\leq N \right\} \overset{\mathcal{D}}{\rightarrow}W(s) \text{ as }m \rightarrow \infty,
\end{equation*}
where $W(s)$ is a $(l+3)$-dimensional Wiener process with the covariance matrix $\bm{\Sigma}$. Analogous to the definition of a Wiener process in \cite{oodaira1972functional}, $W(s)$ is defined in $D = D[0,N]$, the space of $(l+3)$-dimensional continuous function with $s \in [0,N].$

\subsection{Proof of Theorem \ref{theorem: null}}
\label{app: proof of the null}
We continue to use the notation introduced in \cite{kirch2015use} that $Z^TAZ:=||Z||^2_A$. \\
By using Lemma \ref{proposition: CUSUM approximation}, the test statistic
\begin{equation}
\begin{aligned}
\label{equation: close-end test statistics 1}
        &\underset{1\leq k \leq Nm}{\sup} \omega(m,k)^2 ||S_{m,k}(\hat{\bm{\eta}})||^2_A \\
        =&\underset{1\leq k \leq Nm}{\sup} \omega(m,k)^2
        ||\sum_{t=m+1}^{m+k}G(X_t,\bm{\eta}_0)-\frac{k}{m}\sum_{t=1}^m G(X_t,\bm{\eta}_0)||^2_A+o_p(1).\\
\end{aligned}
\end{equation}

From Proposition \ref{proposition: wiener processes} we know that with $k = \lfloor ms$,
\begin{equation}
\label{equation: close-end G1}
    \frac{1}{\sqrt{m}}\sum_{t=m+1}^{m+k}G(X_t,\bm{\eta}_0) = \frac{1}{\sqrt{m}}(\sum_{t=1}^{m+k}G(X_t,\bm{\eta}_0)-\sum_{t=1}^m G(X_t,\bm{\eta}_0)) \overset{\mathcal{D}}{\rightarrow} W_1(k/m):=W_1(s),
\end{equation}
and $W_1(s)$ is independent with the $\sigma$-field $\mathcal{F}_0^m$ considering that as a Wiener process, $W_1(.)$ has independent increments. \\
Also, 
\begin{equation}
\label{equation: close-end G2} 
    \frac{1}{\sqrt{m}}\frac{k}{m} \sum_{t=1}^m G(X_t,\bm{\eta}_0) \overset{\mathcal{D}}{\rightarrow} sW_2(1)
\end{equation}
can be drawn from Proposition \ref{proposition: wiener processes}. Because $W_2(1) \in \mathcal{F}_0^m$, $W_1(.) \indep W_2(.)$ holds. \\
The continuity of $\rho^2(t)||W_1(t)-tW_2(1)||^2_A$ on $s \in [0,N]$, together with (\ref{equation: close-end test statistics 1}), (\ref{equation: close-end G1}) and (\ref{equation: close-end G2}) yield that
\begin{equation*}
\begin{aligned}
        \label{equation: close-end test statistics 2}
    &\quad\underset{1\leq k \leq Nm}{\sup}\omega(m,k)^2 ||S_{m,k}(\hat{\bm{\eta}})||^2_A \\
    &\overset{\mathcal{D}}{\rightarrow} \underset{0<s\leq N}{\sup} \rho^2(s)||W_1(s)-sW_2(1)||^2_A.
\end{aligned}
\end{equation*}
So for a given threshold $c$,
\begin{equation*}
\begin{aligned}
    \underset{m \rightarrow \infty}{\lim}& P\left(\underset{1\leq k \leq Nm}{\sup} w^2(m,k)S_{m,k}(\hat{\bm{\eta}})^T\bm{A}S_{m,k}(\hat{\bm{\eta}}) \leq c    \right) \\
     = &P\left( \underset{0<s\leq N}{\sup} \rho^2(s)(W_1(s)-sW_2(1))^T\bm{A}(W_1(s)-sW_2(1))\leq c \right).
\end{aligned}
\end{equation*}

\subsection{Proof of Theorem \ref{theorem: power analysis}}
\label{app: proof of power analysis}
We denote the change point after the first $m$ non-contaminated observations as $k^*$, and use $\Tilde{k} = \lfloor m x_0 \rfloor$ to denote a data point after $k^*$. Then 
\begin{equation*}
\begin{aligned}
        S_{m,\Tilde{k}}(\hat{\bm{\eta}}) &= \sum_{t=m+1}^{m+k^*} G(X_t,\hat{\tau}, \hat{\bm{\beta}})+ \sum_{t=m+k^*+1}^{m+\Tilde{k}} G(X_t^*,\hat{\bm{\eta}})\\
                            &=: \bm{S}_{H_0}(m,k^*)+\bm{S}_{H_a}(m+k^*,\Tilde{k}-k^*)
\end{aligned}
\end{equation*}
As $m \rightarrow \infty$, $k^* \rightarrow \infty$, due to ergodicity of $\{X_t\}$ and consistency of $\hat{\bm{\beta}}$,
\begin{equation*}
    \frac{1}{k^*} \bm{S}_{H_0}(m,k^*) \overset{P}{\rightarrow}0.
\end{equation*}
Based on Assumption \ref{assumption: power analysis}(a),
\begin{equation}
\label{eq: power proof 1}
    \frac{1}{m} \bm{S}_{H_0}(m,k^*) \overset{P}{\rightarrow}0.
\end{equation}
After the change point $k^*$, under Assumption \ref{assumption: power analysis}(c), 
\begin{equation}
\label{eq: power proof 2}
    \frac{1}{m}\bm{S}_{H_a}(m+k^*,\Tilde{k}-k^*) = \frac{x_0-\nu}{(x_0-\nu)m} \sum_{t = m+k^*+1}^{m+\Tilde{k}}G(X_t^*,\hat{\tau}, \hat{\bm{\beta}}) \overset{P}{\rightarrow} (x_0-\nu) \bm{E}_{G,H_a}.
\end{equation}
Combining (\ref{eq: power proof 1}) and (\ref{eq: power proof 2}), as $m \rightarrow \infty$,
\begin{equation*}
    \begin{aligned}
    & \underset{1\leq k \leq Nm}{\sup}w^2(m,k)S_{m,k}(\hat{\bm{\eta}})^T\bm{A}S_{m,k}(\hat{\bm{\eta}}) \\
    \geq & \frac{1}{m}\rho^2(x_0) (\bm{S}_{H_0}+\bm{S}_{H_a})^T \bm{A} (\bm{S}_{H_0}+\bm{S}_{H_a})\\
    =&m \rho^2(x_0) (\frac{1}{m}\bm{S}_{H_0}+\frac{1}{m}\bm{S}_{H_a})^T \bm{A} (\frac{1}{m}\bm{S}_{H_0}+\frac{1}{m}\bm{S}_{H_a}) \\
     = & m\rho^2(x_0)(x_0-\nu)^2 (\bm{E}_{G,H_a}+o_p(1))^T \bm{A} (\bm{E}_{G,H_a}+o_p(1)) \\
     \rightarrow & \infty.
    \end{aligned}
\end{equation*}
Therefore, with any fixed threshold $c$,
\begin{equation*}
    \underset{m \rightarrow \infty}{\lim} P\left(\underset{1\leq k \leq Nm}{\sup}w^2(m,k)S_{m,k}(\hat{\bm{\eta}})^T\bm{A}S_{m,k}(\hat{\bm{\eta}}) \geq c  |H_a  \right) = 1.
\end{equation*}

\bibliography{arxiv} 
\bibliographystyle{ieeetr}
\end{document}